\newcommand{\uba}{Departamento de F\'\i sica, FCEyN, UBA, Pabell\'on 1,
  Ciudad Universitaria, 1428 Buenos Aires, Argentina}
\newcommand{\ifiba}{Instituto de F\'\i sica de Buenos Aires, UBA CONICET,
  Pabell\'on 1, Ciudad Universitaria, 1428 Buenos Aires, Argentina}
\newtheorem{Theorem}{Theorem}
\DeclareMathOperator{\tr}{tr}
\newcommand{\dyad}[1]{\left\lvert{#1}\middle\rangle\!\middle\langle{#1}\right\rvert}
\newcommand{\R}{\mathbb{R}}
\newcommand{\Ttransform}{\xrightarrow{\mathcal{E}}}
\newcommand{\critico}{\mathcal{R}^*\text{-state}}
\newcommand{\criticos}{\mathcal{R}^*\text{-states}}
\newcommand{\Wmin}{\mathcal{W}_{\rm form}^{\rm{}}}
\newcommand{\Wirr}{\mathcal{W}_{\rm irr}^{\rm{}}}
\newcommand{\Wext}{\mathcal{W}_{\rm ext}^{\rm{}}}
\newcommand{\Wform}{W_{\rm form}}
\newcommand{\WE}{W_{\rm ext}} 
\newcommand{\optimal}{\rho^{(N)}_{\text{min}}}
\newcommand\mean[1]{\mathinner{\langle{#1}\rangle}}
\newcommand\cwork{$c$-work of formation }
\newcommand\cworkns{$c$-work of formation}
\newcommand{\deit}[1]{{#1}}
\newcommand{\nsm}[1]{{#1}}
\newcommand{\kBT}{k_{\nsm{B}}T}
\newcommand{\tauS}{\tau_{\nsm{S}}}
\newcommand{\ZS}{Z_{\nsm{S}}}
\newcommand{\ZSN}{Z_{\nsm{S}}^{N}}
\newcommand{\HS}{H_{\nsm{S}}}
\newcommand{\system}{\nsm{S}}
\newcommand{\gS}{g_{\nsm{S}}}
\newcommand{\ES}{\mathcal{E}_{\nsm{S}}}
\newcommand{\eS}{E_{\nsm{S}}}
\newcommand{\DF}{\Delta F}
\newcommand{\norm}[1]{\left\lVert#1\right\rVert}
\DeclarePairedDelimiter\floor{\lfloor}{\rfloor}
\begin{document}

\title{Correlations as a resource in quantum thermodynamics}

\author{Facundo Sapienza} \affiliation{\uba}
\author{Federico Cerisola} \email{cerisola@df.uba.ar} \affiliation{\uba} \affiliation{\ifiba}
\author{Augusto J. Roncaglia} \email{augusto@df.uba.ar} \affiliation{\uba} \affiliation{\ifiba}



\begin{abstract}

The presence of correlations in physical systems can be a valuable resource for
many quantum information tasks. They are also relevant in thermodynamic transformations, 
and their creation is usually associated to some energetic cost. 
In this work, we study the role of correlations in the thermodynamic process of state formation in the single-shot regime, 
and find that correlations can also be viewed as a resource.
First, we show that the energetic cost of creating multiple copies of a given state can be reduced by allowing correlations in the final state. We obtain the minimum cost for every finite number of subsystems, and then we show that this feature is not restricted to the case of copies. More generally, we demonstrate that in the asymptotic limit, by allowing a logarithmic amount of correlations, we can recover standard results where the free energy quantifies this minimum cost.
\end{abstract}


\maketitle


\section{Introduction}


Quantum thermodynamics is a growing field aiming to extend 
thermodynamics to the limit of few number of systems in the quantum domain \cite{goold2016role,Vinjanampathy2016}.
This quest has been motivated by the theoretical interest in understanding the
fundamental limitations of thermodynamic transformations, and from a
practical point of view it has been driven by the current technologies
that allow to reach an incredible level of control of individual quantum systems.
Among the different approaches that have been put forward to analyze
 thermodynamics in this regime, a recent perspective to study nonequilibrium transformations of small number of systems
 in contact with a thermal bath, the so-called \deit{resource theory of
thermodynamics} \cite{janzing2000thermodynamic,brandao2013resource,Horodecki2013,Aberg2013}, 
has gained a lot of interest \cite{Lostaglio2015,egloff2015measure,halpern2015introducing,YungerHalpern2016,Richens2016,cwiklinski2015limitations,Gemmer2015,kwon2018clock,Lostaglio2015a,Lostaglio2017,lostaglio2018elementary,Masanes2017,mueller2017correlating,narasimhachar2015low,ng2015limits,Sparaciari2017a,VanderMeer2017,Sparaciari2017}. This framework captures the fundamental concepts of
thermodynamics with an operational approach to physics~\cite{Coecke2016}: by defining a set of operations an agent is
allowed to perform on a physical system, it characterizes the set of attainable transformations.
 The resource-theoretic approach to thermodynamics, while consistent with classical thermodynamics, has interesting properties
that depart significantly from the standard framework. 
In fact, in the single-shot regime thermodynamic transformations must satisfy a family of constraints
\cite{Brandao2015}, including the standard second law as a particular case. 
Furthermore, it naturally leads to a fundamental notion of \deit{irreversibility}, since in general
the amount of deterministic work required to perform a given transformation is greater than the work that can be drawn 
from the reverse process \cite{Horodecki2013,Aberg2013}. 

One of the main challenges in this field is to elucidate the role of properties such as quantum coherences and correlations in thermodynamic transformations.
Several works address the influence of coherence \cite{Lostaglio2015a,lostaglio2015quantum,cwiklinski2015limitations,korzekwa2016extraction,kwon2018clock} 
and correlations~\cite{jennings2010entanglement,del2011thermodynamic,bera2017generalized,friis2016energetics,perarnau2015extractable,Lostaglio2015,huber2015thermodynamic,bruschi2015thermodynamics,mueller2017correlating,francica2017daemonic,vitagliano2018trade,PhysRevLett.121.120602} in thermodynamic transformations in different scenarios. In general, the creation of correlations is associated to some energetic cost
and strategies to optimally extract work from them have been put forward \cite{vitagliano2018trade,bruschi2015thermodynamics,huber2015thermodynamic,PhysRevLett.121.120602,perarnau2015extractable,francica2017daemonic}. 
On the other hand, it has been demonstrated that in the single-shot regime, by allowing auxiliary correlated catalytic systems~\cite{Lostaglio2015}
or correlations with catalytic systems~\cite{mueller2017correlating}, it is possible to enlarge the set of achievable transformations.

\begin{figure}[b]
  \centering
  \includegraphics[width=0.9\linewidth]{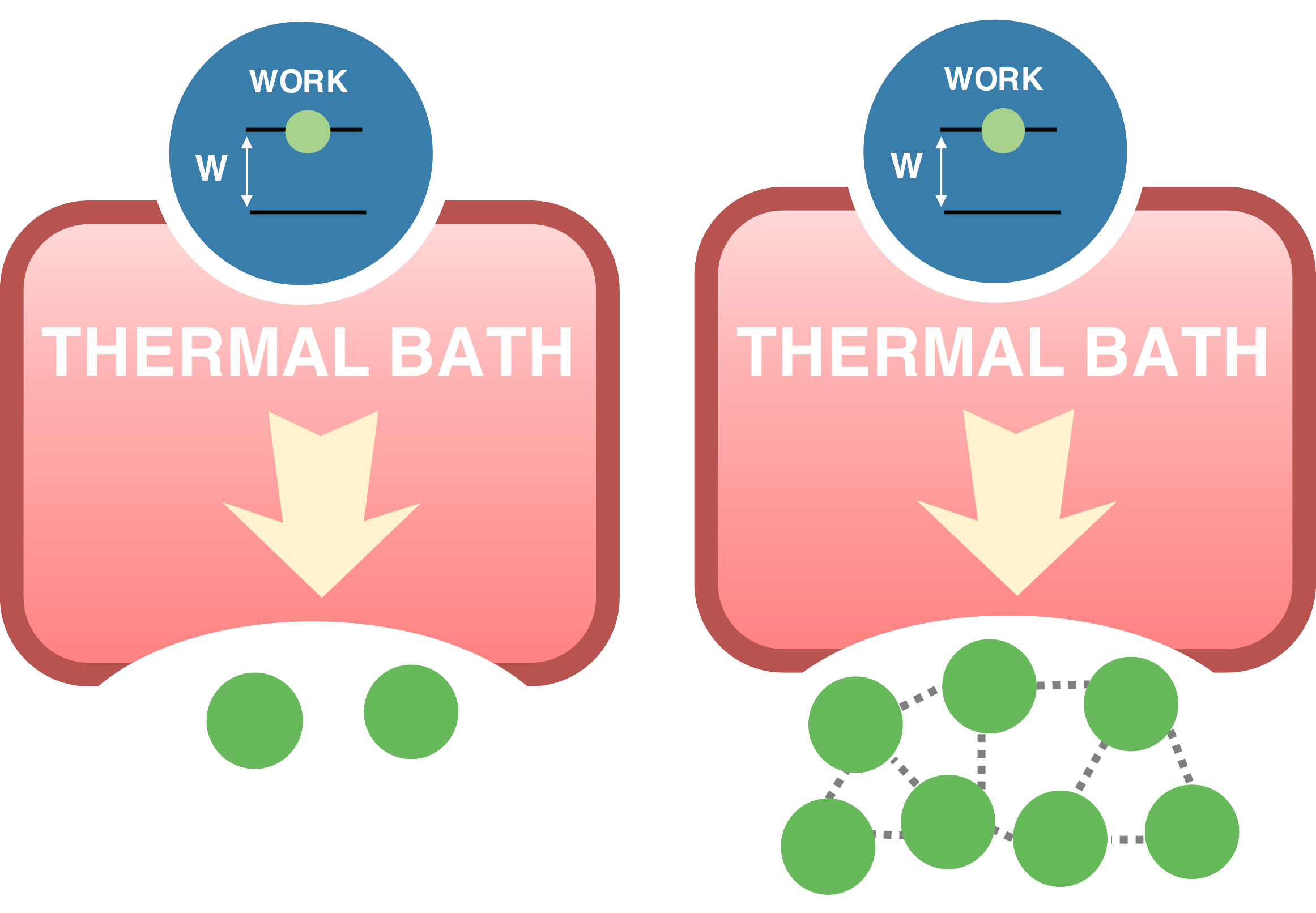}
  \caption{ \label{fig:introduccion} Correlations as a resource. In the single-shot regime, the creation of $N$
    correlated copies of a given state has a smaller work cost than creating $N$ independent copies. 
   In the asymptotic limit, this energetic cost per copy converges to the standard nonequilibrium free energy difference.}
\end{figure}

In this article, we study how inner correlations can affect certain fundamental processes that take place in contact with a thermal reservoir. 
In particular, we consider the processes of state formation and work extraction in the single-shot regime. That is, a Gibbs state is transformed into 
some out-of-equilibrium state using deterministic work, and deterministic 
work is drawn from an inverse transformation. 
We start our analysis by concentrating on the work of formation of a finite set of locally identical quantum systems.
We find that by allowing correlations in the final state this energetic cost can be reduced (Fig.~\ref{fig:introduccion}).
This is in strike contrast with standard scenario where arbitrary large fluctuations of work are allowed, and the 
creation of correlations requires some extra energy.
While for uncorrelated copies most of these processes are shown
to be irreversible, here we show that the degree of reversibility in the correlated scenario increases with the number of copies.
Then, we show that in the asymptotic limit the optimal collective process can be accomplished with correlations per particle that are 
vanishing small, and the work of formation per particle equals the free energy difference.
Finally, we generalize these results for an arbitrary set of local systems.


\section{Results}


\noindent\textbf{Overview.}
In standard thermodynamics, the transformations between states that occur in contact with a thermal reservoir
are governed by the Helmholtz free energy
	\begin{equation}	
    F(\rho)=\mean{E(\rho)} - k_\nsm{B} T S(\rho),
	\end{equation}
 where $\mean{E(\rho)}$ is the mean
energy of the system in state $\rho$, $S(\rho)$ is the entropy, $k_\nsm{B}$ is Boltzmann's constant, and $T$ is the temperature of 
the thermal reservoir. Transitions between states are allowed provided that the free energy of the final state is lower than at the beginning. In fact, the difference in free energy is equal to the amount of average work that can be extracted during the process,
and is also equal to the work that should be invested in the reverse process. This formulation was developed
for macroscopic systems where, due to the large number of particles, energy fluctuations become negligible. 
On the other hand, if one wishes to understand thermodynamic transformations of a small number of non-equilibrium systems,
the size of these fluctuations become important as they could be of the order of the value of work.
Recently, an approach that addresses thermodynamic transformations in this regime has been developed, and 
conditions on state transitions have been identified \cite{janzing2000thermodynamic,brandao2013resource,Horodecki2013,Aberg2013}. 
Below, we briefly introduce the formalism to study thermodynamic transformations in the single-shot regime, and review the main results relevant to this work.

At the core of the theory is the identification of a set of allowed operations, which model
the most general transformation in this framework \cite{Horodecki2013,brandao2013resource,Brandao2015}. 
Let us consider a system with Hamiltonian $H_\nsm{S}$ and an arbitrary thermal reservoir $\nsm{R}$ in a Gibbs state 
$\tau_\nsm{R} = e^{-\beta H_\nsm{R}} / \tr[{e^{-\beta H_\nsm{R}}}]$ 
with Hamiltonian $H_\nsm{R}$, and $\beta = 1/k_\nsm{B}T$. System and reservoir are allowed to interact via a unitary
evolution $U$ that preserves the total energy $[U, H_\nsm{S}+H_\nsm{R}]=0$, and then it is possible to perform a 
partial trace over $\nsm{S}$ and $\nsm{R}$. Given the system in an initial state, the allowed transformations 
are called \deit{thermal operations} and they define a set of reachable states.
In contrast to other frameworks, where just the conservation of the mean energy
is imposed \cite{skrzypczyk2014work}, these conditions give a strong conservation
of energy (first law of thermodynamics). Thus, given two states $\rho$ and $\sigma$, we
say that $\sigma$ can be reached from $\rho$, $\rho\rightarrow \sigma$, if there exists a thermal operation that implements such transformation.
A necessary condition for thermodynamic state transitions is 
called \deit{thermo-majorization} \cite{Horodecki2013}, that is 
sufficient for diagonal states, i.e. ${[\rho, H_\nsm{S}] = 0}$, which is also the case we will consider here.
Although the thermal operations appear potentially very
complex, since they allow any energy conserving interaction between state and bath, it
has been shown that they can always be achieved as sequences of elemental operations
that have a simple form and physical interpretation \cite{PhysRevX.8.041049}.

More generally, one can consider transformations that also allow the presence of an additional system that acts as a catalyst of the transformation,
 and is returned in the same state. These type of transformations enlarge the set of reachable states, and the necessary and sufficient conditions for diagonal states can be written as an infinite set of
inequalities~\cite{Brandao2015}. 
In this case, a transformation from an initial state $\rho$ to a final state $\sigma$ 
can be done provided ${F_\alpha(\rho) \geq F_\alpha(\sigma)}$ for all $\alpha \in \mathbb{R}$, where $F_\alpha$ are 
the $\alpha$-free energies defined in terms of R\'enyi divergences $D_\alpha(\rho\|\tau_\nsm{S})$ as
${F_\alpha ( \rho ) =k_\nsm{B}TD_\alpha(\rho \|\tau_\nsm{S})-k_\nsm{B} T \log Z_\nsm{S}}$, with $\tau_\nsm{S}=e^{-\beta H_\nsm{S}}/Z_\nsm{S}$ the thermal state
of the system. Thus, this is the family of inequalities that govern thermodynamic transformations in this regime~\cite{Brandao2015}. The standard second law is contained as a particular case for $\alpha = 1$.
 
In the single-shot scenario, the notion of deterministic work can be considered by introducing an auxiliary two-level 
system $\nsm{W}$ with Hamiltonian $H_\nsm{W}= W\dyad{W}_\nsm{W}$, called work qubit or wit, that acts as 
a battery which can store or inject energy into the system~\cite{Horodecki2013}.
In particular, we will be interested in the energetic cost of obtaining a state $\rho$ out from a thermal state.
This work cost can be evaluated by studying the following transformation:
\begin{equation}
  \tau_\nsm{S} \otimes \dyad{W}_\nsm{W} \rightarrow \rho \otimes \dyad{0}_\nsm{W}.
\end{equation}
The smallest possible value of such $\nsm{W}$ is defined as the work of formation~\cite{Horodecki2013}, 
and gives the minimum amount of deterministic work required in the transformation.
For diagonal states it is given by
\begin{equation}
   \Wform(\rho) = k_\nsm{B} T D_\infty(\rho \| \tau_\nsm{S}),
 \end{equation}
which is also equal to 
$k_\nsm{B} T \log \max_{E,g} \left \{ \lambda_{E,g}\, e^{\beta E} Z_\nsm{S}\right \}$,
where $\lambda_{E,g}$ are the eigenvalues of $\rho$, $g_\nsm{S}(E)$ the degeneracy, and $g=1,..,g_\nsm{S}(E)$.
Notice that the work of formation is in general greater than the free energy difference.
Similarly, one can define the extractable work as the maximum work that
can be stored in the work qubit starting from a state $\rho$, and its expression for diagonal states
is given by~\cite{Horodecki2013}:
\begin{equation}
  \WE(\rho) = k_\nsm{B} T D_0(\rho \| \tau_\nsm{S}).
\end{equation}
The addition of a catalyst to the process of state creation or work extraction does not modify these values of work.
Finally, let us mention an important feature of this theory that is also relevant to our work:
in general the extractable work is smaller than
the work of formation, thus there is an inherent irreversibility in thermodynamic 
transformations in this regime~\cite{Horodecki2013,Brandao2015}. 
However, when correlated catalysts are allowed, 
the transformations become ruled by just the usual free energy difference \cite{Lostaglio2015,mueller2017correlating}.
\\

\noindent\textbf{Work of formation of correlated copies.}
Let us consider a situation where a finite set of particles is prepared in the same reduced diagonal state $\rho$.
This could be for instance the first step of a given task. 
What is the minimum work cost of producing such $N$-partite ensemble if one is able to interact with a thermal reservoir? 
There are many multipartite states compatible with this situation, 
since it is only defined by some reduced state and number of particles, but these states have a different work cost. 
If we allow arbitrary large fluctuations of work \cite{Richens2016}, creating a correlated state $\rho^{(N)}$ out of a thermal one is useless. 
The average work cost associated to correlated copies of $N$ systems with 
Hamiltonian $H_\nsm{S}$ is given by the standard non-equilibrium free energy 
difference ${\mean{W} \equiv F(\rho^{(N)})-F(\tau_\nsm{S}^{\otimes N})}$ which can be expressed as
\begin{equation}
  \mean{W}= N \Delta F(\rho) + k_\nsm{B} T\;\mathcal I( \rho^{(N)}),
\end{equation}
where $\Delta F(\rho)=F(\rho)-F(\tau_\nsm{S})$, and 
\begin{equation}
\mathcal I( \rho^{(N)})\equiv D_1 \left( \rho^{(N)} \middle\| \rho^{\otimes N} \right)
\end{equation}
is a measure of the total correlations \cite{watanabe1960information}, with $D_1 (\, \cdot \, \| \, \cdot \,)$ the relative entropy.
This average work cost has two components: the energy required to obtain $N$ uncorrelated copies $N \Delta F(\rho)$ and the energy associated
to the correlations which is also positive. Therefore the above expression tells us that correlations are costly, 
if unbounded fluctuations of work are allowed, the work cost of this task cannot be reduced by creating correlations between subsystems. 
In what follows, we will show that in the single-shot scenario a collective action provides an advantage, and
in fact the minimum work cost of this task is achieved with correlated copies.

Let us consider $N$ identical $D$-dimensional quantum systems $\nsm{S}$ with Hamiltonian $H_\nsm{S}$.
Given a reduced state $\rho=\sum_{d=1}^{D} p_d\, \dyad{E_d}$, we are interested in studying the following transformation: 
\begin{equation}
  \tau_\nsm{S}^{\otimes N} \otimes \dyad{W} \rightarrow \rho^{(N)} \otimes \dyad{0},
\end{equation}
where with an amount of deterministic work $\nsm{W}$ a multipartite state $\rho^{(N)}$ is created, subject to the
local condition
\begin{equation}
  \label{eq:condition_partial_trace}
  \tr_{-i}(\, \rho^{(N)}) = \rho \quad \forall\, i = 1, 2, \ldots N,
\end{equation}
with $\tr_{-i}(\cdot)$ the partial trace over all the systems except
the $i$-th subsystem. Notice that we are considering \deit{exact transformations} for every $N$
and, as said before, we are not allowing fluctuations in the values of work~\cite{Richens2016}.
Let us call $\mathcal{C}(\rho, N)$ the set of all the diagonal states which
satisfy the partial trace condition of Eq.~\eqref{eq:condition_partial_trace}.
We can now define the \cwork $\Wmin(\rho, N)$ as
the minimum work cost of this transformation over all the states in $\mathcal{C}(\rho, N)$:
\begin{equation}
  \Wmin(\rho , N) = \min_{\rho^{(N)} \in\, \mathcal{C}(\rho , N)} \Wform(\,\rho^{(N)}) .
  \label{eq:wmin}
\end{equation}
In what follows we will show that it is possible to find this minimum work cost and characterize a set
of states that achieve this bound.

In order to carry out the minimization, first notice that the \cwork is always minimized by a state
$\optimal$ that is maximally mixed in each populated subspace of energy (see Supplementary Note 1 for details). 
Thus, one can reduce the set $\mathcal{C}(\rho, N)$, where the minimization is done, to a smaller subset of states.
These states are such that $\lambda_{E,g} = p_E / g_N(E)\equiv \lambda_E$ , where $p_E$ is the occupation of the subspace of energy $E \in \mathcal{E}_N$,
$\mathcal{E}_N$ is the spectrum of the $N$-partite system
and $g_N (\cdot)$ is the degeneracy. 
Therefore, each element of the subset is determined by just specifying the corresponding distribution $\lambda_E$.
Notably, the minimization in Eq.~\eqref{eq:wmin} can be written as an
optimization problem subject to linear constraints:
\begin{eqnarray} \label{eq:LP}
  \min_{\{\lambda_E\}_{E \in \mathcal{E}_N}} k_\nsm{B}T
    \log \left[ \max_E \right.  \left. \left\{ \lambda_E e^{\beta E}  Z_\nsm{S}^N  \right\} \right]  \\
 \text{s.t.}  \quad  \sum_{E \in \mathcal{E}_N} g_{N-1}(E-E_d)\lambda_E = p_d \quad \forall \,d =& 1,\ldots,D \nonumber\\
  \lambda_E \geq 0  \quad \forall\, E \in \mathcal{E}_N. \nonumber
  \end{eqnarray}
Moreover, this system of equations can easily be transformed into a linear
optimization problem \cite{luenberger1984linear}.
Both constraints define a bounded and non-empty convex set, and therefore there
exists at least one optimal feasible solution. Since the optimization problem is linear
there is an efficient algorithm, known as the \deit{simplex algorithm}, that allows to solve the problem numerically.
Furthermore,  we will show how to fully characterize $\Wmin(\rho , N)$ and the energy distribution $\lambda_E$ that solves the minimization problem
for every local state $\rho$ and number of copies $N$.
\\


\begin{figure*}[t]
  \centering
  \includegraphics{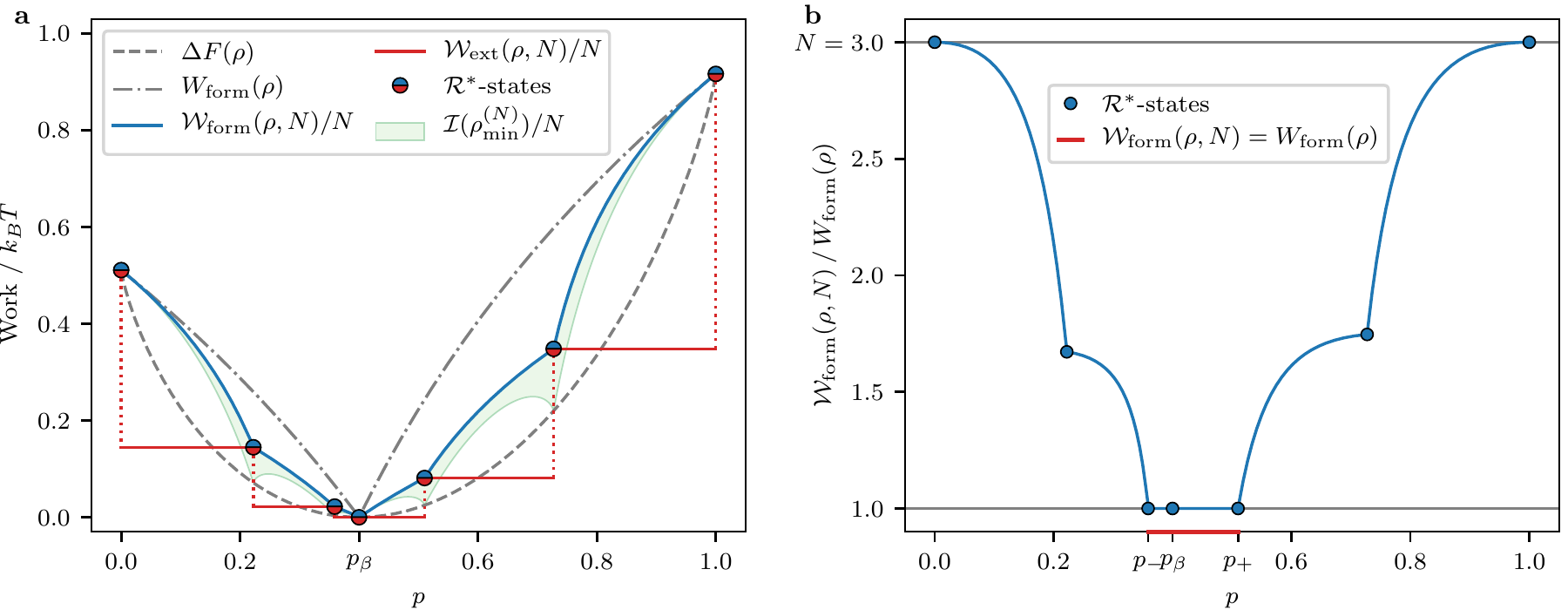}
  \caption{ \label{fig:trabajos}
    Minimum work of formation of three correlated copies for the different local qubit states.
    (a) Different measures of work as a function of the local qubit state, ${\rho = (1-p)\dyad{0} + p \dyad{1}}$, parametrized by the excited state probability $p$.
    For one copy, the work of formation $\Wform(\rho)$ (gray dash-dotted lines) is bigger
    than the standard free energy difference $\Delta F$ (gray dashed lines).
    On the other hand, the minimum work of formation of correlated copies (\cworkns) per copy, $\Wmin(\rho, N)/N$ (blue solid line),
    is smaller than or equal to $\Wform(\rho)$ but still greater than $\Delta F (\rho)$.
    $\Wext(\rho,N)/N$ (red solid line) is the extractable work per copy of the optimal correlated state $\optimal$.
    While in general all the correlated states are irreversible, $\Wext(\rho,N) < \Wmin(\rho, N)$, there are states (dots)
    whose $\optimal$ satisfies $\Wext(\rho,N) =\Wmin(\rho, N)$, this is the set
    of $\criticos$ (reversible optimal states). The green region represents the total correlations per copy $\mathcal I(\optimal)/N$ present in each multipartite state. 
    (b) Ratio between the \cwork and the work of formation of a single copy. Creating the correlated state $\optimal$ costs less work than $N$ uncorrelated copies.
    There are some extreme cases (red) around the thermal state $p_\beta$ where the work of formation of a single copy equals the work of formation of $N$ correlated copies.}
\end{figure*}

For simplicity, we will first present our results for the particular case where each subsystem has dimension $D=2$. 
However, our findings can be generalized to subsystems of arbitrary dimension, although the mathematical treatment is more involved.
Without loss of generality we will consider ${H_\nsm{S} = E_0 \dyad{1}}$ as the Hamiltonian of each subsystem, and a general diagonal local state ${\rho = (1-p)\dyad{0} + p \dyad{1}}$.
In this way, the local thermal Gibbs state is defined as the state with $p=p_\beta$, where $p_\beta = (1+e^{\beta E_0})^{-1}$, and
partition function ${Z_\nsm{S}=\tr(e^{-\beta H_\nsm{S}})}$.
Our first result shows the analytical solution to the optimization problem of Eq.~\eqref{eq:LP}.

\begin{Theorem} \label{theo:minimal}
  Given an integer $N$ and a state $\rho$ which satisfies $[\rho, H_\nsm{S}]=0$,
  there exists a subset of energies ${\mathcal{E}_N^{\rho} \subseteq\mathcal{E}_N}$, 
  a constant ${s}\in(0,1]$, and at most a single energy $\varepsilon \in \mathcal{E}_N$ such that 
  the state $\optimal$ is defined by the distribution:
  \begin{equation}
    \label{eq:solutionEnergy}
    \lambda_{E} = \left \{ \begin{array}{l l}
     \; \;\;\frac{e^{-\beta E}}{\gamma} & \text{if } E \in \mathcal{E}_N^{\rho} \vspace{0.4em} \\
      s \;\frac{e^{-\beta\varepsilon}}{\gamma}\, & \text{if } E = \varepsilon \vspace{0.4em} \\
      0 & \text{otherwise } \\
    \end{array} \right. ,
  \end{equation}
  with $\gamma$ a normalization constant. Furthermore, the work of formation and the extractable work of the optimal state $\optimal$ are given by
\begin{align}
  \Wmin(\rho, N) &= k_\nsm{B} T\, \log\left[ \frac{Z_\nsm{S}^N}{\gamma}\right], \label{eq:Wopt}\\
  \Wext(\rho,N) &= k_\nsm{B} T\, \log\left[ \frac{Z_\nsm{S}^N}{Z}\right], \label{eq:WoptE}
\end{align}
respectively, where $Z$ is the partition function of a 
system in a thermal state at temperature $T$ with spectrum given by the set ${\mathcal{E}_N^{\rho}\cup \{\varepsilon}\}$,
and $\gamma=Z- (1- s) \, g_N(\varepsilon) e^{-\beta \varepsilon}$. 
\end{Theorem}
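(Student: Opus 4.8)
The plan is to convert the constrained minimisation of $\Wform(\rho^{(N)})$ into a transparent fractional‑knapsack (``water‑filling'') problem and then read off all three claims.

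\emph{Reduction to level occupations.} Having restricted, as justified before Eq.~\eqref{eq:LP}, to states maximally mixed in each energy subspace, the work of formation is $\Wform(\rho^{(N)})=\kBT\log\!\big(\mu\,\ZSN\big)$ with $\mu=\max_{E\in\mathcal{E}_N}\lambda_E e^{\beta E}$, so it suffices to minimise $\mu$. I would pass to the occupations $q_k=g_N(kE_0)\,\lambda_{kE_0}$ and set $w_k=g_N(kE_0)\,e^{-\beta kE_0}$, so the thermal occupations are $q_k^{\rm th}=w_k/\ZSN$ and $\mu=\max_k q_k/w_k$. For a qubit the two marginal constraints of Eq.~\eqref{eq:LP} combine, via $g_{N-1}(kE_0)+g_{N-1}((k-1)E_0)=g_N(kE_0)$ and $g_{N-1}((k-1)E_0)/g_N(kE_0)=k/N$, into normalisation $\sum_k q_k=1$ together with the single ``excitation'' constraint $\sum_k k\,q_k=pN$. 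The problem becomes: find the smallest $\mu$ for which there exist $0\le q_k\le\mu w_k$ with $\sum_k q_k=1$ and $\sum_k k\,q_k=pN$.

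\emph{Water‑filling.} For fixed $\mu$, among all $q$ with $0\le q_k\le\mu w_k$ and $\sum_k q_k=1$, the value $v(q)=\sum_k k\,q_k$ ranges over an interval $[v_{\min}(\mu),v_{\max}(\mu)]$: $v_{\max}(\mu)$ is attained by saturating the caps $q_k=\mu w_k$ from the highest energies downward until the mass $1$ is exhausted (last occupied level possibly fractional), and $v_{\min}(\mu)$ symmetrically from the lowest energies up. A one‑line exchange argument — transferring mass between two levels both strictly interior changes $v$ at fixed total mass — shows any $q$ realising an endpoint has \emph{at most one} fractional level, with the saturated levels forming a suffix (resp.\ prefix) in $k$. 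Since both caps grow with $\mu$, the interval widens monotonically, degenerating at $\mu=1/\ZSN$ to the single point $\{N p_\beta\}$ (the occupations $q^{\rm th}$ are the only feasible point there, and $\sum_k k\,q_k^{\rm th}=Np_\beta$). Hence the optimal $\mu^\star$ is the unique value at which $pN$ hits the relevant endpoint: $v_{\max}(\mu^\star)=pN$ if $p\ge p_\beta$ and $v_{\min}(\mu^\star)=pN$ if $p\le p_\beta$ (the boundary case $p=p_\beta$ giving $\mu^\star=1/\ZSN$, no $\varepsilon$).

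\emph{Reading off the theorem.} At $\mu^\star$ the extremal $q$ has exactly the form of Eq.~\eqref{eq:solutionEnergy}: on the saturated set $\mathcal{E}_N^{\rho}$ one has $q_k=\mu^\star w_k$, i.e.\ $\lambda_E=\mu^\star e^{-\beta E}\equiv e^{-\beta E}/\gamma$ with $\gamma=1/\mu^\star$; on at most one level $\varepsilon$ one has $q_\varepsilon=s\,\mu^\star w_\varepsilon$ with $s\in(0,1]$; and $\lambda_E=0$ elsewhere. Then $\Wmin(\rho,N)=\kBT\log(\mu^\star\ZSN)=\kBT\log(\ZSN/\gamma)$, which is Eq.~\eqref{eq:Wopt}. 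For the extractable work I use $\Wext(\rho,N)=\kBT D_0(\optimal\|\tau_\nsm{S}^{\otimes N})=-\kBT\log\!\big(\sum_{E\in\,{\rm supp}\,\optimal} g_N(E)e^{-\beta E}/\ZSN\big)$; since ${\rm supp}\,\optimal=\mathcal{E}_N^{\rho}\cup\{\varepsilon\}$ this is $\kBT\log(\ZSN/Z)$ with $Z=\sum_{E\in\mathcal{E}_N^{\rho}\cup\{\varepsilon\}}g_N(E)e^{-\beta E}$, which is Eq.~\eqref{eq:WoptE}. Finally normalisation $\sum_E g_N(E)\lambda_E=1$ gives $\gamma=\sum_{E\in\mathcal{E}_N^{\rho}}g_N(E)e^{-\beta E}+s\,g_N(\varepsilon)e^{-\beta\varepsilon}=Z-(1-s)g_N(\varepsilon)e^{-\beta\varepsilon}$. \textbf{Main obstacle:} the genuinely delicate point is the second step — establishing rigorously that the greedy water‑filling configuration is optimal and leaves at most one level fractional, together with the monotonicity of $v_{\min},v_{\max}$ in $\mu$ that pins down $\mu^\star$; everything else is bookkeeping.
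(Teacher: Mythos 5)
Your proposal is correct, but it reaches Theorem~1 by a genuinely different route than the paper. The paper (Supplementary Note~2, Theorem~SM~1) works directly on the optimal point of the min--$\|\cdot\|_\infty$ linear program: assuming two components of an optimizer are strictly between $0$ and the maximum, it perturbs them using the invertibility of the $2\times 2$ matrix of constraint coefficients to lower the infinity norm, a contradiction; the consecutiveness of the saturated set $U$ and the adjacency of $m^*$ are then argued separately (and somewhat informally), and the formulas for $\Wmin$ and $\Wext$ follow in the Corollary by undoing the change of variables $q_E=\lambda_E e^{\beta E}\ZSN$. You instead recast the problem parametrically: minimize the cap $\mu$ for which the polytope $\{0\le q_k\le\mu w_k,\ \sum_k q_k=1,\ \sum_k k q_k=pN\}$ is nonempty, and characterize the extremal excitation values $v_{\min}(\mu),v_{\max}(\mu)$ by a fractional-knapsack/exchange argument; nestedness of the capped polytopes gives monotonicity, continuity pins $\mu^\star$ at the endpoint equality, and the endpoint-achieving configuration is automatically thermal on a prefix/suffix of energies with at most one fractional level, with $\gamma=1/\mu^\star$. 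The steps you flag as the ``main obstacle'' (the exchange argument, monotonicity, continuity of the endpoints) are indeed the easy part and your sketches of them are adequate, so there is no gap; the reduction to the two constraints $\sum_k q_k=1$, $\sum_k kq_k=pN$, and the evaluation of $\Wmin=\kBT\log(\ZSN/\gamma)$ and $\Wext=\kBT D_0$ over the support $\mathcal{E}_N^{\rho}\cup\{\varepsilon\}$ coincide with the paper's. What each buys: your water-filling picture delivers the interval structure of the support and the identification of $\gamma$ more transparently than the paper's perturbation argument (whose consecutiveness step is the least rigorous part of the published proof), while the paper's perturbation argument is the one that generalizes to arbitrary local dimension $D$ (Theorem~SM~3), where only ``at most $D-1$ exceptional levels'' survives and your single scalar excitation constraint, hence the greedy ordering, is no longer available.
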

  \begin{proof}
  See Supplementary Note 2.
  \end{proof}

The structure of the optimal states is simple: except for the occupation of a single level with energy $\varepsilon$,
$\optimal$ is a Gibbs thermal state over a reduced support of energies $\mathcal{E}_N^{\rho}$ 
which depends upon the local state $\rho$ and the number of copies $N$. 
The optimal states do have correlations that are the result of removing the population of some energy levels from the thermal state. 
Notice that a typical approximation that is usually done when one deals with large number of identical systems 
is similar to what is obtained in Eq.~\eqref{eq:solutionEnergy}, i.e., discard 
tails of the energy distribution \cite{brandao2013resource}.

In this way, Eq.~\eqref{eq:Wopt} gives the optimal work cost for creating a set of $N$ particles in a given local state.
Example calculations of the \cwork per copy $\Wmin/N$, the work of formation of a single copy
$\Wform$, and the amount of correlations in the optimal state for $N=3$ are shown in Fig.~\ref{fig:trabajos}-a). The \cwork per copy lies below the work of formation of a single copy. The 
difference between these two curves is precisely the energy per copy that is saved in the process of formation due to the collective action.  Fig.~\ref{fig:trabajos}-b) further stresses the difference between the \cwork per copy and the work of formation of a single copy. 
In fact, one can notice that there exist extreme cases, near the thermal state, where this ratio is minimal: $\Wmin(\rho, N) = \Wform(\rho)$.
These states are such that the work of formation of a single copy is equal to the amount of work one should invest to obtain $N$ correlated copies, but on the other hand one 
cannot extract any deterministic work from them (for a more detailed explanation of these states, see Supplementary Note 4). 

We have seen that the work of formation can be reduced if one acts collectively and creates correlations in the final state. This property appears 
when work is not allowed to fluctuate and thus the work of formation is greater than the free energy difference. 
However, not always the presence of correlations will help in the process. 
Notably, there is an upper bound on the amount of correlations that can be built up while reducing the work of formation:
\begin{equation}
\frac{1}{N}\mathcal I(\rho^{(N)}) \leq \beta\, \delta Q,
\label{eq:Imax}
\end{equation}
where $\delta Q \equiv \Wform(\rho)-\Delta F(\rho)$ is called the dissipated work, the difference 
between the deterministic work of formation of a single copy with the free energy difference 
(see Supplementary Note 3 for a proof of the bound).
Thus, correlations greater than $N\beta \delta Q$ are costly, since collective operations cannot outperform the single copy creation.
\newline

\noindent\textbf{Reversibility.}
A key result in the single-shot regime is the appearance of an intrinsic
irreversibility: the extractable work is in general smaller than the work of formation. Thus, in general,
one cannot extract the same amount of energy invested in the process of creation. 
However, it can be easily seen that there are families of states whose work of
formation and extractable work coincide, and in this sense these states are reversible.
Theorem 1 shows that, in fact, reversibility appears naturally in our framework.
The states we define are such that in general the \cwork is greater than the extractable work,
and the difference between these values is the irreversible work:
	\begin{equation}
\Wirr(\rho, N)=k_\nsm{B} T \log\left[{Z}/{\gamma}\right]. 
	\end{equation}
For $ s\approx 1$ the irreversible work is ${\Wirr(\rho, N)\approx(1- s) g_N(\varepsilon) e^{-\beta \varepsilon}/Z}$.
Remarkably, there is a subset of reduced states $\rho_k^*$ 
for which their corresponding $\optimal$ is a thermal state over the reduced support $\mathcal{E}_N^{\rho} \cup \{\varepsilon\}$ (i.e. $s = 1$). These states
are such that their work of formation is equal to the extractable work, and in this sense they are strictly \deit{reversible}, i.e. $\Wirr(\rho_k^*, N)=0$. On the other hand, the irreversibility increases as $ s\rightarrow 0$.

We call the set of local states whose $\optimal$ are reversible $\criticos$.
This set is composed by the states that match the break points in the curves of Fig.~\ref{fig:trabajos}.
 In Supplementary Note~4 it is shown that there are $2N+1$ of such states $\rho^*_k = (1-p^*_k)\dyad{0} + p_k^* \dyad{1}$ with $k=1,2,\ldots, 2N+1$ (see Fig.~\ref{fig:trabajos}-a)). 
Furthermore, for these states the work (either of formation or extractable) can be expressed as:
\begin{equation} 
\mathcal W(\rho_k^*, N) = N \Delta F(\rho_k^*) + k_\nsm{B} T \,\mathcal I_k.
\end{equation}
where $\mathcal I_k$ is the amount of total correlations present in the optimal state.
Thus, work is the sum of two contributions:
the classical value of work, given by the free-energy difference, plus the energy 
associated to the creation of correlations, but still ${\Wmin(\rho_k^*, N) \leq N\,\Wform(\rho_k^*)}$. 
This shows that collective operations allow us to perform 
reversible transformations using deterministic work.
Moreover, in the optimal process it is the energy of the correlations that fills the gap between the standard work of formation of independent copies and the $c$-work of formation
(see Fig~\ref{fig:trabajos}-a)).
Furthermore, as we will see below, these states have other interesting properties that will allow us to recover standard results from
thermodynamics in the large $N$ limit.
\\

\begin{figure*}[t]
  \centering
  \includegraphics{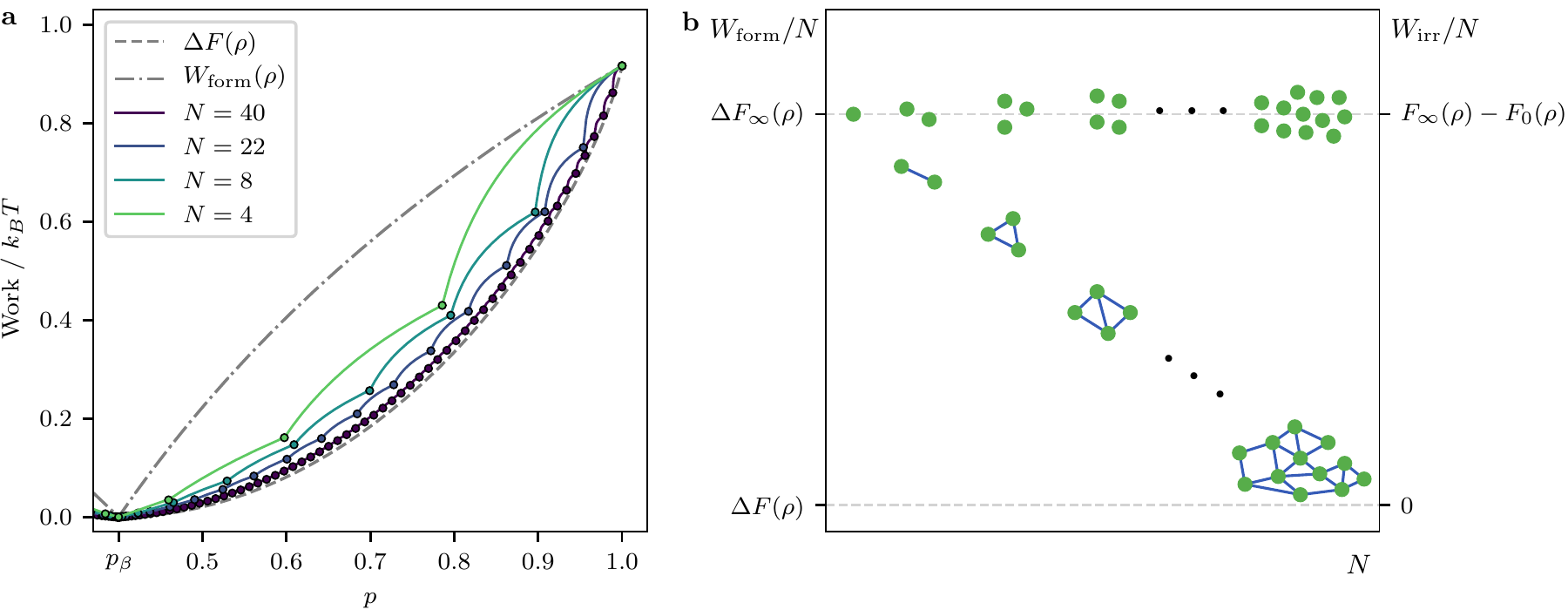}
  \caption{
    Asymptotic behavior of the minimum work of formation of correlated copies as the number of copies increases.
  (a) The minimum work of formation of correlated copies (\cworkns) per copy $\Wmin/N$ is plotted in color for
    different values of $N$ (solid lines) as a function of the local reduced
    state, $\rho = (1-p)\dyad{0} + p\dyad{1}$, parametrized by the excited
    state probability $p$. As $N$ increases $\Wmin/N$ approaches the standard
    free energy difference $\Delta F$ (dashed gray line) for every state and
    the number of reversible states ($\criticos$) (represented by dots) increases linearly with $N$. In addition, for irreversible states the degree of irreversibility per copy, measured by the difference between 
 the work of formation (solid colored line) and the extractable work (nearest point to the left), decreases with $N$.
 (b) Illustration of the results obtained for correlated copies. As $N$ increases the work of formation per copy approaches the free energy difference
and furthermore the irreversible 
 work per copy goes to zero; thus recovering reversibility. } \label{fig:tlimit} 
\end{figure*}


\noindent\textbf{Finite-$N$ behavior and thermodynamic limit.}
We have established that the \cwork represents the minimum amount
of energy that is necessary to produce $N$ correlated copies of a state $\rho$ in a deterministic process. 
The natural question that follows is how these results behave as the number
of copies increases. Fig.~\ref{fig:tlimit}-a) illustrates the behavior of the \cwork per copy for a few values of $N$. 
There, it is shown that the \cwork per copy approaches the free energy difference as $N$ increases.
In addition, the set of $\criticos$ increases linearly with $N$.

Further insight can be gained if one considers the density of reversible states. 
In Supplementary Note 4 we show that for any local state $\rho$ and $\epsilon > 0$, there exists a
number of copies $N = \mathcal{O}(1/\epsilon)$ and an $\critico$ with density matrix $\rho^*(\epsilon)$ that 
satisfies ${\| \rho - \rho^*(\epsilon) \| < \epsilon}$.
This means that the set of $\criticos$ is dense in the space of states with the local constraint Eq~\eqref{eq:condition_partial_trace}.
Moreover, since the irreversible work per copy goes to zero with $N$, for a large number of copies the process of formation is almost reversible.
Thus, in the thermodynamic limit, we recover the standard results from thermodynamics:

\begin{Theorem}
  Let $\rho$ be any diagonal local state of a system $\nsm{S}$. Then
  \begin{equation}
    \frac{\mathcal W(\rho, N)}{N} \xrightarrow{N \rightarrow \infty} \Delta F(\rho)
  \end{equation}
where $\mathcal W$ refers to either the \cwork or the extractable work of the optimal states,
  $\Delta F(\rho)\equiv F(\rho)-F(\tau_\nsm{S})$ is the standard nonequilbirium free energy difference, and the rate of convergence is $\mathcal{O}(\log N / N)$.
\end{Theorem}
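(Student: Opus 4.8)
The plan is to sandwich $\mathcal W(\rho,N)/N$ between $\DF(\rho)$ and $\DF(\rho)+\mathcal O(\log N/N)$, treating the $c$-work of formation and the extractable work of the optimal state $\optimal$ of Theorem~\ref{theo:minimal} separately. \emph{Lower bound.} First I would show $\Wmin(\rho,N)\ge N\,\DF(\rho)$ for every $N$: since the R\'enyi divergences are nondecreasing in $\alpha$, any $\sigma\in\mathcal C(\rho,N)$ satisfies $\Wform(\sigma)=\kBT D_\infty(\sigma\|\tauS^{\otimes N})\ge\kBT D_1(\sigma\|\tauS^{\otimes N})=F(\sigma)-N F(\tauS)$; the total Hamiltonian being a sum of local terms, the marginal condition \eqref{eq:condition_partial_trace} fixes $\mean{E(\sigma)}=N\mean{E(\rho)}$, while subadditivity of the von Neumann entropy with the same condition gives $S(\sigma)\le N S(\rho)$, so $F(\sigma)\ge N F(\rho)$ and $\Wform(\sigma)\ge N\,\DF(\rho)$. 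Minimising over $\mathcal C(\rho,N)$ yields $\Wmin(\rho,N)/N\ge\DF(\rho)$.

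\emph{Upper bound.} By Theorem~\ref{theo:minimal}, $\Wmin(\rho,N)=\kBT\log(\ZSN/\gamma)$, and using $-\kBT\log\ZSN=N F(\tauS)$ this reads $\Wmin(\rho,N)=N\,\DF(\rho)-\bigl(N F(\rho)+\kBT\log\gamma\bigr)$; together with the lower bound it remains to prove $\gamma\ge e^{-\beta N F(\rho)}/\mathrm{poly}(N)$. The normalisation $\gamma$ contains the thermal weight $g_N(E^{\star})e^{-\beta E^{\star}}$ of the level $E^{\star}\in\mathcal E_N$ closest to $N\mean{E(\rho)}$, because (Supplementary Note 2) the reduced support $\mathcal E_N^{\rho}$ of $\optimal$ is an interval of total energies containing that level and excluding a neighbourhood of the thermal typical energy. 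For $D=2$ with $\HS=E_0\dyad{1}$ the degeneracy $g_N$ is a binomial coefficient (a multinomial one for general $D$), so Stirling gives $g_N(E^{\star})\ge e^{N S(\rho)}/\mathrm{poly}(N)$ and $e^{-\beta E^{\star}}=e^{-\beta N\mean{E(\rho)}}e^{\mathcal O(1)}$; since $\beta F(\rho)=\beta\mean{E(\rho)}-S(\rho)$ this is precisely $\gamma\ge e^{-\beta N F(\rho)}/\mathrm{poly}(N)$, hence $\Wmin(\rho,N)\le N\,\DF(\rho)+\mathcal O(\log N)$. (One can avoid invoking the optimal support by exhibiting a concrete feasible $\sigma$ that is maximally mixed per energy shell with shell weights $\propto g_N(E)e^{-\beta E}$ truncated to an interval whose endpoint is tuned so that $\mean{E(\sigma)}=N\mean{E(\rho)}$ exactly, as in \eqref{eq:solutionEnergy}, and bounding its $\Wform$ by the same estimate.)

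\emph{Extractable work and rate.} By Theorem~\ref{theo:minimal}, $\Wext(\rho,N)=\kBT\log(\ZSN/Z)=N\,\DF(\rho)-\bigl(N F(\rho)+\kBT\log Z\bigr)$, so it suffices to show $Z\le\mathrm{poly}(N)\,e^{-\beta N F(\rho)}$. Using the uniform bound $\binom{N}{k}\le e^{N h(k/N)}$ (and its multinomial analogue) and the fact that on the reduced support $\mathcal E_N^{\rho}\cup\{\varepsilon\}$ the exponent $h(k/N)-\beta(k/N)E_0$ stays within $\mathcal O(1/N)$ of its value $-\beta F(\rho)$ at $k/N=\mean{E(\rho)}/E_0$ — it never reaches its global maximum $\log\ZS$, which sits at the thermal occupation — one gets $Z\le(N+1)\,e^{\mathcal O(1)}e^{-\beta N F(\rho)}$ and hence $\Wext(\rho,N)\ge N\,\DF(\rho)-\mathcal O(\log N)$. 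Combined with $\Wext(\rho,N)\le\Wmin(\rho,N)$ (equivalently $\Wirr\ge0$, immediate from $\gamma\le Z$), both $\Wmin(\rho,N)/N$ and $\Wext(\rho,N)/N$ converge to $\DF(\rho)$, the rate $\mathcal O(\log N/N)$ coming from the $\mathcal O(\log N)$ Stirling corrections (worst near pure $\rho$) and the at most $N+1$ summands. Equivalently, the statement can be repackaged through the density of $\criticos$ (Supplementary Note 4): take the $\critico$ $\rho^{*}_N$ with $\norm{\rho-\rho^{*}_N}=\mathcal O(1/N)$, use $\mathcal W(\rho^{*}_N,N)=N\,\DF(\rho^{*}_N)+\kBT\,\mathcal I_N$ with $\mathcal I_N=\mathcal O(\log N)$ (the same Stirling estimate of the reduced partition function, now with $s=1$), and control $|\DF(\rho)-\DF(\rho^{*}_N)|=\mathcal O(\log N/N)$ via the Fannes--Audenaert continuity of the entropy.

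The main obstacle is making the combinatorial Stirling/large-deviation estimates of $\gamma$ and $Z$ uniform in $\rho$ over the whole simplex, especially near the pure states where the degeneracy prefactors degenerate and the entropy-continuity correction contributes its own $\mathcal O(\log N)$; in the $\criticos$ route the extra ingredient is the stability of the linear-program optimum $\Wmin(\cdot,N)$ under an $\mathcal O(1/N)$ perturbation of the marginal constraint. The passage from $D=2$ to general $D$ only replaces binomial by multinomial asymptotics and leaves the structure of the argument unchanged.
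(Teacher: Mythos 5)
Your proposal is correct and rests on the same two pillars as the paper's proof --- the subadditivity/monotonicity lower bound $\Wmin(\rho,N)\ge N\,\DF(\rho)$ and a Stirling estimate of the single thermal weight $g_N(E^\star)e^{-\beta E^\star}$ at the level nearest $N\mean{E(\rho)}$ --- but you package the reduction differently. The paper proves the bound only for the $\criticos$ (where the optimal state is reversible, so $\Wmin=\Wext=\DF(\optimal)$), using that the conditional mean $Np^*$ lies in the interval $U$ to get $\Wmin(\rho^*,N)\le \kBT\log\bigl(1/g_{Np^*}\bigr)$, and then extends to arbitrary $\rho$ by the density of $\criticos$; your ``repackaged'' second route is essentially that argument, except the paper handles the stability of the optimum under perturbation of the marginal through the convex-combination structure of the optimal solution rather than Fannes continuity. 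Your primary route instead treats an arbitrary $\rho$ directly via the support structure of Theorem 1, lower-bounding $\gamma$ and upper-bounding $Z$, which has the advantage of giving an explicit two-sided estimate for $\Wext$ of non-reversible optimal states, something the paper obtains only through the reversible family (and, for general $D$, through auxiliary reversible states). Two steps deserve tightening, though neither is a real gap: (i) the exponent $h(k/N)-\beta (k/N)E_0$ does \emph{not} stay within $\mathcal O(1/N)$ of $-\beta F(\rho)$ over the whole support (it decreases along the tail); what you need, and what is true, is that its maximum over the support is attained at the endpoint $k\approx Np$, which follows because the optimal support is an interval anchored at an extreme of the spectrum with conditional mean $Np$, hence for fixed $\rho\neq\tauS$ it excludes the thermal peak at large $N$; (ii) the claim that the support contains the level nearest $N\mean{E(\rho)}$ should be argued from the mean-energy constraint plus this interval structure (the mean lies in the convex hull of the support), which is precisely the role of $Np^*\in U$ in the paper's proof and of the corresponding lemma in the general-$D$ case.
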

  \begin{proof}
  See Supplementary Note 5.
  \end{proof}

As it is illustrated in Fig.~\ref{fig:tlimit}-b) by increasing the number of correlated copies we approach standard thermodynamics.
Additionally, for large $N$ the total correlations in the optimal state are of order: 
\begin{equation}
  \mathcal I (\optimal) \sim \mathcal{O}(\log N),
\end{equation}
meaning that the amount of correlations per particle $\mathcal I(\optimal)/N$ is negligible in the thermodynamic limit. 
These results establish that asymptotically the work per
copy required to form $N$ correlated states is exactly what we expect when unbounded fluctuations are allowed, with an amount of correlations that increases sublinearly with the number of copies. Previous analysis of the thermodynamic limit 
\cite{brandao2013resource,Chubb2018beyondthermodynamic}, considered interconversion rates of product states 
in the limit of large number of particles using approximate transformations. 
Here, we consider locally exact transformations,  
and find the solution that ranges from small number of systems to arbitrary large ones.
In this way, by obtaining the exact minimum work cost for every $N$ we could evaluate 
the deviation from standard results in every instance. 
More importantly, we have shown that in this approach the creation of correlations 
is the physical mechanism that leads to the emergence of the standard scenario.
\\


\noindent\textbf{Generalizations.}
These results were presented using the simplest example given by local systems of dimension $D=2$. In fact, 
more complex systems can be considered by increasing $D$, and we can show that these ideas also hold for arbitrary local dimension $D$ (see Supplementary Note 6). The main difference with respect to the case $D=2$ is that instead of having a single energy $\varepsilon$ (see Theorem 1),
each optimal state $\optimal$ is obtained by considering a set of energies $\varepsilon_i$ and 
parameters $ s_i \in (0,1]$, with $i=1,...,D-1$. The results concerning the thermodynamic limit have the same form. 

Up to now, we have focused on the situation where all subsystems have the same Hamiltonian and same reduced state.
 A natural extension of our findings is to consider a non-symmetric case, where each subsystem is different. There, one can also show that correlations reduce the work of formation, and that the optimal state has a thermal-like distribution similar to the one in Theorem 1 (see Supplementary Note 7). Furthermore, by allowing correlations it is possible to recover standard results in the thermodynamic limit for a general configuration in our framework, that is for a set of different diagonal states with vector probabilities $p_i$ and Hamiltonians $H_i$ taken from an arbitrary distribution $\mathcal D$.

	\begin{Theorem}
	Let $(p^{(1)}, E^{(1)}), (p^{(2)}, E^{(2)}), \ldots , (p^{(N)}, E^{(N)}) \in {\mathbb R}_{\ge 0}^{2D}$ an i.i.d sample with 
	arbitrary distribution $\mathcal D$ and $\mathcal W_N$ the \cwork of 
	a system with diagonal reduced states $\rho_i$ defined by the probability vector $p^{(i)}$ and Hamiltonian 
	with energies $E^{(i)}$. Then, 
	\begin{equation}
    \frac{\mathcal W_N}{N} \xrightarrow{N \rightarrow \infty} \mean{ \Delta F\, }_\mathcal D,
	\end{equation}
	where the mean in $\Delta F$ is with respect to $\mathcal D$ and the convergence is almost surely.
	\end{Theorem}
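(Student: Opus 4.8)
The plan is to sandwich $\mathcal W_N/N$ between two sequences of i.i.d.\ averages that both converge almost surely to $\mean{\Delta F}_{\mathcal D}$ by the strong law of large numbers (SLLN). I assume the mild regularity that $\Delta F$ is $\mathcal D$-integrable; since $\Delta F$ is controlled by the energies up to a bounded additive term (the entropic part is bounded by $\kBT\log D$ and the $\log Z$ contribution is bounded above), this amounts to requiring a finite first moment of the energies under $\mathcal D$.

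\textbf{Lower bound.} Let $\optimal$ be the optimal state realising $\mathcal W_N$ and $\tau=\tau_1\otimes\cdots\otimes\tau_N$ the product of the local Gibbs states. Using $D_\infty\ge D_1$, additivity of the total Hamiltonian, subadditivity of the von Neumann entropy, and the marginal constraint \eqref{eq:condition_partial_trace},
\begin{align}
  \mathcal W_N &= \kBT\, D_\infty(\optimal\|\tau) \nonumber\\
  &\ge \kBT\, D_1(\optimal\|\tau) = F(\optimal) - \sum_{i=1}^N F(\tau_i) \nonumber\\
  &\ge \sum_{i=1}^N \Delta F(\rho_i).
\end{align}
The $\Delta F(\rho_i)$ are i.i.d.\ functions of the sample, so $\tfrac1N\sum_i\Delta F(\rho_i)\to\mean{\Delta F}_{\mathcal D}$ a.s., hence $\liminf_N\mathcal W_N/N\ge\mean{\Delta F}_{\mathcal D}$ a.s.

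\textbf{Upper bound, finitely supported $\mathcal D$.} If $\mathcal D$ is concentrated on finitely many types $t_1,\dots,t_m$, group the $N$ subsystems by type, with $n_j$ of them of type $t_j$; by the SLLN $n_j/N\to\mathcal D(t_j)>0$ a.s., so $n_j\to\infty$ a.s. On each block take the optimal correlated state supplied by the symmetric analysis (Theorem~1 and its $D$-dimensional generalisation) and let $\rho^{(N)}$ be the tensor product of these blockwise states; it has all the prescribed marginals, and since $D_\infty$ is additive on tensor products,
\begin{equation}
  \mathcal W_N \le \Wform(\rho^{(N)}) = \sum_{j=1}^m \Wmin(t_j,n_j).
\end{equation}
By Theorem~2 (in its $D$-dimensional form) $\Wmin(t_j,n_j)/n_j\to\Delta F(t_j)$, so $\tfrac1N\sum_j\Wmin(t_j,n_j)\to\sum_j\mathcal D(t_j)\Delta F(t_j)=\mean{\Delta F}_{\mathcal D}$ a.s., and therefore $\limsup_N\mathcal W_N/N\le\mean{\Delta F}_{\mathcal D}$ a.s.

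\textbf{General $\mathcal D$ and the main obstacle.} For arbitrary $\mathcal D$ I would discretise: truncate the energies at a level $\Lambda$ (absorbing the large-energy tail into the error via integrability), cut the truncated parameter region into finitely many cells of diameter $<\delta$, and pick a representative type in each. The target is a bound $\mathcal W_N\le\widetilde{\mathcal W}_N+\sum_i\eta_i$, where $\widetilde{\mathcal W}_N$ is the $c$-work of the collection with every subsystem replaced by its cell representative (controlled by the finitely-supported case) and $\eta_i$ is the cost of restoring the true marginal $\rho_i$, bounded in terms of $\delta$ and of $\Lambda$ through a lower bound on the Gibbs weights. Letting $N\to\infty$, then $\delta\to0$, then $\Lambda\to\infty$, and using continuity and $\mathcal D$-integrability of $\Delta F$, one obtains $\limsup_N\mathcal W_N/N\le\mean{\Delta F}_{\mathcal D}$ a.s.; together with the lower bound this finishes the proof, and every limit taken is an SLLN limit, so the convergence is almost sure. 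The hard part is exactly the marginal-restoration estimate: $D_\infty$ is a worst-case quantity and is not a priori stable under perturbing marginals, so one must either pass through a smoothed $D_\infty$ or realise the correction by explicit local stochastic maps and bound the induced change in $D_\infty$ uniformly in $N$ using the $\Lambda$-cutoff. An alternative route that bypasses the discretisation is to repeat the proof of Theorem~2 directly with the random, non-identical data: build the reduced energy support (as in the non-symmetric version of Theorem~1) from the typical set of global energy configurations determined by the realised marginals, and replace the deterministic limit of $\tfrac1N\log(\prod_i Z_i/\gamma)$ by its SLLN limit, which equals $\beta\mean{\Delta F}_{\mathcal D}$; there the difficulty becomes showing that this support can be chosen to meet all $N$ linear marginal constraints at once.
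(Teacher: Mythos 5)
Your lower bound and your finite-support upper bound are correct and essentially coincide with the paper's argument: the paper likewise bounds $\mathcal W_N$ from below via $D_\infty\geq D_1$ plus subadditivity of the von Neumann entropy, and from above by grouping the subsystems into blocks of identical type, tensoring the blockwise optimal states, and invoking the copy result $\Wmin(\rho,n)/n\to\Delta F(\rho)$ (the paper uses Hoeffding plus Borel--Cantelli where you use the SLLN for the type fractions, which is an inessential difference). The problem is that the theorem claims the result for an \emph{arbitrary} distribution $\mathcal D$, and for that case you only offer a plan and, to your credit, you name its missing step yourself: the ``marginal-restoration estimate'', i.e.\ a proof that replacing each true marginal by a nearby cell representative changes the $c$-work of formation only by $\mathcal O(\delta)$ per subsystem. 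That step is exactly where the paper's real work lies, and without it the proof is incomplete.

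Concretely, the paper closes this gap with a continuity lemma for thermo-majorization (Lemma~3 of Supplementary Note~7): if $\|(p,E)-(\tilde p,\tilde E)\|<\delta(w,p,E)$ then a work qubit of size $w$ suffices to implement $\rho(p,E)\otimes\dyad{w}\rightarrow\rho(\tilde p,\tilde E)\otimes\dyad{0}$ by a thermal operation, and the reverse direction holds under the extra condition that the support of $\tilde p$ is contained in that of $p$. Applying these local conversions subsystem by subsystem gives precisely your desired bound $\mathcal W_N(\mathcal D)\leq\mathcal W_N(\tilde{\mathcal D})+\#\mathbb{K}_n\,w$ without ever perturbing $D_\infty$ directly, which is the ``explicit local stochastic maps'' route you mention but do not execute. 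Two further points your sketch glosses over and the paper handles explicitly: (i) the support issue --- the continuity fails in one direction when some $p_i\to 0$, which is why the paper partitions the support of $\mathcal D$ by support pattern and restricts to the compact set $K_n=\{\min_i p_i\geq 1/n\}$, bounding the complement's contribution by $\epsilon_3 N\, W_{\rm form}^{\rm max}$; the continuity constant $\delta$ must also preserve the $\beta$-order, which again requires $\min_i p_i$ bounded below; and (ii) the energy range --- the paper's proof assumes a uniform bound $E_{\rm max}$ on the spectrum (stated in the supplementary version of the theorem), whereas your proposed truncation at a level $\Lambda$ under mere integrability would need an additional argument that the tail subsystems' formation cost is controlled (e.g.\ via $\Wform(\rho_i)\leq E^{(i)}_{\rm max}+\kBT\log D$ and an SLLN on the tail), which you do not supply. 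So: right skeleton, same finite-support core as the paper, but the general-$\mathcal D$ case needs the continuity-of-thermo-majorization lemma (or an equivalent stability statement) together with the support and energy-cutoff bookkeeping to be a proof.
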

	\begin{proof}
	See Supplementary Note 7.
	\end{proof}

In this case the \cwork $\mathcal W_N$ can be thought as a random variable since the state of the $N$ subsystems is chosen randomly following the distribution $\mathcal D$. 
For instance, if $\mathcal D$ has density $f$ taking values in $\Omega$,
 then 
	\begin{equation}
	\mean{ \Delta F }_{\mathcal D} = \int_\Omega \Delta F \left(p,E \right) f(p,E) \,dp \,dE.
	\end{equation}
When dealing with copies the distribution is defined by ${f(p,H)=\delta(p-\tilde p)\, \delta(E-\tilde E)}$, where $\delta(\cdot)$ is the Dirac delta distribution. Thus, our approach can be directly extended to more general settings including the asymmetric case.


\section{Discussion}


We have presented a framework to study how the presence of correlations 
affects thermodynamic processes taking place in the single-shot regime.
By first considering the formation of locally equivalent states,
we show that the creation of correlated systems provides an advantage, since in this case the energetic
cost of the process is lower than in the uncorrelated scenario. This is a feature that appears when fluctuations 
of work are constrained.  Although we  focused most 
of our analysis on the creation correlated of copies, we have shown that the same ideas could be extended to more general scenarios.
Here, we have analyzed the creation of states that are diagonal in the energy eigenbasis. Creation of states with coherence in this scenario is strictly impossible, a source of coherences is required \cite{brandao2013resource,Lostaglio2015a}. 
We think that these ideas could also be extended to this situation, as the amount of coherences can be 
reduced when one acts collectively. If independent copies require an amount of coherences
of order $\mathcal O(N)$, a collective action reduces this requirement to $\mathcal O(\sqrt N)$ \cite{brandao2013resource,Lostaglio2015a}.

The description we provide is compatible with standard results of thermodynamics 
in the large $N$ limit. In fact, we have shown that this mechanism leads to the emergence of reversibility when the minimum work
cost is considered. Unlike previous approaches,
here we consider that the final state is correlated but all transformations and work
extraction are exact and deterministic. Interestingly, we have also shown that an amount of correlations per 
copy that is vanishing small is sufficient to recover standard results in the large $N$ limit. 
Therefore, we can identify a physical mechanism, related with the creation of correlations, that allows 
to continuously approach standard results in the thermodynamic limit. 
Furthermore, we have also shown that classical results can also be recovered in the large $N$ limit with more general settings.
We expect our work sheds light on the role of correlations in thermodynamic transformations of microscopic systems and its
connection with the emergence of standard results.
\\

\noindent
\textbf{Data availability.}
 The data that support the findings of this study are available from the corresponding authors upon request..
\\

\noindent
\textbf{Acknowledgments.}
We thank L. Masanes for useful comments, also P. Groisman and G. Acosta for discussions. 
FC and AJR acknowledge support from CONICET, UBACyT and ANPCyT.
\\

\noindent
\textbf{Author contributions.}
F. S., F. C. and A.J.R. contributed to all aspects of this work.
\\

\noindent
\textbf{Competing interests.}
The authors declare no competing interests.
\\


\def\bibsection{\section*{\refname}}


\clearpage

\onecolumngrid

\appendix

\renewcommand{\appendixname}{Supplementary Note}
\renewcommand{\thesection}{\arabic{section}}
\renewcommand{\thesubsection}{\Alph{subsection}}
\newtheorem{PropositionSM}{Proposition}
\newtheorem{CorollarySM}{Corollary}
\newtheorem{LemmaSM}{Lemma}
\newtheorem{TheoremSM}{Theorem}
\newtheorem{DefinitionSM}{Definition}

\newcommand{\figref}[1]{Fig.~\ref{#1}}



\section{\cwork}

In this work we focus our study on the process of formation of $N$ correlated
copies of a system $\system$ with non-trivial Hamiltonian $\HS$. More specifically, given a
state $\rho$ of the system with $[\rho, \HS]=0$ and a state $\rho^{(N)}$ which
satisfies
\begin{equation}
  \label{eq:condicionTrazaParcial}
  \tr_{-i}\left( \rho^{(N)} \right) = \rho \quad \forall i = 1,2,\ldots, N ,
\end{equation}
where $\tr_{-i} (\cdot)= \tr_{1,2,\ldots, i-1, i+1, \ldots,
N}(\cdot)$ is the partial trace over all subsystems except for the $i$-th one,  
we consider the transformation via thermal operations given by
$  \tauS^{\otimes N}\otimes\dyad{W} \Ttransform \rho^{(N)}\otimes\dyad{0} ,$
where $\tauS$ is the Gibbs state of S. The aim is to find states $\rho^{(N)}$
with minimum work of formation $W$. We can formalize this problem as a
minimization problem over a set of feasible solutions.

\begin{DefinitionSM}[feasible states]
	\label{defi:conjuntoFactible}
  Given a state $\rho$ of the system $\system$ and $N \in \mathbb{N}$, let
  $\mathcal{C}(\rho, N)$ be the set of states of the $N$ copies,
  $\system^{\otimes N}$, given by
  \begin{equation}
    \mathcal{C}(\rho, N) = \left\{ \rho^{(N)} : \tr_{-i}\left(\rho^{(N)}\right) =
    \rho \quad \forall i = 1, 2, \ldots, N \right\}.
  \end{equation}
\end{DefinitionSM}

Except for the product state $\rho^{\otimes N}$, the set $\mathcal{C}(\rho,
N)$ is formed by correlated states of the $N$ subsystems.
The \cwork~is then defined as the minimum amount of energy
that is required to create a state in $\mathcal{C}(\rho, N)$
from the Gibbs state of the total system.

\begin{DefinitionSM}[work of formation of correlated copies]
  Given a state $\rho$ of the system $\system$ and a number of copies $N$, we
  define the work of formation of correlated copies or just \cwork , denoted as $\Wmin(\rho, N)$, by
  \begin{equation}
    \Wmin(\rho, N) = \min_{\rho^{(N)} \in \mathcal{C}(\rho, N)}
    \Wform\left(\rho^{(N)}\right).
  \end{equation}
\end{DefinitionSM}

Let $D$ be the dimension of the system $\system$ and $E_1 \leq E_2 \leq \ldots
\leq E_D$ the eigenvalues of $\HS$. Let's assume that there are at least two different eigenvalues 
(a similar treatment can be considered for trivial Hamiltonians and the solution
is similar to the one that is obtained in the limit $\beta\rightarrow 0$). Then the states $\rho$ and $\rho^{(N)}$ can
be written as
\begin{gather}
  \rho = \sum_{d=1}^D p_d \dyad{E_d} , \\
    \rho^{(N)} = \sum_{\left(d_1,\dots,d_N\right)\in\left\{1,\dots,D\right\}^N} \lambda_{d_1, d_2, \ldots, d_N}
  \dyad{E_{d_1}, E_{d_2}, \ldots, E_{d_N}} ,
  \label{eq:rhoNgeneral}
\end{gather}
where $p_d$ are the eigenvalues of the state $\rho$ and $\lambda_{d_1, d_2,
\ldots, d_N}$ are the occupation probabilities of the state $\rho^{(N)}$ with
the $i$-th subsystem in the energy state $E_{d_i}$. Applying $\tr_{-j}(\cdot)$
in \eqref{eq:rhoNgeneral} we find that the constraint of
\eqref{eq:condicionTrazaParcial}  on the partial traces  will be satisfied if
and only if
\begin{equation}
  \label{eq:condition1_pd}
    p_d = \sum_{\left(d_1,\dots,d_{j-1},d_{j+1},\dots,d_N\right)\in\left\{1,\dots,D\right\}^{N-1}} \lambda_{d_1, \ldots, d_{j-1}, d,
  d_{j+1},  \ldots , d_N}	 \quad \forall\, d,j.
\end{equation}
On the other hand, the Gibbs state of the $N$ copies is given by
\begin{equation}
    \tauS^{\otimes N} = \sum_{\left(d_1,\dots,d_N\right)\in\left\{1,\dots,D\right\}^N} \frac{e^{-\beta \left(
  E_{d_1}+E_{d_2}+\ldots +E_{d_N} \right)}}{\ZSN} \dyad{E_{d_1}, E_{d_2},
  \ldots, E_{d_N}},
\end{equation}
where $\ZS = \sum_{i=1}^D e^{-\beta E_i}$ is the partition function of the
system $\system$ at inverse temperature $\beta = (\kBT)^{-1}$. Then, the work of
formation of $\rho^{(N)}$ is
\begin{equation}
  \label{eq:wFormacionN}
  \Wform(\rho^{(N)}) = \kBT \log \max_{d_1, \ldots , d_N} \Big \{
  \lambda_{d_1, \ldots , d_N}e^{\beta \left( E_{d_1}+\ldots +E_{d_N}
  \right)}\ZSN  \Big \}.
\end{equation}
Let $p_E$ be the occupation of energy level $E$ of the $N$-partite system,
which is given by
\begin{equation}
  p_E = \sum_{d_1, \ldots, d_N \,:\, \sum_{i=1}^N E_{d_i} = E} \lambda_{d_1,
  d_2, \ldots, d_N},
  \label{eq:pEdef}
\end{equation}
and let $g_N(E)$ be its degeneracy (i.e. the number of terms in the sum of
\eqref{eq:pEdef}). Then, it is clear from \eqref{eq:wFormacionN} that in order to minimize
the work of formation, the distribution of $p_E$ in the $g_N(E)$ states should be uniform:
\begin{equation}
  \lambda_{d_1, \ldots, d_N} = \frac{p_{\sum_{i=1}^N E_{d_i}}}
  {g_N\left(\sum_{i=1}^NE_{d_i}\right)} \equiv \lambda_{\sum_{i=1}^N E_{d_i}}.
\end{equation}
Given this symmetry, we introduce the notation $\lambda_E = \lambda_{d_1,
\ldots, d_N}$, with $E = \sum_{i=1}^N E_{d_i}$. Then, we can restrict the search for a state $\rho^{(N)}$ that minimizes $\Wmin$
to states of the following form
\begin{equation}
  \label{eq:minEsEnergia}
  \rho^{(N)} = \sum_E p_E \sum_{\psi \,:\, E(\psi)=E} \frac{1}{g_N(E)} \dyad{\psi}.
\end{equation}
This simplifies the problem of finding the state $\rho^{(N)}$ that minimizes the work
of formation. The state of \eqref{eq:minEsEnergia} is completely characterized by
its energy distribution among the set $\mathcal{E}_N$ of energies of the
$N$-partite system. Now the constraints of \eqref{eq:condition1_pd} on the reduced states
take a simpler form
\begin{align}
    p_d &= \sum_{\left(d_1,\dots,d_{j-1},d_{j+1},\dots,d_N\right)\in\left\{1,\dots,D\right\}^{N-1}} \lambda_{d_1, \ldots, d_{j-1}, d,
  d_{j+1}, \ldots, d_N} \nonumber \\
  &=  \sum_{E \in \mathcal{E}_N}
  g_{N-1}(E-E_d) \lambda_E \quad \forall d = 1,2, \ldots, D,
  \label{eq:conditionSimplificada}
\end{align}
where $\lambda_E$ is any of the eigenvalues of $\rho^{(N)}$
with energy $E$. Notice that each of the terms $g_{N-1}(E-E_d)\lambda_E$
represents the conditional probability that one of the copies have local energy
$E_d$ given that the total energy is $E$. The following proposition summarizes
the previous analysis.

\begin{PropositionSM}
  Given a local state $\rho$ and $N \in \mathbb{N}$, it holds
  \begin{equation}
    \min_{\rho^{(N)} \in \mathcal{C}(\rho, N)} \Wform(\rho^{(N)}) =
    \min_{\rho^{(N)} \in \mathcal{C}^*(\rho, N)} \Wform(\rho^{(N)}),
  \end{equation}
  where $\mathcal{C}^*(\rho, N)$ is the set of states of the form
  given by \eqref{eq:minEsEnergia} which satisfies \eqref{eq:conditionSimplificada}.
\end{PropositionSM}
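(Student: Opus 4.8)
The plan is to prove the two inequalities separately. The inequality $\min_{\mathcal{C}(\rho,N)}\Wform\le\min_{\mathcal{C}^*(\rho,N)}\Wform$ is immediate: if $\rho^{(N)}$ has the form \eqref{eq:minEsEnergia} and obeys \eqref{eq:conditionSimplificada}, then, grouping the energy eigenstates by the local energy of the $j$-th factor, $\tr_{-j}\bigl(\rho^{(N)}\bigr)=\sum_{d}\bigl(\sum_{E}g_{N-1}(E-E_d)\lambda_E\bigr)\dyad{E_d}=\sum_{d}p_d\dyad{E_d}=\rho$ for every $j$, so $\mathcal{C}^*(\rho,N)\subseteq\mathcal{C}(\rho,N)$ and the minimum over the larger set can only be smaller. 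The substance is the reverse inequality, for which I would start from a minimizer $\rho^{(N)}\in\mathcal{C}(\rho,N)$ (it exists, $\mathcal{C}(\rho,N)$ being compact and $\Wform$ continuous) and manufacture from it a state of $\mathcal{C}^*(\rho,N)$ with no larger work of formation.

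I would do this in two steps. First, average $\rho^{(N)}$ over permutations of the $N$ subsystems; since $\rho$ sits on every site this average stays in $\mathcal{C}(\rho,N)$, and because all permuted populations carry the common energy $\sum_i E_{d_i}$, the quantity $\max_{d_1,\dots,d_N}\lambda_{d_1,\dots,d_N}\,e^{\beta\sum_i E_{d_i}}$ does not increase; by the monotonicity of $x\mapsto\kBT\log(\ZSN x)$ in \eqref{eq:wFormacionN}, $\Wform$ does not increase either. Second, replace each population by its average over its total-energy eigenspace, $\lambda_{d_1,\dots,d_N}\mapsto p_E/g_N(E)$ with $E=\sum_i E_{d_i}$ and $p_E=\sum_{d_1,\dots,d_N:\,\sum_i E_{d_i}=E}\lambda_{d_1,\dots,d_N}$. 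This is again an average of equal-energy populations, so $\Wform$ is not increased, and the resulting state has exactly the shape \eqref{eq:minEsEnergia}.

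The step I expect to be the real obstacle is showing that this second, uniformizing replacement does not leave $\mathcal{C}(\rho,N)$, i.e.\ that \eqref{eq:conditionSimplificada} still holds. The new reduced state is $\sum_d\bigl(\sum_E g_{N-1}(E-E_d)\,p_E/g_N(E)\bigr)\dyad{E_d}$, so one needs $\sum_E g_{N-1}(E-E_d)\,p_E/g_N(E)=p_d$ for every $d$. This is automatic when each total energy $E\in\mathcal{E}_N$ is attained by a single occupation pattern of the local levels --- in particular for a two-level system with $E_1\neq E_2$, where the energy eigenspaces are exactly the permutation orbits, so that the symmetrization step of the previous paragraph already lands in the form \eqref{eq:minEsEnergia} and there is nothing more to check. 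The delicate case is a degenerate total spectrum (e.g.\ equally spaced local levels with $D\ge 3$), where distinct occupation patterns share an energy and the within-eigenspace averaging can shuffle weight between local levels; there I would instead work on the polyhedral face of $\mathcal{C}(\rho,N)$ formed by the $\Wform$-minimizers and try to show that it meets the affine family of states \eqref{eq:minEsEnergia} --- equivalently, that the box-constrained linear system determining the shell occupations $\{p_E\}$ remains solvable given feasibility of the full problem. This last point is the combinatorial core of the statement and where I would expect most of the work to lie.
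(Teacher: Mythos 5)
The part of your argument that you actually complete is essentially the paper's own proof, carried out more carefully. The paper's justification of this Proposition is the single observation that, at fixed shell occupations $p_E$, spreading each $p_E$ uniformly over the $g_N(E)$ states of that shell cannot increase $\max\{\lambda_{d_1,\ldots,d_N}e^{\beta E}\}$ in \eqref{eq:wFormacionN}; it contains neither your permutation-averaging step nor any verification that the uniformized state still reduces to $\rho$ on every site. Your two-step argument (symmetrize over permutations, which preserves the marginals and the total energy of each population and so cannot increase the objective; then note that the shells are exactly the permutation orbits) is a complete and correct proof whenever distinct occupation patterns of the local levels have distinct total energies — in particular for the qubit case $D=2$, $E_0\neq 0$, which is all that Theorem~1 and the main-text analysis use.

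The ``delicate case'' you flag, however, is not merely where the work lies: there the statement is false, so the program you sketch (showing that the face of $\Wform$-minimizers in $\mathcal{C}(\rho,N)$ meets the affine family \eqref{eq:minEsEnergia}) cannot succeed. Take $D=3$ with local energies $0, E_0, 2E_0$ (a resonance), $N=2$, and $p=(1/2,0,1/2)$. Every state in $\mathcal{C}(\rho,2)$ has zero weight on the middle level, and the product state $\rho\otimes\rho$ already gives $\Wform=\kBT\log\left[\tfrac14 e^{4\beta E_0}\ZS^2\right]$. In $\mathcal{C}^*(\rho,2)$, the $d=2$ constraint of \eqref{eq:conditionSimplificada} reads $\lambda_{E_0}+\lambda_{2E_0}+\lambda_{3E_0}=0$, which kills the entire shell at $2E_0$ (it contains $\ket{E_2,E_2}$ alongside $\ket{E_1,E_3}$ and $\ket{E_3,E_1}$), so the unique feasible member of $\mathcal{C}^*$ has $\lambda_0=\lambda_{4E_0}=1/2$ and $\Wform=\kBT\log\left[\tfrac12 e^{4\beta E_0}\ZS^2\right]$, strictly larger; perturbing to $p=(1/2-\epsilon,\,2\epsilon,\,1/2-\epsilon)$ with small $\epsilon>0$ keeps the gap (the constraints still force $\lambda_{4E_0}\geq 1/2-3\epsilon$ in $\mathcal{C}^*$, while in $\mathcal{C}$ one can park almost all of the weight at total energy $2E_0$ on $\ket{E_1,E_3},\ket{E_3,E_1}$ only). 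The cause is exactly the one you identified: uniformizing within a degenerate shell mixes different occupation patterns and changes the marginals — a step the paper's one-line argument silently assumes to be harmless. So the resolution is not a cleverer combinatorial/LP argument but a restatement: either replace energy shells by permutation orbits in \eqref{eq:minEsEnergia} (your symmetrization then proves the reduction for all $D$), or restrict the Proposition to non-resonant spectra, where your proof as written is already complete.
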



\subsection{\cwork~as a linear program}

From the results of the previous section it is clear that the search for the
state $\rho^{(N)}$ with minimum work of formation is equivalent to solving the
following constrained minimization problem
\begin{align}
  \min_{\{\lambda_E\}_{E \in \mathcal{E}_N}}  \kBT \log \left[ \max_E \left\{
  \lambda_E e^{\beta E} \ZSN \right\} \right] \nonumber \\
  \text{s.t.} \qquad  \sum_{E \in \mathcal{E}_N} g_{N-1}(E-E_d)\lambda_E &= p_d
  \quad &\forall d = 1,2,\ldots, D \label{eq:main_optimization}\\
  \lambda_E & \geq 0  \quad &\forall E \in \mathcal{E}_N. \nonumber
\end{align}
To simplify the problem, we perform the change of variables 
\begin{equation}
  q_E = \lambda_E e^{\beta E}\ZSN \label{eq:change_variables}.
\end{equation}
We will also omit the term $\kBT$ (being a positive constant)
and drop the function $\log$ (given that $\log$ is a monotonically increasing function) obtaining
\begin{align}
  \min_{q} \quad \|q\|_\infty  \label{eq:minimization.posta} \\
  \text{s.t.} \qquad \sum_{E \in \mathcal{E}_N} \frac{g_{N-1}(E-E_d)}{\ZSN}
  e^{-\beta E} q_E &= p_d \quad &\forall d = 1, 2, \ldots, D
  \label{eq:constrain1.posta} \\
  q_E &\geq 0 \quad &\forall E \in \mathcal{E}_N,
  \label{eq:constrain2.posta}
\end{align}
where $\| \cdot \|_\infty$ is the infinity norm. Finally the minimization of the
infinity norm can be linearized by introducing an auxiliary variable $Q$ and
additional constraints
\begin{align}
  \min_{Q} Q \nonumber \\
  \text{s.t.} \qquad \sum_{E \in \mathcal{E}_N} \frac{g_{N-1}(E-E_d)}{\ZSN}
  e^{-\beta E} q_E &= p_d \quad &\forall d = 1, 2, \ldots, D \nonumber \\
  q_E &\geq 0 \quad &\forall E \in \mathcal{E}_N \\
  q_E &\leq Q \quad &\forall E \in \mathcal{E}_N \nonumber \\
  Q &\geq 0 . \nonumber
\end{align}

As mentioned in the main text, linear optimization problems have been
extensively studied and there exist numerical methods, such as the
\deit{simplex algorithm}, that allows one to solve this problem easily for
arbitrary energy distributions. In the following section we will focus on the
analytical solution for the qubit ($D = 2$) case, and then we will
generalize these results to arbitrary dimensions.


\section{Optimal solution for $D=2$}

When $D=2$ the Hamiltonian of the system can be considered, without loss of
generality, as $\HS = E_0 \dyad{1}$,
where $E_0$ is the energy of the excited state. Thus, every block-diagonal state of $\system$
can written as
$\rho = (1-p)\dyad{0} + p \dyad{1}$,
with $p \in [0,1]$ the probability of being in the excited state.

The advantage of considering $D=2$ is that there is a simple closed formula for the
degeneracy of the  $N$-partite energy levels in the set $\mathcal{E}_N = \{ mE_0, \text{with } m = 0, 1,  \ldots,
N\}$, that is given by
\begin{equation}
  g_N(mE_0) = \binom{N}{m}, \quad m=0,1,\ldots, N,
\end{equation}
where $\binom{N}{m}$ is the binomial coefficient.
Then, the constrains of \eqref{eq:constrain1.posta} and \eqref{eq:constrain2.posta}
can easily be written as:
\begin{align}
  \sum_{m=0}^N \frac{1}{\ZSN} \binom{N}{m} e^{-\beta mE_0} q_m &= \sum_{m=0}^N
  g_m q_m = 1 , \\
  \sum_{m=0}^N \frac{1}{\ZSN} \frac{m}{N} \binom{N}{m} e^{-\beta mE_0} q_m &=
  \sum_{m=0}^N \frac{m}{N}g_mq_m = p ,
\end{align}
where $g_m = \frac{1}{\ZSN} \binom{N}{m} e^{-\beta mE_0}$ and we have used the
fact that ${N - 1 \choose m - 1} = \frac{m}{N} {N \choose m}$.

The following Theorem characterizes the state $\optimal$ which minimizes the
work of formation.
\begin{TheoremSM}[Exact solution, $D=2$]
  Consider the optimization problem
  \begin{align}
    \min_{q \in \R^{N+1}}  \quad \| q \|_\infty \nonumber & \\
    \text{s.t.} \qquad \sum_{m=0}^N g_m q_m &= 1 \label{eq:constrain1} \\
                      \sum_{m=0}^N mg_mq_m &= Np \label{eq:constrain2} \\
                      q_m & \geq 0, \label{eq:constrain3}
  \end{align}
  where $p\in [0,1]$, $N\in \mathbb{N}$ and $g_m$ are non-negative real
  numbers. Then, there exists a unique solution and it is given by
  \begin{equation}
    q_m^* = \left \{ \begin{array}{l l}
      \gamma & \text{if } m \in U \\
      0 & \text{if } m \in L \\
      s\gamma & \text{if } m \in (U \cup L)^c  \\
    \end{array} \right.
    \label{eq:solutionD2}
  \end{equation}
  where $\gamma \in \mathbb R_{>0}$, $s \in [0,1]$ and $U$, $L$ are two different  intervals of indexes such that $\#((U \cup L)^c) \leq 1$ (i.e. there is at most only one element not in either $U$ or $L$). Moreover, if $(U \cup L)^c = \{ m^* \}$ then $U=\{ 0,1,\ldots, m^*-1\}$ and $L=\{ m^*+1, m^*+2, \ldots, N \}$, or
$U=\{ m^*+1, m^*+2, \ldots, N \}$ and $L=\{ 0,1,\ldots, m^*-1\}$; and $\gamma$ and $s$ are given by
  \begin{equation}
    \gamma = \frac{1}{sg_{m^*} + \sum_{m \in U} g_m} = \frac{m^* - Np}{\sum_{m \in U} (m^*-m)g_m} \quad , \quad
    s = \frac{1}{g_{m^*}}\frac{\sum_{m \in U}(Np-m)g_m}{m^* - Np}.
    \label{eq:Zys}
  \end{equation}
  On the other hand, if $(U \cup L)^c = \emptyset$ then $s = 0$ and
  \begin{equation}
  \gamma = \frac{1}{\sum_{m \in U} g_m}.
\end{equation}   
  \label{teo:Caracterizacion2D}
\end{TheoremSM}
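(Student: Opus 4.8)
The plan is to recast the problem as a linear program over an extremely simple polytope and then read the solution off from linear–programming duality. Assume $g_m>0$ for all $m$ (the case relevant here; this is what makes both the bang--bang structure and uniqueness work, and it holds in our setting where $g_m=\binom{N}{m}e^{-\beta mE_0}/\ZSN$). Then $\sum_m g_m q_m=1$ together with $q_m\ge 0$ makes the feasible set compact, and for $p\in[0,1]$ it is nonempty, so a minimizer $q^*$ exists; set $\gamma:=\|q^*\|_\infty$, which is strictly positive since $q^*$ cannot be the zero vector. Putting $r_m:=q_m/\gamma\in[0,1]$ and using that the two equalities are homogeneous apart from the normalization, one finds $\sum_m(m-Np)g_m r_m=0$ and $\gamma=1/\sum_m g_m r_m$. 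Hence minimizing $\|q\|_\infty$ is equivalent to the auxiliary linear program
\[
 \text{maximize}\quad W(r):=\sum_{m=0}^N g_m r_m\qquad\text{s.t.}\quad r\in[0,1]^{N+1},\ \ \sum_{m=0}^N(m-Np)g_m r_m=0 .
\]
This program is feasible ($r=0$ works, and for $0<p<1$ also some $r\neq 0$) and bounded ($W\le\sum_m g_m$); moreover any maximizer automatically satisfies $\max_m r_m=1$, since otherwise scaling $r$ up stays feasible and strictly increases $W$. Thus the optimal value of the original problem is $Q^*=\gamma=1/\max W$.

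Next I would invoke strong duality for the auxiliary program. Since it maximizes a linear functional over the cube $[0,1]^{N+1}$ intersected with one hyperplane, there is a multiplier $\lambda^*\in\mathbb R$ such that every optimal $r^*$ maximizes $\sum_m g_m\bigl(1-\lambda^*(m-Np)\bigr)r_m$ over $r\in[0,1]^{N+1}$. Because $g_m>0$, this is bang--bang: $r^*_m=1$ where $1-\lambda^*(m-Np)>0$, $r^*_m=0$ where it is $<0$, and $r^*_m$ is free — hence pinned by the hyperplane — only where it is $=0$. The map $m\mapsto 1-\lambda^*(m-Np)$ is affine in $m$, so it changes sign at most once and vanishes at most at one index. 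Therefore $U:=\{m:r^*_m=1\}$ and $L:=\{m:r^*_m=0\}$ are complementary intervals lying on opposite sides of a single transition index $m^*$ — which side is which is governed by $\sgn\lambda^*$ — and $(U\cup L)^c$ is either empty or the single index $m^*$, carrying value $s:=r^*_{m^*}\in(0,1)$. Undoing the rescaling, $q^*_m=\gamma r^*_m$ reproduces exactly the three-valued form of Eq.~\eqref{eq:solutionD2}.

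It remains to pin down $\gamma$ and $s$ and to argue uniqueness. Once the pattern $(U,L,m^*)$ is fixed, the two original equalities $\sum_m g_m q^*_m=1$ and $\sum_m m g_m q^*_m=Np$ become two linear equations in the two unknowns $\gamma,s$, namely $\gamma\bigl(\sum_{m\in U}g_m+s g_{m^*}\bigr)=1$ and $\gamma\bigl(\sum_{m\in U}m g_m+s m^* g_{m^*}\bigr)=Np$; dividing one by the other and solving gives the stated expression for $s$, and back-substitution gives both forms of $\gamma$ (in particular $\gamma=(m^*-Np)/\sum_{m\in U}(m^*-m)g_m$), while if $(U\cup L)^c=\emptyset$ the normalization alone forces $\gamma=1/\sum_{m\in U}g_m$ and one sets $s=0$ by convention. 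Uniqueness follows from complementary slackness: for a dual-optimal $\lambda^*$, every coordinate with $1-\lambda^*(m-Np)\neq 0$ is forced to a bound, and the at most one remaining coordinate is then forced by the equality constraint (note $m^*\neq Np$ there, so the constraint is nondegenerate), so $r^*$, and hence $q^*$, is uniquely determined; one also checks that the dual minimizer set cannot straddle a breakpoint, which closes the remaining case.

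The main obstacle is the reduction in the first step — establishing feasibility and boundedness of the auxiliary program and the self-normalization $\max_m r^*_m=1$, so that it is genuinely equivalent to the original minimization — together with clean bookkeeping of the degenerate cases: $p\in\{0,1\}$ (one of $U,L$ collapses and there is no fractional level), $s\to 0$ or $s\to 1$ (which merges into the $(U\cup L)^c=\emptyset$ case), and $\lambda^*=0$ (all $r^*_m=1$, i.e. $U=\{0,\dots,N\}$). Once these are dispatched, the sign-pattern argument coming from duality does all the structural work and everything else is elementary algebra.
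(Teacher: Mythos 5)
Your proposal is correct, but it takes a genuinely different route from the paper's proof. The paper establishes the three-valued structure of the optimum by a direct local-improvement argument: if an optimal $q^\dag$ had two coordinates strictly between $0$ and $\|q^\dag\|_\infty$, one perturbs the maximal coordinates by $\epsilon$ and compensates on those two coordinates by inverting the $2\times 2$ matrix $\bigl(\begin{smallmatrix} g_a & g_b \\ a g_a & b g_b\end{smallmatrix}\bigr)$, producing a strictly smaller infinity norm; the values of $\gamma$ and $s$ then follow from the same $2\times 2$ system you write down, and the claim that $U$, $L$ are the consecutive intervals flanking $m^*$ is argued rather informally (by noting that $\gamma=(m^*-Np)/\sum_{m\in U}(m^*-m)g_m$ is minimized by taking $U$ as large and consecutive as possible). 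You instead rescale to the auxiliary LP of maximizing $\sum_m g_m r_m$ over the cube cut by one homogeneous hyperplane and invoke LP strong duality: since the reduced cost $g_m\bigl(1-\lambda^*(m-Np)\bigr)$ is affine in $m$ and the $g_m$ are positive, the bang--bang maximizer automatically makes $U$ and $L$ the two intervals on either side of a single sign change, with at most one fractional index pinned by the constraint. This buys two things the paper's write-up leaves loose: a rigorous derivation of the interval (consecutiveness) structure, and uniqueness — your complementary-slackness argument is in fact simpler than you suggest, since fixing any one dual-optimal $\lambda^*$ already forces every primal optimum to the same point, so the closing remark about the dual minimizer set straddling a breakpoint is not needed. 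The price is the appeal to duality machinery where the paper is elementary. One point worth keeping explicit, as you do: strict positivity $g_m>0$ (true in the application, where $g_m=\binom{N}{m}e^{-\beta mE_0}/Z_{\nsm{S}}^N$) is genuinely required both for the bang--bang step and for uniqueness — with some $g_m=0$ the corresponding coordinate is unconstrained below the norm and uniqueness fails — and the paper's proof also uses $g_a,g_b\neq 0$ implicitly in its invertibility step.
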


\begin{proof}
Since $\|\cdot\|_\infty$ is a metric in $\R^{N+1}$ and the set of \eqref{eq:constrain1}, \eqref{eq:constrain2} and \eqref{eq:constrain3} define a
convex bounded set, there exists at least one solution to the problem. To see
that the optimal solution is of the form of \eqref{eq:solutionD2}, we will show by
contradiction that a smaller minimum cannot exist. First notice that the
defining property of \eqref{eq:solutionD2} states that $q^*_m$ is either $0$ or
achieves its maximum (infinity norm) at all points except for at most a single one.
Let's then assume that there is a different optimal solution $q^\dag$ with
smaller infinity norm such that it has at least a second non-zero component
smaller than $\|q^\dag\|_\infty$, i.e. there are indexes $a, b \in \{
0,1,\ldots, N \}$ such that $0 < q^\dag_a < \|q^\dag\|_\infty$ and $0 < q^\dag_b < \|q^\dag\|_\infty$.
Now we can define the sets $U^\dag = \big \{ m : q^\dag_m = \|q^\dag\|_\infty \big \}$ and
$L^\dag = \big \{ m : q^\dag_m = 0 \big \}$. Since $q^\dag$ satisfy
\eqref{eq:constrain1} and \eqref{eq:constrain2}:
\begin{align}
  \sum_{m=0}^N g_mq_m^\dag &= \|q^\dag\|_\infty \sum_{m \in U^\dag} g_m  +
  \sum_{m \in (U^\dag \cup L^\dag)^c} g_mq_m^\dag \nonumber \\
  &= (\|q^\dag\|_\infty - \epsilon)\sum_{m \in U^\dag} g_m + \sum_{m \in
  (U^\dag \cup L^\dag)^c} g_mq_m^\dag + \epsilon \sum_{m \in U^\dag} g_m
  \nonumber \\
  &=1 ,
\end{align}
\begin{align}
  \sum_{m=0}^N mg_mq_m^\dag &= \|q^\dag\|_\infty \sum_{m \in U^\dag} mg_m  +
  \sum_{m \in (U^\dag \cup L^\dag)^c} mg_mq_m^\dag \nonumber \\
  &= (\|q^\dag\|_\infty - \epsilon)\sum_{m \in U^\dag} mg_m + \sum_{m \in
  (U^\dag \cup L^\dag)^c} mg_mq_m^\dag + \epsilon \sum_{m \in U^\dag} mg_m
  \nonumber \\
  &= Np,
\end{align}
where $0 < \epsilon < \|q^\dag\|_\infty$. Let's see that in such a case there is another $q^{\dag
\dag}$ with $\|q^{\dag \dag}\|_\infty < \|q^\dag\|_\infty$, defined as
\begin{equation}
  q_m^{\dag \dag} = \left \{ \begin{array}{l l}
  \|q^\dag\|_\infty -\epsilon & \text{if } m \in U^\dag \\
  q^{\dag \dag}_a & \text{if } m = a \\
  q^{\dag \dag}_b & \text{if } m = b \\
  0 & \text{if } m \not \in U^\dag \cup \{ a,b \}  \\
  \end{array} \right.
\end{equation}
where $q_a^{\dag \dag}$, $q_b^{\dag \dag}$ and $\epsilon $ are constants to
be determined satisfying the constrains. Notice that if $q^{\dag
\dag}_a, q^{\dag \dag}_b < \|q^\dag\|_\infty - \epsilon$, then $q^{\dag \dag}$
is a better minimum than $q^\dag$. Additionally, $q^{\dag \dag}$ will satisfy the constrains
of \eqref{eq:constrain1} and \eqref{eq:constrain2} if there are constants $q_a^{\dag
\dag},q_b^{\dag \dag},\epsilon > 0$ such that
\begin{equation}
  \left[ \begin{array}{cc}
    g_a & g_b \\
    ag_a & bg_b \\
  \end{array} \right]
  \left[ \begin{array}{c}
    q_a^{\dag \dag} \\
    q_b^{\dag \dag} \\
  \end{array} \right]
  =
  \left[ \begin{array}{c}
    g_aq_a^{\dag} + g_bq_b^{\dag} + \epsilon \sum_{m \in U^\dag} g_m \\
    ag_aq_a^{\dag} + bg_bq_b^{\dag} + \epsilon \sum_{m \in U^\dag} mg_m \\
  \end{array} \right].
\end{equation}
Since $g_a, g_b \neq 0$ and $a \neq b$, the first matrix is invertible and then
\begin{align}
  \left[ \begin{array}{c}
    q_a^{\dag \dag} \\
    q_b^{\dag \dag} \\
  \end{array} \right]
  &=
  \left[ \begin{array}{c}
    q_a^{\dag} \\
    q_b^{\dag} \\
  \end{array} \right]
  + \epsilon \frac{(g_ag_b)^{-1}}{b-a}
  \left[ \begin{array}{c}
    bg_b\sum_{m \in U^\dag} - g_b\sum_{m \in U^\dag} mg_m \\
    g_a \sum_{m\in U^\dag}mg_m - ag_a\sum_{m \in U^\dag} g_m
  \end{array} \right] \nonumber \\
  &\doteq
  \left[ \begin{array}{c}
    q_a^{\dag} \\
    q_b^{\dag} \\
  \end{array} \right]
  + \epsilon
  \left[ \begin{array}{c}
    s_a \\
    s_b \\
  \end{array} \right] .
\end{align}
Thus, since $0 < q_i^\dag < \|q^\dag\|_\infty$ ($i=a,b$) and $s_a, s_b$ do not depend
on $q^{\dag \dag}$,  there exists $\epsilon > 0$ with $0 < q_i^{\dag
\dag} < \|q^\dag\|_\infty - \epsilon$ ($i=a,b$). As a consequence,
$q^{\dag \dag}$ verifies \eqref{eq:constrain1}, \eqref{eq:constrain2},
\eqref{eq:constrain3} and $\|q^{\dag \dag}\|_\infty < \|q^{\dag}\|_\infty$.
Therefore, $q^\dag$ cannot be the solution of the optimization problem, and then
the optimal solution is the one given in \eqref{eq:solutionD2}.

On the other hand, the expressions of $\gamma$ y $s$ in \eqref{eq:Zys}
follow from the fact that
\begin{equation}
  \left[ \begin{array}{c c}
    \sum_{m \in U} g_m & g_{m^*} \\
    \sum_{m \in U} mg_m & m^* g_{m^*} \\
  \end{array} \right]
  \left[ \begin{array}{c}
    \gamma \\
    s\gamma \\
  \end{array} \right]
  =
  \left[ \begin{array}{c}
    1 \\
    Np \\
  \end{array} \right],
  \label{eq:gamma}
\end{equation}
where $(U \cup L)^c = \{ m^* \}$. Now, for a fixed $m^*$ let us demonstrate
that either $U=\{ 0,1,\ldots, m^*-1\}$ and $L=\{ m^*+1, m^*+2, \ldots, N \}$, or
$U=\{ m^*+1, m^*+2, \ldots, N \}$ and $L=\{ 0,1,\ldots, m^*-1\}$. From \eqref{eq:gamma} we have that
\begin{equation}
  \gamma =  \frac{m^* - Np}{\sum_{m \in U} (m^*-m)g_m},
\end{equation}
and for a fixed $m^*$ and $s$, it is then clear that the way to minimize $\gamma$ is by choosing
$U$ as a set of consecutive indexes as big as possible. Also, with $U$ fixed the
best strategy is to choose $m^*$ as a consecutive index in order to ensure that $\gamma$
will be positive and as small as possible. 
\end{proof}

\bigbreak

It is worth noting that the cases $s = 0$ and $s = 1$ are equivalent with a
proper redefinition of $U$. Indeed, whenever $s = 1$, we can define $U' = U
\cup \{m^*\}$ and then $(U' \cup L)^c = \emptyset$, which by definition means
that $s' = 0$. Conversely, if $s = 0$ (and therefore by definition $(U \cup
L)^c = \emptyset$), we can take any element $m^*$ in $U$ (in this case $U$ can
never be empty, otherwise $q = 0$) and define $U' = U \setminus \{m^*\}$ which
means that now $(U' \cup L)^c = \{m^*\}$ and $s' = 1$. For this reason, in the
remainder of this text we will either use $s = 1$ or $s = 0$ (each with their
own proper definition of $U$) for the same state depending on the convenience
for the calculation at hand.

\bigbreak

\begin{CorollarySM}[Theorem 1 in the main text]
  Given an integer $N$ and a state $\rho$ which satisfies $[\rho, \HS]=0$,
  there exists a subset of energies  ${\mathcal{E}_N^{\rho} \subseteq\mathcal{E}_N}$, 
  a constant $s\in(0,1]$, and at most a single energy $\varepsilon \in \mathcal{E}_N$ such that 
  the state $\optimal$ is defined by the distribution:
  \begin{equation}
    \label{eq:sm:solutionEnergy}
    \lambda_{E} = \left \{ \begin{array}{l l}
     \; \;\;\frac{e^{-\beta E}}{\gamma} & \text{if } E \in \mathcal{E}_N^{\rho} \\
      {s} \;\frac{e^{-\beta \varepsilon}}{\gamma}\,  & \text{if } E = \varepsilon \\
      0 & \text{otherwise }   \\
    \end{array} \right.
  \end{equation}
  with $\gamma$ a normalization constant. The work of formation and the extractable work of $\optimal$ are given by
\begin{align}
  \Wmin(\rho, N) &= \kBT \log\left[ \frac{\ZSN}{\gamma}\right], \\
   \Wext(\rho,N)   &= \kBT \log\left[ \frac{\ZSN}{Z}\right],   \label{eq:sm:Wopt}
\end{align}
where  $Z$ is the partition function of a 
system in a thermal state at temperature $T$ with spectrum given by the set ${\mathcal{E}_N^{\rho}\cup \{\varepsilon}\}$,
and $\gamma=Z- (1-s) \, g_N(\varepsilon) e^{-\beta \varepsilon}$.  

\end{CorollarySM}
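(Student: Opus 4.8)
\emph{Proof idea.} The plan is to obtain the Corollary as an immediate translation of the $D=2$ exact solution, Theorem~\ref{teo:Caracterizacion2D}, back into the original variables. First I would use the two reductions already in place: by the reduction of Supplementary Note~1 it suffices to minimise $\Wform$ over states of the maximally-mixed-in-each-energy-sector form, parametrised by an energy distribution $\{\lambda_E\}_{E\in\mathcal{E}_N}$ subject to \eqref{eq:conditionSimplificada}; and for $\HS=E_0\dyad{1}$, where $g_N(mE_0)=\binom{N}{m}$, the change of variables \eqref{eq:change_variables} turns this into precisely the linear problem \eqref{eq:constrain1}--\eqref{eq:constrain3} with $g_m=\binom{N}{m}e^{-\beta m E_0}/\ZSN$. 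Applying Theorem~\ref{teo:Caracterizacion2D} to this instance, the unique optimal $q^*$ equals a positive constant $\tilde\gamma$ on an interval $U$ of indices, equals $s\tilde\gamma$ on at most one further index $m^*$ (with $s\in[0,1]$), and vanishes otherwise.

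Next I would undo \eqref{eq:change_variables}. Writing $\mathcal{E}_N^{\rho}=\{mE_0:m\in U\}$ and $\varepsilon=m^*E_0$, the identity $\lambda_E=q_E^*\,e^{-\beta E}/\ZSN$ reproduces the distribution \eqref{eq:sm:solutionEnergy} verbatim, with the statement's normalisation constant being $\gamma=\ZSN/\tilde\gamma$. To upgrade $s\in[0,1]$ to $s\in(0,1]$ I would invoke the remark following Theorem~\ref{teo:Caracterizacion2D}: if the linear program returns a solution with no fractional component, move one index out of $U$ into the role of $m^*$ with $s=1$; this is always possible since $U\ne\emptyset$, and it also guarantees that $\varepsilon$ lies in the support of $\optimal$. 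The normalisation $\sum_{E\in\mathcal{E}_N}g_N(E)\lambda_E=1$ (equivalently \eqref{eq:constrain1}) then gives $\gamma=\sum_{E\in\mathcal{E}_N^{\rho}}g_N(E)e^{-\beta E}+s\,g_N(\varepsilon)e^{-\beta\varepsilon}$.

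It then remains to read off the two works. For the work of formation, $\Wform(\optimal)=\kBT\log\max_E\{\lambda_E e^{\beta E}\ZSN\}=\kBT\log\|q^*\|_\infty$; since $s\le 1$ this equals $\kBT\log\tilde\gamma=\kBT\log(\ZSN/\gamma)$, the claimed expression for $\Wmin(\rho,N)$. For the extractable work of $\optimal$, $\Wext(\rho,N)=\kBT D_0(\optimal\|\tauS^{\otimes N})=-\kBT\log\tr[\Pi\,\tauS^{\otimes N}]$ with $\Pi$ the projector onto the support of $\optimal$; because $s>0$ that support is exactly the span of the energy eigenstates with energy in $\mathcal{E}_N^{\rho}\cup\{\varepsilon\}$, so $\tr[\Pi\,\tauS^{\otimes N}]=Z/\ZSN$ with $Z=\sum_{E\in\mathcal{E}_N^{\rho}\cup\{\varepsilon\}}g_N(E)e^{-\beta E}$ — the ``partition function over the reduced spectrum'' of the statement — giving $\Wext(\rho,N)=\kBT\log(\ZSN/Z)$. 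Finally, subtracting the definition of $Z$ from the expression for $\gamma$ above yields $\gamma=Z-(1-s)\,g_N(\varepsilon)e^{-\beta\varepsilon}$, which closes the argument.

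All of the real content lives in Theorem~\ref{teo:Caracterizacion2D} and the reduction of Supplementary Note~1; the only parts of this derivation that demand care are bookkeeping ones — keeping the two constants called $\gamma$ apart (the value of $\|q\|_\infty$ in Theorem~\ref{teo:Caracterizacion2D} versus the normalisation of the probability vector $\{\lambda_E\}$ here, which differ by the factor $\ZSN$), and handling the $s=0\leftrightarrow s=1$ ambiguity so that one may assert $s\in(0,1]$ and, what matters for the $D_0$ computation, that $\varepsilon$ genuinely belongs to the support of $\optimal$. Neither is a genuine obstacle.
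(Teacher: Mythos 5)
Your proposal is correct and follows essentially the same route as the paper's own proof: invert the change of variables $q_E=\lambda_E e^{\beta E}\ZSN$ in Theorem~\ref{teo:Caracterizacion2D}, use the remark after that theorem to trade $s=0$ for $s=1$ so that $s\in(0,1]$, and read off $\gamma=Z-(1-s)g_N(\varepsilon)e^{-\beta\varepsilon}$ from normalisation. The only difference is that you spell out the extractable-work formula via $D_0$ and the support projector (giving $\tr[\Pi\,\tauS^{\otimes N}]=Z/\ZSN$), a step the paper states but leaves implicit, and your bookkeeping of the two constants named $\gamma$ is handled correctly.
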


\begin{proof}

  It follows from Theorem 1 taking $q_E = \lambda_E e^{\beta E}\ZSN$, defining $\mathcal E_N^\rho = \{ mE_0 : \text{with } m \in U \}$ and $\varepsilon = m^* E_0$ with $m^*$ the only index in $(U \cup L)^c$ (if $(U \cap L)^c = \emptyset$, we redefine $U$ as mentioned after the proof of Theorem 1 to have $(U \cup L)^c = \{m^*\}$ and $s = 1$). The \cwork~is then given by
	\begin{align}
	\Wmin (\rho, N) 
	&= 
  \kBT \log \gamma^{-1}  \nonumber \\
	&= 
   - \kBT \log \left[ \sum_{U} g_m  + s g_{m^*} \right] \nonumber \\
	&= 
        - \kBT \log \left[  \frac{1}{\ZSN} \left ( \sum_{U \cup \{ m^* \} } \binom{N}{m} e^{- \beta m E_0} - (1-s)g_N(\varepsilon) e^{-\beta \varepsilon} \right ) \right] \nonumber \\
      &= \kBT \log\left[ \frac{\ZSN}{\gamma}\right]
 	\end{align}
 	where $Z$ is the partition function of the system restricted to the set of energies $\mathcal E_N^\rho \cup \{ \varepsilon \}$, that is,
 	\begin{equation}
      Z = \ZSN \sum_{ m \in \mathcal E_N^\rho \cup \{ \varepsilon \} } g_m =  
 	\sum_{ m \in \mathcal E_N^\rho \cup \{ \varepsilon \} } \binom{N}{m} e^{-\beta m E_0}.
 	\end{equation}
\end{proof}

\bigbreak

\figref{fig:optsol2d}-a) illustrates the analytical solution for $N=3$.
There we show the dependence of $\gamma$ and $s \gamma$ (\eqref{eq:Zys}) with the local state $\rho = (1-p)\dyad{0} + p\dyad{1}$. 
It can been seen that $\gamma(p)$ is a piecewise linear function and
that the break points coincide abruptly changes of $s$ from $s \approx 0$
to $1$. A similar discontinuity in the derivative of $\gamma$ appears in the derivative of the 
\cwork~(\figref{fig:optsol2d}-b)).

\begin{figure*}[tb]
  \centering
  \includegraphics{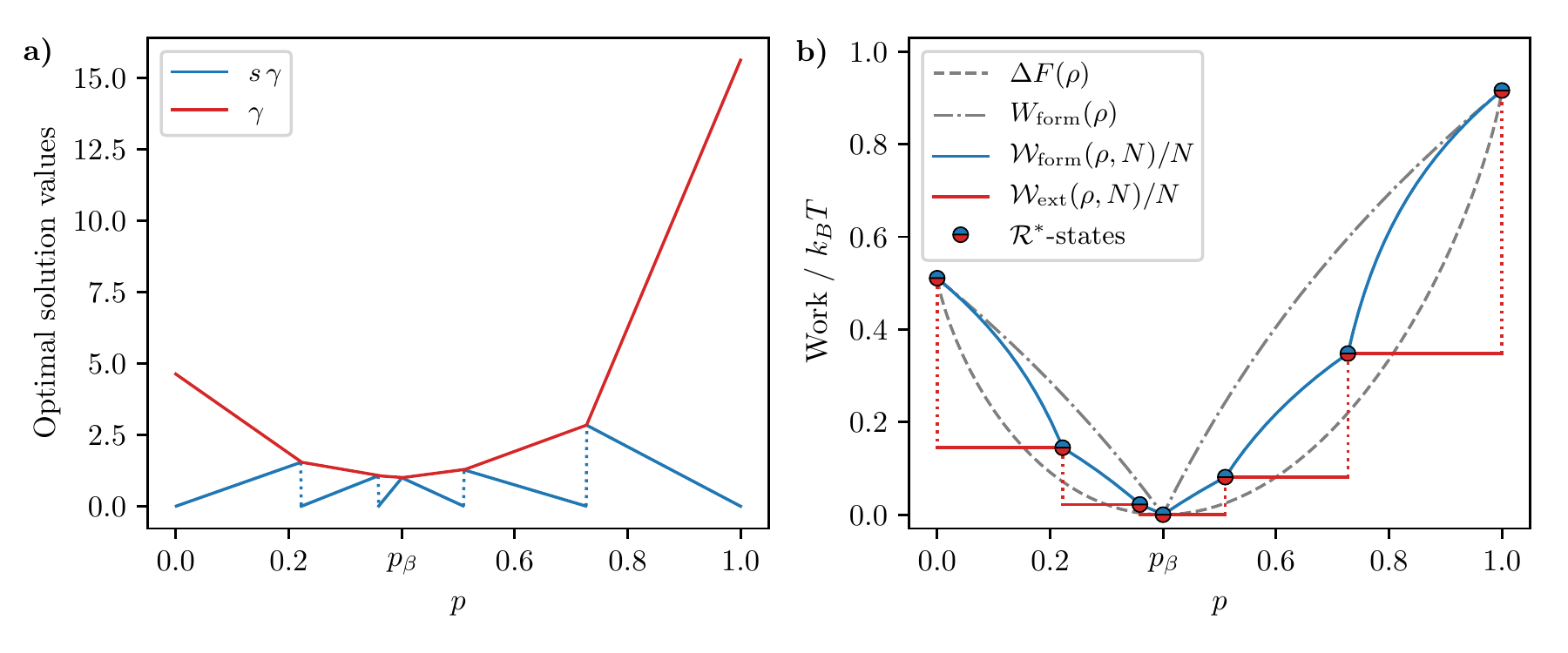}
  \caption{\textbf{Optimal solution for the qubit case and $N=3$ copies}. Panel
    (a) shows the optimal values of $\gamma$ (red curve) and $s\gamma$ (blue
    curve) as a function of the local qubit excited state probability, $p$. The
    dotted lines indicate discontinuities that appear in $s\gamma$. Panel (b)
    shows the work of formation of a single copy $\Wform(\rho)$ (gray
    dash-dotted line), the \cwork~per copy $\Wmin(\rho, N)/N$ (blue line), the
    extractable work from the correlated system per copy $\Wext(\rho, N)/N$
    (red line) and the standard thermodynamic free energy of the local system
    $\Delta F$ (dahsed gray line). The blue and red circles further indicate
    the location of the reversible states. \label{fig:optsol2d}}
\end{figure*}


\section{Bound on the amount of total correlations}

We have shown that correlated states minimize the \cwork. Thus, the creation of correlations can be used to reduce
this energetic cost with respect to the uncorrelated case. The question we address here is whether there is a bound 
on the amount of  total correlations that can be developed while reducing this work cost.
It this easy to show that  total correlations
\begin{equation}
\mathcal I(\rho^{(N)}) > N\beta\,\big(\Wform(\rho)-\Delta F(\rho)\big)
\end{equation}
are necessarily costly, since they cannot be used to reduced the work of formation.
In fact, if a multipartite state $\rho^{(N)}$ is created with total correlations $\mathcal I(\rho^{(N)})/N > (\Wform(\rho)-\Delta F(\rho)) \beta $, 
then $\mathcal I(\rho^{(N)})\, /\beta+N\Delta F(\rho) > N\, \Wform(\rho)$. 
Recalling that $\Delta F(\rho^{(N)})= \mathcal I(\rho^{(N)})\, /\beta+N\Delta F(\rho)$ and
$\Wform(\rho^{(N)})\geq\Delta F(\rho^{(N)})$ then $ \Wform(\rho^{(N)})>N\,\Wform(\rho)$. 


\section{Reversibility and $\criticos$}

As mentioned in the main text, in general the transformations in
the resource theory framework are not reversible, in the sense
that the work of formation of a state is in general strictly larger than the
extractable work \cite{Horodecki2013}. This means that in general we will not be
able to recover all the work invested in the creation of a state from thermal
equilibrium. However, there is a particular set of states where reversible
interconversion can be achieved.

\begin{DefinitionSM}[Reversible states]
  A state $\sigma$ is reversible if its work of formation and extractable work
  coincide; i.e.
  \begin{equation}
    \WE(\sigma) = \Wform(\sigma).
  \end{equation}
  We define as $\mathcal{R}$ the set of all reversible states.
\end{DefinitionSM}

While generic states are irreversible, there is a particularly simple
family of reversible states  which are thermal on a reduced support of the system.

\begin{PropositionSM}[non-trivial reversibles states]
  Let $\system$ be a system with Hamiltonian $\HS$ and energy spectrum
  $\ES$. Let $\tauS$ be the Gibbs state of the system, which is
  characterized by the eigenvalues
  \begin{equation}
    \tauS (E, g) = \frac{1}{\ZS} e^{-\beta E} \quad \forall g \leq \gS(E),
  \end{equation}
  where $\ZS$ is the partition function and $\gS(\cdot)$ the degeneracy. 
  Let $\rho$ be a state of $\system$ with $[\rho, \HS] = 0$ and eigenvalues given by
  \begin{equation}
    \lambda(E, g) = \frac{1}{Z} \tauS(E, g) \mathbbm{1}_{\{E \in \mathcal{E} \}} \quad \forall g \leq \gS(E),
  \end{equation}
  where $\mathcal{E} \subseteq \ES$ is a subset of allowed energies, $\mathbbm{1}$ is the indicator function
  and $Z$ is a normalization constant. Then the family of R\'enyi divergences $\{
  D_\alpha (\rho \| \tauS) \}_{\alpha \geq 0}$ is independent of $\alpha$.
  In particular, the state $\rho$ is reversible.
  \label{prop:pre_reversibilidad}
\end{PropositionSM}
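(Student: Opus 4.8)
The plan is to evaluate the R\'enyi divergences $D_\alpha(\rho\|\tauS)$ directly and show the result does not depend on $\alpha$. Since $[\rho,\HS]=0$, both $\rho$ and $\tauS$ are diagonal in the energy eigenbasis, so each $D_\alpha$ reduces to the corresponding classical divergence between the eigenvalue lists $\{\lambda(E,g)\}$ and $\{\tauS(E,g)\}$. The key structural fact I would exploit is that, on the support of $\rho$ — the pairs $(E,g)$ with $E\in\mathcal{E}$ and $g\le\gS(E)$ — the ratio $\lambda(E,g)/\tauS(E,g)=1/Z$ is a constant independent of $(E,g)$, while $\lambda(E,g)=0$ off the support.

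First I would treat the generic range $\alpha\in(0,1)\cup(1,\infty)$ using $D_\alpha(\rho\|\tauS)=\frac{1}{\alpha-1}\log\sum_{E,g}\lambda(E,g)^\alpha\,\tauS(E,g)^{1-\alpha}$. Substituting $\lambda(E,g)=\tauS(E,g)/Z$ on the support factors out $Z^{-\alpha}$ and leaves $\frac{1}{\alpha-1}\log\bigl(Z^{-\alpha}\sum_{E\in\mathcal{E},\,g\le\gS(E)}\tauS(E,g)\bigr)$. Normalization of $\rho$ forces $\sum_{E\in\mathcal{E},\,g\le\gS(E)}\tauS(E,g)=Z$, so the argument of the logarithm is $Z^{1-\alpha}$ and therefore $D_\alpha(\rho\|\tauS)=-\log Z$ for every such $\alpha$. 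Next I would verify the boundary cases with their own definitions: $D_1=\sum_{E,g}\lambda(E,g)\log\bigl(\lambda(E,g)/\tauS(E,g)\bigr)=\log(1/Z)$ because $\rho$ is normalized, $D_0=-\log\sum_{E\in\mathcal{E},\,g\le\gS(E)}\tauS(E,g)=-\log Z$, and $D_\infty=\log\max_{E,g}\lambda(E,g)/\tauS(E,g)=\log(1/Z)$. Hence $\{D_\alpha(\rho\|\tauS)\}_{\alpha\ge0}$ is the constant $-\log Z$, which is nonnegative since $\mathcal{E}\subseteq\ES$ implies $Z\le1$.

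Reversibility then follows immediately: $\Wform(\rho)=\kBT\,D_\infty(\rho\|\tauS)=-\kBT\log Z=\kBT\,D_0(\rho\|\tauS)=\WE(\rho)$, so $\rho\in\mathcal{R}$.

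I do not expect a genuine obstacle here; the computation is short once the constancy of $\lambda(E,g)/\tauS(E,g)$ on the support is noticed. The only points that need a little care are handling $\alpha\in\{0,1,\infty\}$ through their limiting definitions rather than the generic formula, and the bookkeeping that identifies the normalization constant $Z$ with the restricted partition sum $\sum_{E\in\mathcal{E},\,g\le\gS(E)}\tauS(E,g)$.
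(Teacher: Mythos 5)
Your proposal is correct and follows essentially the same route as the paper: a direct evaluation of $D_\alpha(\rho\|\tau_{\nsm{S}})$ using the constancy of $\lambda(E,g)/\tau_{\nsm{S}}(E,g)=1/Z$ on the support and the normalization identity $\sum_{E\in\mathcal{E}}\gS(E)\,\tau_{\nsm{S}}(E,g)=Z$, yielding $D_\alpha=-\log Z$ for all $\alpha$. The only cosmetic difference is that you treat $\alpha\in\{0,1,\infty\}$ through their limiting definitions while the paper takes the infimum and supremum over $\alpha>0$; both are fine.
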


\begin{proof}
  It follows directly from the definition of the R\'enyi divergences \cite{Brandao2015}
  and the form of $\rho$. Given $\alpha \geq 0$
  \begin{align}
    D_\alpha(\rho \| \tauS) &= \frac{1}{\alpha-1} \log \sum_{E \in \mathcal \eS} \sum_{g=1}^{\gS(E)} \lambda(E,g)^\alpha \tauS(E,g)^{1-\alpha} \nonumber \\
    &=  \frac{1}{\alpha-1} \log \sum_{E \in \mathcal{E}} \gS(E) Z^{-\alpha} \tauS(E,g) \nonumber \\
    &= \frac{1}{\alpha-1} \log Z^{-\alpha} \sum_{E \in \mathcal{E}} \gS(E) \tauS(E,g) \nonumber \\
    &= \frac{1}{\alpha-1} \log Z^{1-\alpha} \nonumber \\
    &= - \log Z.
  \end{align}
  In particular,
  \begin{equation}
    D_0(\rho \| \tauS) = \inf_{\alpha > 0}D_\alpha(\rho \| \tauS) =
    D_\infty(\rho \| \tauS) = \sup_{\alpha > 0}D_\alpha(\rho \| \tauS) =
    - \log Z
  \end{equation}
  and therefore $\WE(\rho) = \Wform(\rho) $.
\end{proof}


\subsection{$\mathcal{R}^*$-states}

As we will show, there exists a set of $N$-partite reversible states that are indeed solution to the minimization problem.
The associated reduced states $\rho = (1-p)\dyad{0}+p\dyad{1}$, parametrized by the value of $p$, are defined as $\criticos$
and in \figref{fig:optsol2d}-a) and \figref{fig:optsol2d}-b) are associated to the breaking points in the \cwork.

\begin{DefinitionSM}[$\criticos$]
  Given a number of copies $N$ and a reduced state $\rho(p^*) =
  (1-p^*)\dyad{0}+p^*\dyad{1}$, we say that $\rho(p^*)$ is an $\critico$ if
  $\Wmin(\rho(p^*),N)=\Wext(\rho(p^*),N)$. We call $\mathcal{R}^*(N)$ the set of all $\criticos$ for a
  given $N$.
  \label{def:r-critico}
\end{DefinitionSM}

The set $\mathcal{R}^*(N)$ can be easily characterized by the condition $s=1$ (see Corollary 1).

\begin{PropositionSM}
  Given a system of $N$ qubits, the set of $\criticos$ is determined
  by the $2N+1$ states of the form
  \begin{equation}
    \rho( p^*_k) = (1-p^*_k)\dyad{0}+p_k^*\dyad{1}
  \end{equation}
  where 
  \begin{equation}
  p^*_k =  -\frac{1}{NE_0} \frac{\partial}{\partial \beta} \log Z_k= \frac{\mean{E}_k}{N E_0}.    
  \end{equation}
If $p^*_k \leq p_\beta$, ${Z_k\equiv \sum_{m = 0}^{k} {N \choose m} e^{-\beta m E_0}}$ for  $k=0,..,N$; while if
 $p^*_k > p_\beta$, ${Z_k\equiv  \sum_{m = 0}^{k} {N \choose N-m} e^{-\beta (N-m) E_0}}$ for $k=0,..,N-1$. Here
 the Gibbs state is $\rho(p_\beta)$.
  \label{prop:caracterizacionRestrella}
\end{PropositionSM}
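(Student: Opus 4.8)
The plan is to read this characterization off from the exact solution (Theorem~\ref{teo:Caracterizacion2D}) together with Proposition~\ref{prop:pre_reversibilidad} and Corollary~1. By Definition~\ref{def:r-critico}, a reduced state $\rho(p^*)=(1-p^*)\dyad{0}+p^*\dyad{1}$ is an $\critico$ exactly when its optimal state $\optimal(\rho(p^*),N)$ satisfies $\Wmin(\rho(p^*),N)=\Wext(\rho(p^*),N)$. Using Corollary~1, where $\Wmin=\kBT \log(\ZSN/\gamma)$ and $\Wext=\kBT \log(\ZSN/Z)$, this equality is equivalent to $\gamma=Z$, and since $\gamma=Z-(1-s)\,g_N(\varepsilon)e^{-\beta\varepsilon}$ with $g_N(\varepsilon)e^{-\beta\varepsilon}>0$, it is in turn equivalent to $s=1$. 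Thus the first step is to observe that the $\criticos$ are precisely those reduced states whose optimal vector $q^*$ — after the redefinition of $U$ described below Theorem~\ref{teo:Caracterizacion2D} — equals the constant $\gamma$ on a set $U'$ of indices and $0$ elsewhere, i.e.\ $\optimal$ is thermal on the reduced support $\{mE_0:m\in U'\}$.

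The second step uses the structural part of Theorem~\ref{teo:Caracterizacion2D}: for a solution of this all-or-nothing form the support $U'$ is forced to be a block of consecutive indices, and the minimality argument at the end of that proof (making $U$ as long a run of consecutive indices as possible, with its complement an interval) forces $U'$ to be either a prefix $\{0,1,\dots,k\}$ or a suffix $\{k,k+1,\dots,N\}$ of $\{0,1,\dots,N\}$. Given such a support, $\optimal$ is uniquely fixed: normalization $\sum_{m\in U'}g_m q_m=1$ gives $\gamma=1/\sum_{m\in U'}g_m$, and then the remaining linear constraint $\sum_{m\in U'}m\,g_m q_m=Np$ selects the unique $p$ making that support feasible,
\begin{equation}
  p^*_k=\frac{1}{N}\,\frac{\sum_{m\in U'} m\binom{N}{m}e^{-\beta mE_0}}{\sum_{m\in U'}\binom{N}{m}e^{-\beta mE_0}} .
\end{equation}
Writing $Z_k$ for the denominator and using $\partial_\beta Z_k=-E_0\sum_{m\in U'}m\binom{N}{m}e^{-\beta mE_0}$, this is exactly $p^*_k=-\tfrac{1}{NE_0}\,\partial_\beta\log Z_k=\mean{E}_k/(NE_0)$, where $\mean{E}_k$ is the mean energy of the thermal state supported on $\{mE_0:m\in U'\}$ with degeneracies $\binom{N}{m}$; the two displayed forms of $Z_k$ in the statement are just this with $U'$ a prefix and, after the substitution $m\mapsto N-m$, with $U'$ a suffix.

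The last step is the counting. The prefixes $U'=\{0,\dots,k\}$, $k=0,\dots,N$, give $N+1$ states; since adjoining a higher level to a Boltzmann distribution strictly raises its mean ($\mean{E}_{k+1}$ is a strict convex combination of $\mean{E}_k$ and $(k+1)E_0>\mean{E}_k$), $p^*_k$ increases strictly from $p^*_0=0$ (the product state $\dyad{0}^{\otimes N}$) to $p^*_N=p_\beta$ (the full Gibbs state $\tauS^{\otimes N}$), which identifies these as the $\criticos$ with $p^*_k\le p_\beta$. Symmetrically the suffixes $U'=\{N-k,\dots,N\}$, $k=0,\dots,N-1$, give $N$ states with $p^*_k$ decreasing strictly from $p^*_0=1$ (the state $\dyad{1}^{\otimes N}$) toward but never reaching $p_\beta$, which are the $\criticos$ with $p^*_k>p_\beta$. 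Stopping the suffix family at $k=N-1$ deletes the only overlap with the prefix family — the unique support that is simultaneously a prefix and a suffix is all of $\{0,\dots,N\}$, i.e.\ the Gibbs state — so there are $(N+1)+N=2N+1$ such states, pairwise distinct by the strict monotonicity within each family. I expect the main obstacle to be precisely this bookkeeping: pinning down the single overlap point and verifying strict monotonicity of $p^*_k$ in each family; once the reversibility criterion $s=1$ has been extracted, everything else follows mechanically from Theorem~\ref{teo:Caracterizacion2D} and Proposition~\ref{prop:pre_reversibilidad}.
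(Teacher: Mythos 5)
Your proposal is correct and follows essentially the same route as the paper: identify $\criticos$ with the condition $s=1$ (equivalently $\gamma=Z$) via Corollary~1, use the interval structure of Theorem~\ref{teo:Caracterizacion2D} to force the support to be a prefix $\{0,\dots,k\}$ or suffix $\{k,\dots,N\}$, and read off $p^*_k$ from the two linear constraints as the thermal mean over that reduced support, with the same $N+1$ plus $N$ counting and the Gibbs state as the single overlap. Your added monotonicity argument for distinctness is a nice explicit supplement; like the paper, you take for granted that the all-$\gamma$ thermal state on each such support is in fact the optimizer at $p^*_k$ (the paper only confirms this implicitly through the later convex-combination construction in \eqref{eq:opt_convex_comb}).
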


From \eqref{eq:Zys} it follows that the values $p^*_k$ for which $\rho(p^*_k)$ is
an $\critico$ satisfies
\begin{equation}
  p^*_k = \frac{1}{N} \frac{\sum_{U_k} mg_m}{\sum_{U_k} g_m}  
  = -\frac{1}{NE_0} \frac{\partial}{\partial \beta} \log Z_k 
\end{equation}
with $U_k=\{ k, k+1, \ldots, N \}$ or $U_k=\{ 0,1,\ldots, k \}$,  with $k = 0, 1,
\ldots, N$, depending on whether $p^*_k\leq p_\beta$ or $p^*_k\geq p_\beta$. 
Notice that in both cases $k=N$ coincides with the Gibbs state.
Therefore the \cwork~for these states is just 
\begin{equation}
  \Wmin (\rho(p^*_k),N) = - \kBT \log \left [ \frac{Z_k}{\ZSN} \right ] .
\end{equation}

From the family of $\criticos$ it is possible to recover the optimal
solution given by Theorem \ref{teo:Caracterizacion2D} for each $p \in [0,1]$. Let us
consider two different $\criticos$ characterized by consecutive values
$p_k^*$, $p_{k+1}^*$:
\begin{equation}
  p_k^* = \frac{1}{N} \frac{\sum_{m=k}^N mg_m}{\sum_{m=k}^N g_m} \quad , \quad
  p_{k+1}^* = \frac{1}{N} \frac{\sum_{m=k+1}^N mg_m}{\sum_{m=k+1}^N g_m}  ,
\end{equation}
and using the same change of variables that was done in \eqref{eq:change_variables} we can define
\begin{equation}
  q^{*,k}_n = \frac{1}{Z_k}\mathbbm{1}_{\{ k \leq n\}}  \quad , \quad
  q^{*,k+1}_n = \frac{1}{Z_{k+1}}\mathbbm{1}_{\{ k+1 \leq n\}},
\end{equation}
where $\mathbbm{1}$ is the indicator function. Given $p \geq p_\beta$, let $k$ be such that $p_k^* < p < p_{k+1}^*$ and $x \in (0,1)$
such that $p=xp_k^* + (1-x)p_{k+1}^*$. For this value of $p$ the optimal solution is
as follows:
\begin{align}
  q_n &= \frac{k-Np}{\sum_{m=k+1}^N (k-m)g_m} \mathbbm{1}_{ \{ k + 1 \leq n \} } +
  \frac{1}{g_k} \frac{\sum_{m=k+1}^N (Np-m)g_m}{\sum_{k=m+1}^N (k-m)g_m}
  \mathbbm{1}_{\{ k=n \} } \nonumber \\
  &= x \frac{1}{\sum_{m=k}^N g_m}\mathbbm{1}_{\{ k \leq n\}} +
  (1-x)\frac{1}{\sum_{m=k+1}^N g_m}\mathbbm{1}_{\{ k+1 \leq n\}} \nonumber \\
  &= xq^{*,k}_n +(1-x)q^{*,k+1}_n.
  \label{eq:opt_convex_comb}
\end{align}
Therefore, using the same convex combination that relates the value $p$ with
its two nearest $\criticos$, $p_k^*$ and $p_{k+1}^*$, the optimal solution
for $p$ can be written as a convex combination of the optimal solutions of
$p_k^*$ and $p_{k+1}^*$.

%


\subsection{Density}

Another useful property that the $\criticos$ satisfy is that they are
dense in the space of states in the following sense. Let us consider the union
of all the states $\mathcal{R}^*(N)$,
\begin{equation}
  \mathcal{R}^* =   \bigcup_{N \in \mathbb{N}} \mathcal{R}^*(N).
\end{equation}
Then, given $\epsilon > 0$, for each $\rho$ there exists $\rho_\epsilon
\in \mathcal{R}^*$ such that $\|\rho - \rho_\epsilon\|_1 < \epsilon	$. To
prove this, we are going to show that the spacing between two consecutive values
$p_{k}^*$ and $p_{k+1}^*$ for which the state $\rho = (1-p)\dyad{0} +
p\dyad{1}$ is an $\critico$ goes to zero as $\mathcal{O}(1/N)$. 
Since the expression for $p^*_k$ includes the factors $g_m$, for  large enough $N$ we can make
use of asymptotic estimations for the binomial coefficients. From the following
lemma we can conclude the main result of this section.

\begin{LemmaSM}[\cite{asymptotic}]
  Let $X$ a random variable with $X \sim Bi(n, p)$, where  $n \in \mathbb{N}$
  and $p \in [0,1]$. The probability that $X \leq m$ is given by
  \begin{equation}
    B_m(n, p) = \sum_{j=0}^m \binom{n}{m} p^j q^{n-j}
  \end{equation}
  with $q=1-p$. Then, for each $0 \leq m \leq np-\phi(n)$ it holds that
  \begin{equation}
    B_m(n, p) \approx \frac{1}{1-r}\binom{n}{m}p^mq^{n-m}
  \end{equation}
  where $r=qm/(p(n+1-m))$ and $\phi(n)=\mathcal{O}(\sqrt{n})$.
  \label{lema:binomial}
\end{LemmaSM}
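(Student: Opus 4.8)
The plan is to estimate the binomial tail $B_m(n,p)=\sum_{j=0}^{m} b_j$, with $b_j=\binom{n}{j}p^{j}q^{\,n-j}$, by comparison with a geometric series built from the ratio of consecutive terms (one may assume $m\to\infty$, since for bounded $m$ the sum has $\mathcal{O}(1)$ terms and the estimate is trivial). The elementary identity $b_{j-1}/b_j=\frac{j}{\,n-j+1\,}\cdot\frac{q}{p}$ shows this ratio is strictly increasing in $j$; in particular $b_{m-1}/b_m=\frac{q m}{p(n+1-m)}=r$ \emph{exactly}, and $b_{j-1}/b_j\le r$ for every $j\le m$. The hypothesis $m\le np-\phi(n)$ forces $r<1$, with $1-r=\frac{p(n+1)-m}{p(n+1-m)}>0$, and the growth of $\phi$ controls how slowly $1-r$ decays. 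Since $b_j$ increases up to $j\approx np$, the last term $b_m=\binom{n}{m}p^{m}q^{\,n-m}$ is the largest in the sum, so one only has to show $B_m=\tfrac{b_m}{1-r}\,(1+o(1))$. The upper bound here is exact: telescoping gives $b_{m-k}/b_m=\prod_{i=1}^{k}(b_{m-i}/b_{m-i+1})\le r^{k}$ for $0\le k\le m$, hence $B_m=\sum_{k=0}^{m}b_{m-k}\le b_m\sum_{k\ge 0}r^{k}=\tfrac{b_m}{1-r}$.

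For the matching lower bound I would keep only the top $K+1$ terms, $B_m\ge\sum_{k=0}^{K}b_{m-k}$, with a cutoff $K=K(n)\to\infty$, $K\le m$, to be chosen. For $k\le K$ each factor $\frac{b_{m-i}}{b_{m-i+1}}=\frac{m-i+1}{\,n-m+i\,}\cdot\frac{q}{p}$ differs from $r$ by relative error $\mathcal{O}(i/m+i/(n-m))$, so $\frac{b_{m-k}}{b_m}\ge r^{k}\bigl(1-\mathcal{O}(k^{2}/m+k^{2}/(n-m))\bigr)$ uniformly in $k\le K$; summing and using $\sum_{k\ge 0}k^{2}r^{k}=\mathcal{O}((1-r)^{-3})$ gives
\[
  \frac{B_m}{b_m}\;\ge\;\frac{1-r^{K+1}}{1-r}\;-\;\mathcal{O}\!\Bigl(\tfrac{1}{(1-r)^{3}}\bigl(\tfrac1m+\tfrac1{n-m}\bigr)\Bigr).
\]
It then remains to choose $K$ with $K(1-r)\to\infty$, forcing $r^{K+1}\to 0$, while keeping $K$ well below $\sqrt{\min(m,n-m)}$ so the per-factor expansions remain valid; the stated growth condition on $\phi$ is exactly what makes this window non-empty, and together with the upper bound it yields $B_m/b_m\to 1/(1-r)$.

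The step I expect to be the main obstacle is this lower bound, i.e.\ reconciling the two competing demands on $K$: it must be large enough for the truncated geometric sum to have recovered essentially all of $1/(1-r)$, yet small enough that the corrections accumulated from the ratios $\frac{m-i+1}{n-m+i}\cdot\frac{q}{p}$ drifting below $r$ remain negligible; making this trade-off quantitative is precisely where the hypothesis $m\le np-\phi(n)$ is used. The upper bound and the ratio computation are routine. I would also remark that the density argument in this section needs only the logarithmic consequence $\log B_m=\log b_m-\log(1-r)+o(1)$, which is far more forgiving and follows from the exact upper bound together with the crude lower bound $B_m\ge b_m$.
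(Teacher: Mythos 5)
The paper never proves this lemma --- it is imported verbatim from \cite{asymptotic} --- so your argument stands on its own rather than against an in-paper derivation, and its skeleton is indeed the standard route to this estimate: the term ratio $b_{j-1}/b_j=\frac{jq}{(n-j+1)p}$ is increasing in $j$ and equals $r$ exactly at $j=m$, so $b_{m-k}\le r^k b_m$ and the upper bound $B_m\le b_m/(1-r)$ is exact; that part is correct. The genuine gap is in the closing step of the lower bound, exactly where you flagged the difficulty, and it cannot be closed under the hypothesis as printed. Your displayed estimate requires, besides $r^{K+1}\to 0$, that $(1-r)^{-2}\bigl(\tfrac1m+\tfrac1{n-m}\bigr)\to 0$; since $1-r=\frac{p(n+1)-m}{p(n+1-m)}\approx\frac{np-m}{p(n-m)}$, for $m$ of order $np$ this amounts to $np-m\gg\sqrt{npq}$. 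The stated condition, $m\le np-\phi(n)$ with $\phi(n)=\mathcal O(\sqrt n)$, is an \emph{upper} bound on $\phi$ and supplies no growth at all: it permits $np-m\asymp\sqrt n$ and even $np-m$ bounded. So the sentence ``the stated growth condition on $\phi$ is exactly what makes this window non-empty'' is the step that fails: the window $1/(1-r)\ll K\ll\sqrt{\min(m,n-m)}$ is non-empty only when $(np-m)/\sqrt n\to\infty$ (up to logarithmic refinements), which the hypothesis does not deliver.

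Nor is this a technicality a sharper computation could remove, because the asserted equivalence is itself false in part of the literal hypothesis region: for $m=np-c\sqrt{npq}$ with $c$ fixed (allowed, since $\phi(n)=c\sqrt{npq}=\mathcal O(\sqrt n)$), the CLT gives $B_m\to\Phi(-c)$ while the local limit theorem gives $b_m/(1-r)\to\varphi(c)/c$, and these differ for every finite $c$; for bounded $np-m$ the discrepancy is even unbounded. The lemma must therefore be read as in \cite{asymptotic}, i.e.\ as an estimate with vanishing relative error only when $np-m$ grows faster than $\sqrt{n}$ (which is also the regime in which it is invoked in Proposition~\ref{prop:densidad}); under that corrected hypothesis your lower-bound bookkeeping does close and the proof is complete. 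One caveat on your final remark: the crude bounds $b_m\le B_m\le b_m/(1-r)$ give only $\log B_m=\log b_m+\mathcal O\bigl(\log\frac{1}{1-r}\bigr)$, not $\log b_m-\log(1-r)+o(1)$, and since $\log\frac{1}{1-r}$ can be of order $\log N$ in the density argument (which moreover differentiates the approximation with respect to $\beta$), it is not immediate that this weaker logarithmic statement suffices there; that would need to be checked rather than asserted.
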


\begin{PropositionSM}[Density]
  Let $N$ be the number of correlated copies of the system and $\rho =
  (1-p)\dyad{0} + p\dyad{1}$ the  reduced state. Given two consecutive
  values $p^*_{k}$ and $p^*_{k+1}$ associated to two $\criticos$ such
  that $|k-Np_\beta| \geq \phi(n)$, where $\phi(n)=\mathcal{O}(\sqrt{n})$, it
  follows
  \begin{equation}
    N (p_{k+1}^* - p_{k}^*) \xrightarrow{N \to \infty} 1.
    \label{eq:difpcriticos2}
  \end{equation}
  \label{prop:densidad}
\end{PropositionSM}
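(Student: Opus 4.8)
The plan is to reduce $p^*_k$ to a ratio of binomial tail probabilities, estimate each tail with Lemma~\ref{lema:binomial}, and then evaluate the discrete difference $p^*_{k+1}-p^*_k$ explicitly. First I would use the $m\leftrightarrow N-m$ symmetry relating the two families of $\criticos$ of Proposition~\ref{prop:caracterizacionRestrella} to assume $p^*_k\le p_\beta$, so that $U_k=\{0,1,\dots,k\}$. Writing $p_\beta=(1+e^{\beta E_0})^{-1}$ one has $g_m=\frac{1}{\ZSN}\binom Nm e^{-\beta mE_0}=\binom Nm p_\beta^m(1-p_\beta)^{N-m}$, i.e.\ $g_m$ is the $\mathrm{Bin}(N,p_\beta)$ weight, so $\sum_{m=0}^{k}g_m=B_k(N,p_\beta)$. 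Applying the identity $m\binom Nm=N\binom{N-1}{m-1}$ to the numerator of $p^*_k=\frac1N\,\frac{\sum_{m=0}^k m g_m}{\sum_{m=0}^k g_m}$ and reindexing gives the compact form
\begin{equation*}
  p^*_k=p_\beta\,\frac{B_{k-1}(N-1,\,p_\beta)}{B_k(N,\,p_\beta)} .
\end{equation*}

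Next I would feed both tails into Lemma~\ref{lema:binomial}: the hypothesis $|k-Np_\beta|\ge\phi(N)$ with $k<Np_\beta$ places $k$ strictly in the left tail, so the lemma applies to $B_k(N,p_\beta)$ and, for $N$ large, to $B_{k-1}(N-1,p_\beta)$ as well. Replacing each tail by $\frac{1}{1-r}$ times its top term, simplifying the binomial ratio through $\binom{N-1}{k-1}=\frac kN\binom Nk$, and combining the two geometric-ratio factors $\frac{1-r_k}{1-r'_{k-1}}$ into a single rational function, I expect to arrive at
\begin{equation*}
  p^*_k=\frac kN\cdot\frac{p_\beta(N+1)-k}{\,p_\beta N+1-k\,}\,\bigl(1+o(1)\bigr).
\end{equation*}
Setting $k=Np_\beta-t$, so that $t=|k-Np_\beta|\ge\phi(N)$, a short algebraic manipulation of this expression at $k$ and at $k+1$ should collapse the difference to
\begin{equation*}
  N\bigl(p^*_{k+1}-p^*_k\bigr)=1-\frac{(1-p_\beta)\,(Np_\beta+1)}{t(t+1)}+o(1).
\end{equation*}
Since $t=|k-Np_\beta|\ge\phi(N)$ grows without bound --- and is of order $N$ over the bulk of the tail --- the correction term vanishes and $N(p^*_{k+1}-p^*_k)\to1$.

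The delicate point, and the step I expect to be the real obstacle, is the error bookkeeping in the middle display. Lemma~\ref{lema:binomial} controls each tail only up to a multiplicative factor $1+o(1)$, whereas $N(p^*_{k+1}-p^*_k)$ is a difference of terms of size $\Theta(N)$, so a crude estimate will not deliver an $o(1)$ remainder. What has to be shown is that the relative error attached to $p^*_k$ is slowly varying in $k$, so that the errors at $k$ and $k+1$ cancel to order $o(1/N)$ --- equivalently, that the subleading corrections in Lemma~\ref{lema:binomial} are uniform over the relevant band of indices. I would also separately verify the boundary indices ($p^*_0=0$, and $k$ close to $Np_\beta$) and confirm that the mirror family with $p^*_k>p_\beta$ is handled verbatim after the substitution $m\mapsto N-m$. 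Once this uniformity is established, the two displayed asymptotics are rigorous and forming the difference is a routine calculation.
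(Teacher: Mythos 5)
Your reduction $p^*_k = p_\beta\,B_{k-1}(N-1,p_\beta)/B_k(N,p_\beta)$ is correct, and the algebra producing the main term $\frac kN\cdot\frac{p_\beta(N+1)-k}{p_\beta N+1-k}$ and then $1-\frac{(1-p_\beta)(Np_\beta+1)}{t(t+1)}$ checks out. However, the step you yourself flag as ``the real obstacle'' is a genuine gap, not a routine verification, and Lemma~\ref{lema:binomial} as stated cannot close it. The lemma controls each tail only up to an unquantified multiplicative factor $1+o(1)$, so your asymptotic formula pins down $p^*_k$ only up to an additive $o(1)$, while the difference you must resolve is of size $1/N$. You would need the relative errors at $k$ and at $k+1$ to agree to $o(1/N)$, i.e.\ a quantitative next-order expansion of the binomial tail with error bounds uniform in $k$ over the relevant band --- information supplied neither by the lemma nor by your argument. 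As written, the final subtraction is therefore not justified.

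The paper's proof avoids this catastrophic cancellation altogether by differencing exactly \emph{before} approximating: writing $Z_k=\sum_{m\ge k}\binom Nm e^{-\beta mE_0}$ one has the exact identity $p^*_k=-\frac1{NE_0}\partial_\beta\log Z_k$, hence $N(p^*_{k+1}-p^*_k)=\frac1{E_0}\,\partial_\beta\log\bigl[1+g_k/\textstyle\sum_{m\ge k+1}g_m\bigr]$, so the tail lemma is applied only once, to a single order-one ratio (estimated as $e^{\beta E_0}(k+1)/(N-k)$), whose $\beta$-derivative is exactly $E_0$; the only point left to argue is that the approximation may be differentiated in $\beta$, which the paper handles by monotonicity. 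No subtraction of two nearly equal asymptotic expressions ever occurs. Incidentally, your explicit main term is instructive: it shows the correction is of order $Np_\beta(1-p_\beta)/t^2$, so convergence to $1$ genuinely requires $|k-Np_\beta|/\sqrt N\to\infty$, not merely $|k-Np_\beta|\ge\phi(N)$ with $\phi(N)=\mathcal O(\sqrt N)$; if you completed the uniform error analysis, your route would yield a sharper quantitative statement than the paper's argument makes explicit --- but as it stands the decisive estimate is missing, so the proposal does not yet constitute a proof.
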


\begin{proof}
From Lemma \ref{lema:binomial} it follows
\begin{align}
  \sum_{m=k}^N g_m &= \sum_{m=k}^N \frac{1}{\ZSN}\binom{N}{m}e^{-\beta m
  E_0} \nonumber \\
  &= \sum_{m=k}^N \binom{N}{m} \left( \frac{e^{-\beta E_0}}{\ZS} \right)^m
  \left( \frac{1}{\ZS} \right)^{N-m} \nonumber \\
  &= \sum_{m=0}^{N-m_0} \binom{N}{m} \left( \frac{1}{\ZS} \right)^m \left(
  \frac{e^{-\beta E_0}}{\ZS} \right)^{N-m} \nonumber \\
  &\approx \frac{1}{1-r} \binom{N}{k} \left( \frac{e^{-\beta E_0}}{\ZS} \right)^{k} \left(
  \frac{1}{\ZS} \right)^{N-k} \nonumber \\
  &= \frac{1}{1-r} g_{k},
\end{align}
where $r = e^{-\beta E_0}(n-k)/(k+1)$. Then,
\begin{equation}
  1 + \frac{g_{k}}{\sum_{m=k+1}^N g_m} \approx e^{\beta E_0}
  \frac{k+1}{N-k}.
  \label{eq:aproxProp}
\end{equation}
On the other hand, we for $p^*_k>p_\beta$ we can write:
\begin{equation}
  p_{k}^* = \frac{1}{N}\frac{\sum_{m=k}^N m \binom{n}{m}e^{-\beta
  mE_0}}{\sum_{m=k}^N \binom{n}{m}e^{-\beta mE_0}} = - \frac{1}{E_0}
  \frac{\partial}{\partial \beta} \log \sum_{m=k}^N \binom{n}{m}e^{-\beta m
  E_0}.
\end{equation}
Finally,
\begin{align}
  N (p_{k+1}^* - p_{k}^*) &= \frac{1}{E_0} \frac{\partial}{\partial \beta}
  \log \left[ \frac{\sum_{m=k}^N g_m}{\sum_{m=k+1}^N g_m}  \right] \nonumber \\
  &= \frac{1}{E_0} \frac{\partial}{\partial \beta} \log \left[ 1 +
  \frac{g_{k}}{\sum_{m=k+1}^N g_m} \right] \nonumber \\
  & \approx \frac{1}{E_0} \frac{\partial}{\partial \beta} \left[ \beta E_0 +
  \log \frac{k+1}{N-k} \right] = 1,
\end{align}
where we use that the approximation of \eqref{eq:aproxProp} holds for the
derivative. This follows from the fact that both terms in \eqref{eq:aproxProp} are monotones in $\beta$ and $k$. 
\end{proof}


\subsection{Quasi-thermal states}

In Fig.~2(b) of the main text we can observe that
there is a set of states for which the \cwork~of $N$
copies coincides with the work of formation of a single copy. Since these states
are concentrated around the Gibbs state of the system, we will denote them as
\deit{quasi-thermal states}. For a fixed value of $N$, let $p^*_- , p^*_+ \in
[0,1]$ be the values associated to $\criticos$ given by
\begin{align}
  p^*_{-} &= \frac{1}{N}\frac{\sum_{m=0}^{N-1} mg_m}{\sum_{m=0}^{N-1}g_m} =
  \frac{p_\beta - e^{\beta NE_0}/\ZSN}{1-e^{\beta NE_0}/\ZSN}, \\
  p^*_+ &= \frac{1}{N}\frac{\sum_{m=1}^N mg_m}{\sum_{m=1}^N g_m} =
  \frac{p_\beta}{1-g_0} .
\end{align}
Notice that the length of the interval $[p_-^* , p_+^*]$ decreases
exponentially with the number of copies $N$. 
\begin{PropositionSM}[quasi-thermal states]
  For each local state $\rho = (1-p)\dyad{0}+p\dyad{1}$, with $p \in [p^*_{-},
  p^*_+]$, it holds
  \begin{equation}
    \Wmin(\rho, N) = \Wform(\rho) .
  \end{equation}
  \label{prop:cuasitermico}
\end{PropositionSM}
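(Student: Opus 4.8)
The plan is to show the two-sided inequality $\Wmin(\rho,N)\le\Wform(\rho)$ and $\Wmin(\rho,N)\ge\Wform(\rho)$ for $p\in[p^*_-,p^*_+]$. The lower bound is immediate and general: the product state $\rho^{\otimes N}\in\mathcal C(\rho,N)$ has $\Wform(\rho^{\otimes N})=N D_\infty(\rho\|\tauS)\kBT$; but more to the point, any feasible $\rho^{(N)}$ has a marginal equal to $\rho$, and monotonicity of $D_\infty$ under partial trace gives $D_\infty(\rho^{(N)}\|\tauS^{\otimes N})\ge D_\infty(\rho\|\tauS)$, hence $\Wmin(\rho,N)\ge\Wform(\rho)$. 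So the content is entirely in the upper bound, i.e. exhibiting a feasible correlated state whose single-shot work of formation is no larger than $\Wform(\rho)=\kBT\log\max_m\{p_m e^{\beta mE_0}\}$ — which for $p\le p_\beta$ equals $\kBT\log[(1-p)/(1-p_\beta)]$ (the max is attained at $m=0$) and for $p\ge p_\beta$ equals $\kBT\log[p\, e^{\beta E_0}/p_\beta]=\kBT\log[p\,\ZS]$ (the max is at $m=N$... wait, more carefully, $p_m=p\binom{N}{m}\!/\!\binom{N}{m}$ — let me restate: for the single copy, $\Wform(\rho)=\kBT\log\max\{(1-p)\ZS,\ p\,\ZS\,e^{\beta E_0}/\,\dots\}$, the $D=1$ version, the max of $(1-p)/(1-p_\beta)$ and $p/p_\beta$).

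The key idea is that the endpoints $p^*_-$ and $p^*_+$ are themselves $\criticos$ given by Proposition~\ref{prop:caracterizacionRestrella} with $k=N-1$ and $k=1$ respectively, so by Corollary~1 (Theorem~1 in the main text) their optimal states are \emph{thermal on a reduced support} — for $p^*_+$ the support is $\mathcal E^\rho_N=\{E_0,2E_0,\dots,NE_0\}$ (omitting $m=0$), and for $p^*_-$ it is $\{0,E_0,\dots,(N-1)E_0\}$ (omitting $m=N$). Compute $\Wmin$ for each of these via $\Wmin=\kBT\log[\ZSN/Z_k]$ and check directly that it equals $\Wform(\rho(p^*_\pm))$: for $p^*_+$, $Z_1=\ZSN-1$ so $\Wmin=\kBT\log[\ZSN/(\ZSN-1)]$, and since $p^*_+=p_\beta/(1-g_0)$ one verifies $p^*_+\,\ZS=\dots$ matches; for $p^*_-$ similarly $Z_{N-1}=\ZSN-e^{-\beta NE_0}$ and $\Wform(\rho(p^*_-))=\kBT\log[(1-p^*_-)\ZS]$ matches after substituting the explicit formula for $p^*_-$. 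Then for a generic $p\in(p^*_-,p^*_+)$ I would use the convex-combination structure from Eq.~\eqref{eq:opt_convex_comb}: writing $p=xp^*_- +(1-x)p^*_+$ — but these are not \emph{consecutive} $\criticos$ in general (there are others between them only if $N$... actually for $p\in[p^*_-,p^*_+]$ around the Gibbs state, $p^*_-$ and $p^*_+$ \emph{are} consecutive, flanking $p_\beta$), the optimal $q$ for $p$ is $x\,q^{*,N-1}+(1-x)\,q^{*,1}$, which has $\|q\|_\infty=\max\{x/Z_{N-1},(1-x)/Z_1\}$, and one checks this equals $\max_m p_m e^{\beta mE_0}/(\text{norm})$, i.e. exactly $\Wform(\rho)/\kBT$ exponentiated.

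The main obstacle I anticipate is the bookkeeping in identifying which $\criticos$ flank a given $p$ near $p_\beta$ and confirming they are the \emph{adjacent} ones (so that Eq.~\eqref{eq:opt_convex_comb} applies verbatim), together with the somewhat delicate verification that the infinity-norm of the convex-combined $q$ reproduces $\Wform(\rho)$ rather than something strictly smaller — one must rule out that mixing creates a state with \emph{lower} single-shot cost than $\Wform(\rho)$, which the general lower bound from the first paragraph does, closing the argument. Concretely: since $\|x q^{*,N-1}+(1-x)q^{*,1}\|_\infty = \max\{x/Z_{N-1},\,(1-x)/Z_1\}$ and one of these equals the relevant $p_m e^{\beta m E_0}/\text{(appropriate normalization)}$ by the explicit linear dependence of $p$ on $x$, the upper bound $\Wmin(\rho,N)\le\Wform(\rho)$ follows, and combined with the universal lower bound we get equality on the whole interval $[p^*_-,p^*_+]$.
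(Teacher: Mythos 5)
Your data-processing lower bound $\Wmin(\rho,N)\ge\Wform(\rho)$ is correct (and holds for all $p$; the paper never states it, since it computes the optimum exactly), and your endpoint checks at $p^*_\pm$ match the paper's. The gap is in the interior of the interval: $p^*_-$ and $p^*_+$ are \emph{not} consecutive $\criticos$. The Gibbs state $p_\beta$ is itself one of the $2N+1$ $\criticos$ of Proposition~\ref{prop:caracterizacionRestrella} (the case $k=N$ there) and lies strictly between them, so \eqref{eq:opt_convex_comb} does not apply to the pair $(p^*_-,p^*_+)$. Consequently the state you exhibit, $q=x\,q^{*,N-1}+(1-x)\,q^{*,1}$ with $p=xp^*_-+(1-x)p^*_+$, is feasible but not optimal: its infinity norm is $x/(1-g_N)+(1-x)/(1-g_0)$ (the supports overlap on $m=1,\dots,N-1$, so it is a sum, not the max you wrote), i.e.\ a single linear chord joining the endpoint values, whereas $e^{\beta\Wform(\rho)}=\max\{(1-p)\ZS,\,p\,e^{\beta E_0}\ZS\}$ is piecewise linear with a kink at $p_\beta$, where it equals $1$. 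A chord over a strictly kinked convex function lies strictly above it in the interior --- e.g.\ at $p=p_\beta$ your candidate has norm $>1$ while $\Wform(\tauS)=0$ --- so your construction only yields $\Wform(\rho)\le\Wmin(\rho,N)\le(\text{chord})$, not the claimed equality.

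The repair is to split the interval at $p_\beta$ and use the genuinely consecutive pairs $(p_\beta,p^*_+)$ and $(p^*_-,p_\beta)$. For $p\in[p_\beta,p^*_+]$, \eqref{eq:opt_convex_comb} gives the optimal $q=x'q^{*,0}+(1-x')q^{*,1}$ with $q^{*,0}\equiv 1$ the Gibbs solution, whose norm $x'+(1-x')/(1-g_0)$ is linear in $p$ and agrees with $p\,e^{\beta E_0}\ZS$ at both endpoints (value $1$ at $p_\beta$, value $1/(1-g_0)=p^*_+e^{\beta E_0}\ZS$ at $p^*_+$), hence coincides with it on the whole subinterval; the subinterval $[p^*_-,p_\beta]$ is handled symmetrically. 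This endpoint-equality-plus-linearity argument is precisely the paper's proof; once you use the correct consecutive $\criticos$, your separate lower bound becomes redundant (though harmless).
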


\begin{proof}
The work of formation of the state $\rho_+ = (1-p^*_+)\dyad{0}+p^*_+ \dyad{1}$
is
\begin{equation}
  \Wform(\rho_+) = \kBT \log \left( \max \{ p_+^* e^{\beta E_0}\ZS ,
  (1-p^*_+)\ZS \} \right) 
  = \kBT \log \left( p_+^* e^{\beta E_0}\ZS \right)
  \label{eq:quasi_workFormation}
\end{equation}
where we use that $p_\beta < p^*_+$. On the other hand, the \cwork~of
$N$ copies of $\rho_+$ is
\begin{align}
  \Wmin (\rho_+, N) &= \kBT \log\left(\frac{1}{\sum_{m=1}^N g_m}\right)  \nonumber \\
  &= \kBT \log \left(\frac{1}{1-g_0}\right) \nonumber\\
  &	= \kBT \log \left(\frac{p_+^*}{p_\beta}\right) \nonumber \\
  &= \kBT \log \left(p_+^* (1+e^{\beta E_0})\right) \nonumber \\ 
  &= \kBT \log \left(p_+^* e^{\beta E_0}\ZS\right) ,
\end{align}
which coincides with \eqref{eq:quasi_workFormation}. Without the term $\kBT
\log$, the \cwork~is linear between $p_\beta$ and $p^*_+$.
Therefore, in light of \eqref{eq:opt_convex_comb} the same result is true for
each $p \in [p_\beta, p_+^* ]$. In the same way we can prove the proposition for $p
\in [p_-^*, p_\beta]$.
\end{proof}


\section{Thermodynamic limit}

In this section we focus on recovering the thermodynamic limit from the states
that minimize the \cwork. We will show that for any
reduced state $\rho$, when the number of copies $N$ goes to infinity,
we recover the classical result for the work of formation $\Wmin (\rho , N)/N \approx \Delta F$, i.e. the
work of formation coincides with the standard free energy difference.
\figref{fig:wformconverg}-a) shows the correlated work of
formation per copy $\Wmin / N$ for different states $\rho$. On the other hand,
in \figref{fig:wformconverg}-b) we show the \cwork~for different numbers of copies $N$ 
of the system as a function of $p$. 
We can see how the \cwork~per copy converges asymptotically to the standard free energy difference.

\begin{figure*}[tb]
  \centering
    \includegraphics{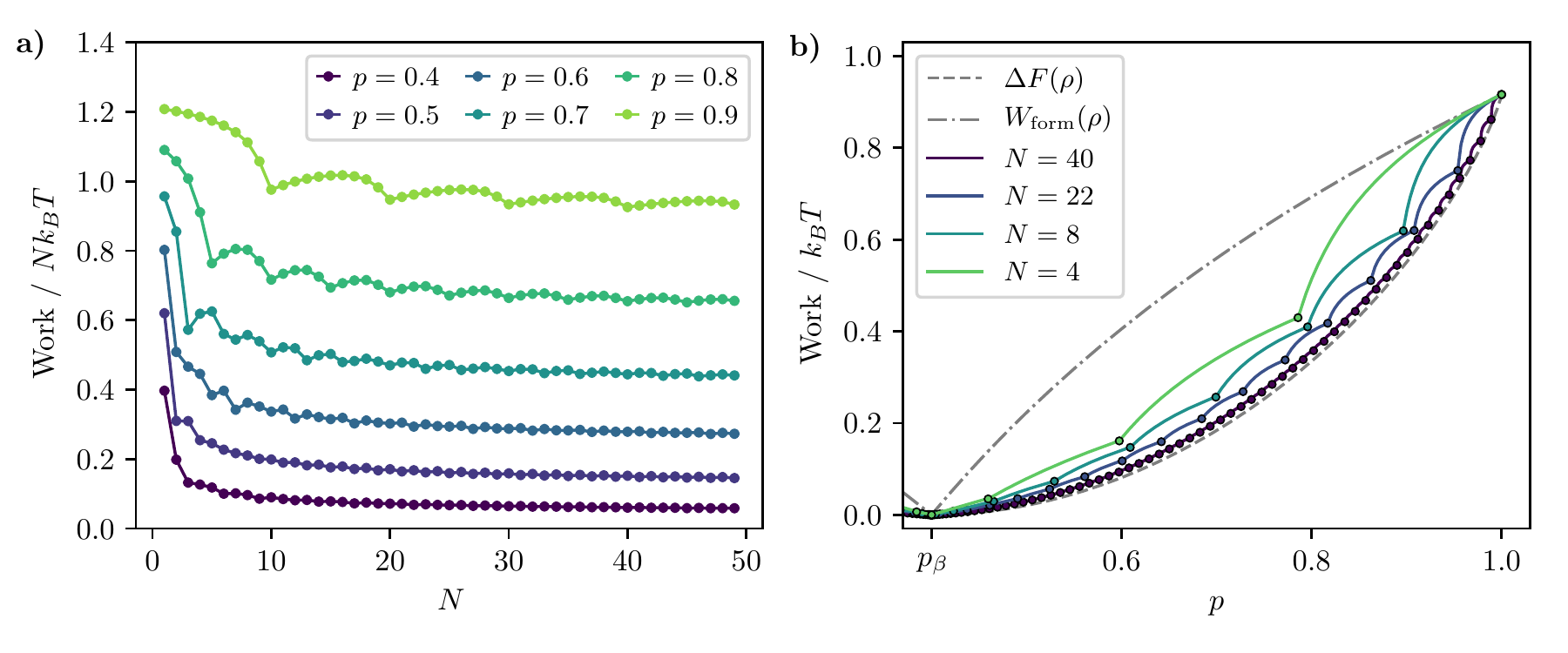}
  \caption{\textbf{Convergence of the \cwork~ as the number of copies increases}. Panel (a) shows the
    \cwork~ $\Wmin(\rho, N)/N$ per copy for different local qubit states
    (parametrized by the excited state probability $p$) as a function of the
    number of copies $N$. We see how the work of formation converges towards
    the standard work of macroscopic thermodynamics $\Delta F(\rho)$ as $N$
    increases. Panel (b) shows the \cwork~ per copy $\Wmin(\rho, N)/N$ for
    different number of copies $N$ as a function of the local qubit state
    (parametrized by the excited state probability $p$). We see how the
    correlated work of formation per copy converges to the standard free energy
    difference. The asymptotic value coincides with the standard work of
    macroscopic thermodynamics $\Delta F(\rho)$ and it is far from the work of
    formation of a single copy. \label{fig:wformconverg}}
\end{figure*}

Let's see how to analytically prove these results in the thermodynamic limit. Since the $\criticos$ 
provide a complete characterization of the \cwork~for each
state and in addition they are dense, it is enough to prove the limit just for the
$\criticos$. Every $\critico$ is of the form $\rho^* = (1-p^*)
\dyad{0} + p^* \dyad{1}$, with
\begin{equation}
  p^* = \frac{1}{N}\frac{\sum_{m \in U} mg_m}{\sum_{m \in U} g_m} .
\end{equation}
Since $U$ is an interval of consecutive energies and the median energy of the
$N$ copies is $Np^*$, we necessarily have $\lceil Np^* \rceil \in U$ or $\lfloor
Np^* \rfloor \in U$. For simplicity, let us just consider that $Np^* \in U$ (the
other case can be treated analogously). Then
\begin{align}
  \Wmin (\rho^*,N) &= \kBT \log \left( \frac{1}{\sum_{m \in U} g_m} \right) \nonumber \\
  &\leq \kBT \log \left( \frac{1}{g_{Np^*}} \right) \nonumber \\
  &= \kBT \log \left(
  \frac{\ZSN e^{\beta Np^* E_0}}{\binom{N}{Np^*}} \right) \nonumber \\
  &= Np^* E_0 + N\kBT \log \ZS - \kBT \log \binom{N}{Np^*}.
\end{align}
Using Stirling's approximation for the binomial coefficients we
have
\begin{equation}
  \log \binom{N}{Np^*} = N S(p^*) + \mathcal{O}(\log N),
\end{equation}
where $S(\cdot)$ is the binary Shannon entropy. On one hand we have
\begin{align}
  \Wmin (\rho^*, N) &\leq Np^* E_0 - \kBT N S(p^*) + \kBT \log \ZS +
  \mathcal{O}(\log N) \nonumber \\
  &= N \left[ F(\rho^*) - F(\tau) \right] +  \mathcal{O}(\log N) \nonumber \\
  &= N \DF(\rho^*) +  \mathcal{O}(\log N) .
\end{align}
On the other hand, it is easy to see that $N\DF(\rho^*) \leq \Wmin (\rho^*, N)$.
Since $\rho^{(N)}_\text{min}$ is reversible and due to the subadditivity of entropy we have
\begin{align}
    \Wmin (\rho^* , N) &= \DF(\rho^{(N)}_\text{min}) \nonumber \\
  &= \tr [\HS^{\otimes N}\rho^{(N)}_\text{min}] - \kBT S(\rho^{(N)}_\text{min}) + \kBT \log
  \ZSN \nonumber \\
  &\geq N \tr [\HS \rho^*] - N\kBT S(\rho^*) + N\kBT \log \ZS \nonumber \\
  &= N \DF(\rho^*) .
  \label{eq:capitulo3cotaW1Winf}
\end{align}
Finally we have that the \cwork~per copy converges to the
 standard free energy difference
\begin{equation}
  \DF(\rho^*) \leq \frac{\Wmin (\rho^*, N)}{N} \leq \DF(\rho^*) +
  \mathcal{O}\left( \frac{\log N}{N} \right) .
\end{equation}
To fully recover the expected thermodynamic limit would furthermore require that
the correlations vanish as $N \to \infty$. We can compute the generalized mutual
information between the correlated state $ \rho^{(N)}_\text{min}$ and the product
state $\rho^{\otimes N}$,
\begin{equation}
  \mathcal{I}\left( \rho^{(N)}_\text{min} \right) = D_1 \left( \rho^{(N)}_\text{min} \|
  \rho^{*\otimes N} \right) = N S(\rho^*) - S(\rho^{(N)}_\text{min}).
\end{equation}
From the reversibility of the $\criticos$ it is possible to rewrite the
mutual information as a function of the \cwork~$\Wmin$ and
the standard free energy difference $\DF$,
\begin{align}
  \mathcal{I}\left(\rho^{(N)}_\text{min}\right) &= \beta [ F(\rho^{(N)}) - NF(\rho^*) ] \nonumber \\
  &= \beta [\DF(\rho^{(N)}) - N\DF(\rho^*)] \nonumber \\
  &= \beta [\Wmin(\rho^*, N) - N\DF(\rho^*)] = \mathcal{O}\left( \log N \right) .
\end{align}
We can then conclude that the amount of correlations scales as $\log N$ and
therefore the correlations per copy vanish, thus the subsystems are weakly
correlated in the macroscopic limit.



\section{Generalization to systems of arbitrary dimension}

In this section we will generalize the above results to systems
of arbitrary finite dimension $D$. As we mentioned before, the analytical
characterization of the minimization problem is quite difficult and requires 
the analytic expression of the degeneracy $g_N(E)$ for each energy
$E$ of the $N$-partite system, which cannot be known for arbitrary Hamiltonians.
However, it can be proven that the optimal solution has the same form as the one given
in Theorem \ref{teo:Caracterizacion2D}. Furthermore, the existence of
quasi-thermal states, their reversibility, density and the thermodynamic limit
still hold for all $D$.

Let us consider the optimization problem presented in
\eqref{eq:minimization.posta}, \eqref{eq:constrain1.posta} and
\eqref{eq:constrain2.posta} for systems of dimension $D$
\begin{align}
  \min_{q} \quad \|q\|_\infty \nonumber \\
  \text{s.t.} \qquad \sum_{E \in \mathcal{E}_N} \frac{g_{N-1}(E-E_d)}{\ZSN}
  e^{-\beta E} q_E &= p_d \quad &\forall d = 1, 2, \ldots, D \\
  q_E &\geq 0 \quad &\forall E \in \mathcal{E}_N. \nonumber
\end{align}
Notice that the form of the optimal solution introduced in Theorem
\ref{teo:Caracterizacion2D} can be written as
\begin{equation}
	q_i = \gamma \mathbbm{1}_{\{ i\in U \}} + s \gamma \mathbbm{1}_{\{i = m^* \}} .
	\label{eq:optimal_2D_indicator}
\end{equation}
Let $A \in \R^{D \times M}$ (where $M = \#(\mathcal{E}_N)$ is the cardinality of
the set of energies of the $N$-partite system) be the matrix where each row
corresponds to an energy $E_d$ of a single system and each column to an energy $E$ of
the $N$-partite system, and the matrix elements are equal to $g_{N-1}(E-E_d) e^{-
\beta E} / \ZSN$. The following theorem proves the natural generalization of
\eqref{eq:optimal_2D_indicator}, where now the optimal solution either realizes
its infinity norm or it is equal to zero except in at most $D-1$ points.

\begin{TheoremSM}
  Consider the optimization problem given by
  \begin{align}
    \min_{q \in \R^M} \quad \|q\|_\infty \nonumber \\
    \text{s.t.} \quad  Aq &= p  \label{eq:minimizar_general_teorema} \\
    q &\geq 0 \nonumber
  \end{align}
  where $p \in \R_{\geq 0}^D$ and $A \in \R_{\geq 0}^{D \times M}$, such that there is at least one feasible solution to the constrains. Then, there
  exists a solution, not necessarily unique, of the form
  \begin{equation}
 	  q_i = \gamma \mathbbm{1}_{ \{ i \in U \} } + \sum_{j=1}^{D-1} \gamma s_{j}
 	  \mathbbm{1}_{ \{i = n_j \} }
    \label{eq:solucion_teorema_general}
 	\end{equation}
 	where $s_{1}, s_{2}, \ldots , s_{D-1} \in [0,1]$; $\gamma \in \R_{> 0}$; $U$ is a set of indexes
 	where $q$ realizes its infinity norm; and  $n_1, n_2, \ldots, n_{D-1} \in \{
 	1,2,3,\ldots , M \}$.
  \label{teo:generalizacion}
\end{TheoremSM}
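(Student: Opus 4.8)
The plan is to recast the minimization as the linearized linear program already written down in \eqref{eq:minimization.posta}–\eqref{eq:constrain2.posta} and to extract the claimed structure from a basic optimal solution, generalizing the perturbation argument used for $D=2$. First I would dispose of the degenerate case $p=0$: feasibility makes $q=0$ an optimal point (the objective $\|q\|_\infty\ge 0$), and $q=0$ has the form \eqref{eq:solucion_teorema_general} with $U=\emptyset$, all $s_j=0$ and any $\gamma>0$. So assume $p\neq 0$, whence every feasible $q$ has $\|q\|_\infty>0$.

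Next, following the linearization, introduce the auxiliary variable $Q$ and slacks $t_i\ge 0$ and consider $\min Q$ subject to $Aq=p$, $q_i+t_i=Q$, and $q,t,Q\ge 0$. By hypothesis this LP is feasible, and it is bounded below by $0$; since the feasible region contains no line, the fundamental theorem of linear programming gives an optimal basic feasible solution $(q^*,t^*,Q^*)$. At any optimal point with $Q^*>0$ one necessarily has $Q^*=\|q^*\|_\infty$ (otherwise lowering $Q$ keeps feasibility and strictly decreases the objective); set $\gamma:=Q^*>0$, so that $U:=\{i:q_i^*=\gamma\}$ is nonempty. Partition the indices into $U$, $Z:=\{i:q_i^*=0\}$, and $I:=\{i:0<q_i^*<\gamma\}$; these are disjoint and exhaustive because $0\le q_i^*\le\gamma$.

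The key step, which I expect to be the crux, is the bound $|I|\le D-1$. This is a counting consequence of the basic-feasible-solution property: the equality system has $D+M$ rows in $2M+1$ variables $\{q_i\},\{t_i\},Q$, so at a basic feasible solution at least $(2M+1)-(D+M)=M+1-D$ of these variables vanish. For $i\in I$ neither $q_i^*$ nor $t_i^*=\gamma-q_i^*$ vanishes, and $Q^*=\gamma\neq 0$, so the vanishing variables are exactly $\{q_i^*:i\in Z\}\cup\{t_i^*:i\in U\}$, numbering $|Z|+|U|=M-|I|$. Hence $M-|I|\ge M+1-D$, i.e. $|I|\le D-1$. (Equivalently one can argue directly on vertices of $\{(q,Q):Aq=p,\ 0\le q_i\le Q,\ Q\ge 0\}\subseteq\mathbb R^{M+1}$: a vertex needs $M+1$ independent active constraint gradients, the active ones being the $\le\operatorname{rank}(A)\le D$ arising from $Aq=p$ together with $q_i=0$ for $i\in Z$ and $q_i=Q$ for $i\in U$, forcing $\operatorname{rank}(A)+|Z|+|U|\ge M+1$ and so $|I|\le\operatorname{rank}(A)-1\le D-1$.) When $\operatorname{rank}(A)<D$ or $|I|<D-1$ the bound only improves; no delicacy there.

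Finally I would read off the form: listing $I=\{n_1,\dots,n_{|I|}\}$ with $s_j:=q_{n_j}^*/\gamma\in(0,1)$, and padding with $s_j=0$ (and $n_j$ arbitrary, the term being vacuous) for $|I|<j\le D-1$ if needed, gives $q_i^*=\gamma\,\mathbbm 1_{\{i\in U\}}+\sum_{j=1}^{D-1}\gamma s_j\,\mathbbm 1_{\{i=n_j\}}$, which is exactly \eqref{eq:solucion_teorema_general}; for $D=2$ this recovers Theorem~\ref{teo:Caracterizacion2D}. Uniqueness is not claimed and need not be addressed. An alternative route that mirrors the $D=2$ proof more literally is a direct perturbation: if $|I|\ge D$, either the columns $\{Ae_i\}_{i\in I}$ are linearly dependent, yielding a feasible direction supported on $I$ along which some $q_i^*$ can be driven to $0$ or to $\gamma$ without raising $\|q^*\|_\infty$, thereby shrinking $|I|$; or they contain a basis of $\mathbb R^D$, in which case decreasing every coordinate in $U$ by a small $\varepsilon>0$ and re-solving the $D$ equations $Aq=p$ for the corresponding $D$ intermediate coordinates (which stay strictly inside $(0,\gamma-\varepsilon)$ for small $\varepsilon$) strictly lowers $\|q^*\|_\infty$, contradicting optimality. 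Either way one lands on $|I|\le D-1$ and the stated form.
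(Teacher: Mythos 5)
Your proposal is correct, but your primary argument is genuinely different from the paper's. The paper works directly on the original problem: it takes an optimal $q^\dag$ minimizing the number $m(p)$ of ``intermediate'' coordinates (neither $0$ nor $\|q^\dag\|_\infty$), assumes $m(p)\ge D$, shows the corresponding $D\times D$ column submatrix $A_M$ cannot be invertible (else an explicit $\epsilon$-perturbation of the maximal entries, compensated through $A_M^{-1}$, would strictly lower the norm), and then uses a kernel vector of $A_M$ to push one intermediate coordinate to $0$ or to the maximum, contradicting the minimality of $m(p)$ --- essentially the two-case perturbation you only sketch as your ``alternative route.'' Your main route instead lifts to the standard-form LP $\min Q$ s.t.\ $Aq=p$, $q_i+t_i=Q$, $(q,t,Q)\ge 0$ and reads the bound $|I|\le D-1$ off the fundamental theorem of linear programming: at an optimal basic feasible solution at most $\mathrm{rank}$-many variables are nonzero, and your count of nonzeros ($Q$, the $q_i$ on $U$, the $t_i$ on $Z$, and both on $I$) is exact, so the counting is sound; the identification $Q^*=\|q^*\|_\infty$ at optimality is also fine, though strictly speaking lowering $Q$ requires simultaneously resetting $t_i=Q-q_i$, a harmless omission. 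What your route buys: it is shorter, leans on a standard polyhedral fact rather than a bespoke contradiction, handles existence of the optimizer and the degenerate case $p=0$ cleanly, and automatically gives the slightly stronger bound $|I|\le \mathrm{rank}(A)-1$ when $A$ is rank-deficient. What the paper's route buys: it is self-contained (no LP machinery beyond convexity), and its argument manifestly never uses any structure of $A$, a fact the authors later exploit verbatim for the asymmetric generalization --- a feature your vertex-counting argument shares, but which is worth stating explicitly if you substitute it.
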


\begin{proof}
Let $\mathcal{A}$ be the set of optimal solutions of
\eqref{eq:minimizar_general_teorema}. Since $\|\cdot\|_\infty$ is a distance and
the constrains $Aq = p$ and $q \geq 0$ define a convex, bounded and closed set in
$\R^M$, it follows that there is at least one solution and then $\mathcal{A}$ is
non-empty. For each $q$ we define
\begin{equation}
  M(q) = \left\{i : q_i \neq 0 \,\land\, q_i \neq \|q\|_\infty \right\},
\end{equation}
and for each $p$ let us consider the quantity
\begin{equation}
  m(p) = \min \biggl\{ |M(q)| : q
  \text{ is a solution of \eqref{eq:minimizar_general_teorema}} \biggr\}.
\end{equation}
Then, the main statement of the Theorem is equivalent to proving that $m(p) \leq
D-1$ for every $p$. Let $q^\dag \in \mathcal{A}$ be such that $|M(q^\dag)|=m(p)$
and assume that $m(p) \geq D$. Additionally, let $\{ i_1, i_2, \ldots , i_D \}
\subset M(q^\dag)$ and consider the matrix
\begin{equation}
  A_M =	\left [
    \begin{array}{c|c|c|c}
      & & &  \\
      a_{i_1} & a_{i_2} & \cdots & a_{i_D} \\
      & & & \\
    \end{array} \right ],
\end{equation}
where $a_{i_1}, \ldots, a_{i_D}$ are the columns $i_1, \cdots, i_D$ of the
matrix $A$. We will now show that $A_M$ cannot be invertible. If $A_M$ were invertible,
then it would be possible to explicitly construct a solution $q^{\dag \dag}$
with $\|q^{\dag \dag}\|_\infty < \|q^\dag\|_\infty$ and given by
\begin{equation}
  q_m^{\dag \dag} = \left\{ \begin{array}{l l}
    \|q^*\|_\infty -\epsilon &
    \text{if } q^\dag_m = \|q^\dag\|_\infty \\
    x_1 & \text{if } m = i_1 \\
    \vdots &\\
    x_D & \text{if } m = i_D \\
    q^\dag_m & \text{otherwise, } \\
  \end{array} \right.
\end{equation}
where $\epsilon > 0$ and $x_1, \ldots , x_D$ are chosen such that the constrains
are satisfied and $x_1, \ldots, x_D < \|q^\dag\|_\infty - \epsilon$. $q^{\dag
\dag}$ is a feasible solution if
\begin{align}
  p = Aq^{\dag \dag} &= A_U \left [ \begin{array}{c}
    \|q^*\|_\infty -\epsilon \\
    \vdots \\
    \|q^*\|_\infty -\epsilon
  \end{array} \right ]   +
  A_M  \left [ \begin{array}{c}
    x_1 \\
    \vdots \\
    x_D
  \end{array} \right ] \nonumber \\
	&= Aq^\dag +
	A_M  \left [ \begin{array}{c}
    x_1 - q_{i_1}^\dag \\
    \vdots \\
    x_D - q_{i_D}^\dag
  \end{array} \right ] -
	\epsilon \sum_{m \in U} a_{m},
\end{align}
where $A_U$ is the restriction of $A$ to the columns $a_{m}$ for which $q^\dag_m =
\|q^\dag\|_\infty$. Since $Aq^\dag = p$ and assuming that $A_M$ is invertible, then 
\begin{equation}
  \left [ \begin{array}{c}
    x_1 - q_{i_1}^\dag \\
    \vdots \\
    x_D - q_{i_D}^\dag
  \end{array} \right ]
  = \epsilon A_M^{-1}  \sum_{m \in U} a_{m} .
	\label{eq:proof_Dgeneral}
\end{equation}
Since the right side of \eqref{eq:proof_Dgeneral} is fixed and $0 < q_{i_j}^\dag <
\|q^\dag\|_\infty$, it follows that there exists $\epsilon > 0$ small enough
such that $0 < x_i < \|q^\dag\|_\infty$. Then $\|q^{\dag \dag}\|_\infty =
\|q^{\dag }\|_\infty - \epsilon < \|q^{\dag}\|_\infty$ and $q^\dag$ cannot be an
optimal solution, which contradicts our hypothesis. Therefore $A_M$ is not
invertible.

Since $A_M$ is not invertible, there exists a non-trivial solution of the
equation $A_M x = 0$. Then, it is possible to define a new solution to $Aq = p$
in the following way
\begin{equation}
	q_m^{\dag \dag} = \left \{ \begin{array}{l l}
	\|q^*\|_\infty  & \text{if } q^\dag_m = \|q^\dag\|_\infty \\
	q^\dag_m + \delta x_1 & \text{if } m = i_1 \\
	\vdots &\\
	q^\dag_m + \delta x_D & \text{if } m = i_D \\
	q^\dag_m & \text{elsewhere. }\\
	\end{array} \right.
\end{equation}
Then it is immediate that there exists $\delta > 0$ and an index $i_k$ such that
$q_{i_k}^{\dag \dag} = 0$ or $q_{i_k}^{\dag \dag} = \|q^\dag\|_\infty$ and $0 <
q^{\dag \dag}_m < \|q^\dag\|_\infty$ for all $m \neq i_k$. Thus, $M(q^{\dag
\dag}) = m(p) - 1$, which contradicts the definition of $m(p)$. Finally we
conclude that $m(p) \leq D-1$ for all $p$.
\end{proof}

\bigbreak

As in the case of $D=2$, Theorem \ref{teo:generalizacion} claims that the
state that minimizes the work of formation is a renormalization of the Gibbs
state over a restricted subset energies, except for at most $D-1$ energy levels that
escape the rule. \figref{fig:WcorrDgeneral} shows the value of the
\cwork~for $D=3$ obtained numerically for each $p = (p_1, p_2, p_3)$, and the
values of $\gamma$, $s_1 \gamma$ and $s_2 \gamma$ for a particular path of states.

\begin{figure*}[tb]
  \centering
  \includegraphics{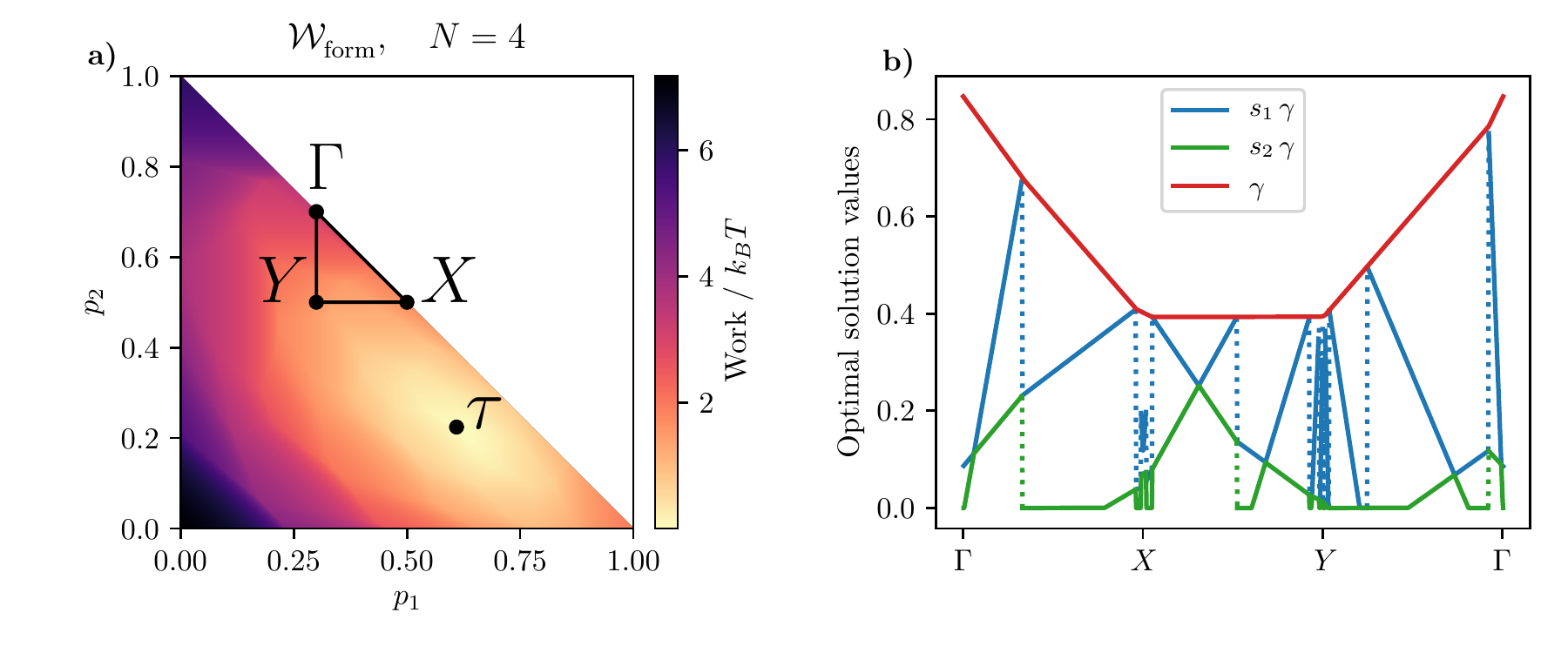}
  \caption{\textbf{Optimal solution for the $D = 3$ case}. In panel (a) we
  show the correlated work of formation $\Wmin$ for $N=4$ copies of a system of
  dimension $D=3$ for every reduced local state of the system. $p_1$, $p_2$ and
  $p_3$ correspond to the occupation of the energy levels $E_1$, $E_2$ and
  $E_3$, respectively. Panel (b) shows the values of $\gamma$, $s_1 \gamma$ and
  $s_2 \gamma$ of the optimal solution given by Theorem
  \ref{teo:generalizacion} for a particular path between states $\Gamma \to X
  \to Y \to \Gamma$. \label{fig:WcorrDgeneral} }
  \end{figure*}

\subsection{$\criticos$}

Here we will show the generalization  of  the $\criticos$ to
arbitrary dimension $D$. From now on, for simplicity, we will index the coordinates of $q$
and the columns of $A$ using the associated energy $E$ of the $N$-partite system.  
We define the set of $\criticos$ as the solutions of the optimization problem
of \eqref{eq:minimizar_general_teorema} which are characterized by a subset of
energies $\mathcal{E} \subset \mathcal{E}_N$ and which take the form
\begin{equation}
	(q_\mathcal{E}^{*})_E = \gamma_\mathcal{E} \mathbbm{1}_{ \{ E \in \mathcal{E} \} },
	\label{eq:r_critico_general}
\end{equation}
where
\begin{equation}
	\gamma_\mathcal{E} = \frac{1}{\norm{\sum_{E \in \mathcal{E}} a_{E}}_1}.
\end{equation}
An equivalent definition of the $\criticos$ is given by the solutions
that appear in Theorem \ref{teo:generalizacion} which also satisfy $s_1 = \ldots =
s_{D-1} = 0$. Each $\critico$ of the form given by \eqref{eq:r_critico_general} has
an associated vector $p^*_\mathcal{E}$ for which $q_\mathcal{E}^*$ is solution
of $Aq^*_\mathcal{E}=p^*_\mathcal{E}$ and $q^*$ is optimum, given by
\begin{equation}
	p^*_\mathcal{E} = \frac{1}{\norm{\sum_{E \in \mathcal{E}} a_E}_1} \sum_{E \in
	\mathcal{E}} a_E.
\end{equation}

Notice that Proposition \ref{prop:pre_reversibilidad} is valid even for systems
of arbitrary dimension $D$. Then it follows that the states
defined in \eqref{eq:r_critico_general} are also reversible. On the other hand, for
$D=2$, the unique subset of energies $\mathcal{E}$ for which
we find $\criticos$, are $\{ E : E \geq kE_0 , k = 0,1,\ldots , N  \}$
and $\{ E : E \leq kE_0 , k = 0,1,\ldots , N \}$. However, for $D>2$ and
arbitrary energies $E_1, E_2, \ldots, E_D$ it is not possible to characterize
the sets $\mathcal{E}$ and therefore we cannot find an analytical expression for
the $\criticos$. However, we can give an implicit construction of them.

\begin{PropositionSM}
	Consider $N$ copies of a system of dimension $D$ and let $p \in \R^D$ be the
	occupation of the reduced state $\rho$ of the system. Then, there exists a state
	of the $N$ copies characterized by a vector $q(p)$ that minimizes the work of
	formation such that there are $p_1^*, p^*_2, \ldots, p^*_D \in \R^D$
	probability vectors associated with $\criticos$ $q(p^*_1), q(p^*_2),
	\ldots, q(p^*_D)$ and
  \begin{align}
    p &= \lambda_1 p_1^* + \lambda_2 p_2^* + \ldots + \lambda_D p^*_D, \\
    q(p) &= \lambda_1 q(p^*_1) + \lambda_2 q(p^*_2) + \ldots + \lambda_D
    q(p^*_D),
    \label{eq:suma_convexa_p}
	\end{align}
	where $\lambda_i\in\R_{\ge0}$ with
	$\lambda_1 + \lambda_2 + \ldots + \lambda_D = 1$.
\end{PropositionSM}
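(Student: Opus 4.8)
The plan is to start from the optimal vector $q(p)$ produced by Theorem~\ref{teo:generalizacion}, split it into a ``staircase'' of renormalized thermal states over a nested family of energy subsets, and then use convexity of the optimal value of \eqref{eq:minimizar_general_teorema} to verify that each layer is genuinely an $\critico$. Concretely, I would fix $q=q(p)$ of the form \eqref{eq:solucion_teorema_general}, so $q_E=\gamma$ on a set $U$ realizing the infinity norm, $q_{n_j}=\gamma s_j$ at the (at most $D-1$) exceptional points $n_j$, and $q_E=0$ otherwise, with $\gamma=\|q\|_\infty>0$ and $s_1,\dots,s_{D-1}\in[0,1]$. After reindexing I may take the $n_j$ distinct, disjoint from $U$, and ordered so that $s_1\ge\dots\ge s_{D-1}$ (indices with $s_j=0$ being dummies). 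Setting $s_0:=1$, $s_D:=0$, I define the nested chain $\mathcal E_0=U\subseteq\mathcal E_1=U\cup\{n_1\}\subseteq\dots\subseteq\mathcal E_{D-1}=U\cup\{n_1,\dots,n_{D-1}\}$ and, for each $i$, let $q^*_{\mathcal E_i}=\gamma_{\mathcal E_i}\mathbbm{1}_{\mathcal E_i}$ be the $\critico$-type vector \eqref{eq:r_critico_general} on support $\mathcal E_i$ with $\gamma_{\mathcal E_i}=\big\|\sum_{E\in\mathcal E_i}a_E\big\|_1^{-1}$, so that $p^*_{\mathcal E_i}:=Aq^*_{\mathcal E_i}=\gamma_{\mathcal E_i}\sum_{E\in\mathcal E_i}a_E$ is a probability vector.

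Then I would record the layer-cake identity $q=\sum_{i=0}^{D-1}c_i\,\mathbbm{1}_{\mathcal E_i}$ with $c_i:=\gamma(s_i-s_{i+1})\ge0$: evaluated on $U$ the sum telescopes to $\gamma(s_0-s_D)=\gamma$, at $E=n_j$ it telescopes to $\gamma(s_j-s_D)=\gamma s_j$, and elsewhere it vanishes, so it reproduces $q$. Setting $\lambda_i:=c_i/\gamma_{\mathcal E_i}\ge0$ this reads $q=\sum_i\lambda_i q^*_{\mathcal E_i}$, and applying $A$ gives $p=\sum_i\lambda_i p^*_{\mathcal E_i}$. Summing all components of the latter and using that $p$ and every $p^*_{\mathcal E_i}$ have component-sum $1$ (these sums are the traces of normalized $N$-partite states) gives $\sum_i\lambda_i=1$. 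Thus $p$ and $q(p)$ appear as the \emph{same} convex combination of the $p^*_{\mathcal E_i}$ and of the $q^*_{\mathcal E_i}$, which is exactly the structure \eqref{eq:suma_convexa_p} --- provided the $q^*_{\mathcal E_i}$ are bona fide $\criticos$.

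For that final point I would use the value function $V(p):=\min\{\|q\|_\infty:Aq=p,\,q\ge0\}$, which is convex in $p$ (a convex combination of feasible vectors for $p_1,p_2$ is feasible for the convex combination of $p_1,p_2$, with infinity norm controlled by convexity of $\|\cdot\|_\infty$). Since $q^*_{\mathcal E_i}$ is feasible for $p^*_{\mathcal E_i}$ we have $V(p^*_{\mathcal E_i})\le\|q^*_{\mathcal E_i}\|_\infty=\gamma_{\mathcal E_i}$; since $q(p)$ is optimal, $V(p)=\|q(p)\|_\infty=\gamma$; and $\sum_i\lambda_i\gamma_{\mathcal E_i}=\sum_i c_i=\gamma$ by the same telescoping. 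Hence
\[
\gamma=V(p)\le\sum_i\lambda_i\,V(p^*_{\mathcal E_i})\le\sum_i\lambda_i\,\gamma_{\mathcal E_i}=\gamma ,
\]
so all inequalities are equalities; the middle one is $\sum_i\lambda_i\big(\gamma_{\mathcal E_i}-V(p^*_{\mathcal E_i})\big)=0$ with nonnegative summands, which forces $V(p^*_{\mathcal E_i})=\gamma_{\mathcal E_i}$ for every $i$ with $\lambda_i>0$. So each such $q^*_{\mathcal E_i}$ attains the minimum of \eqref{eq:minimizar_general_teorema} for its own marginals and is therefore an $\critico$; the terms with $\lambda_i=0$ may be dropped or kept as dummies, yielding the decomposition into at most $D$ $\criticos$ (for $D=2$ this recovers \eqref{eq:opt_convex_comb}).

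The main obstacle is exactly this last squeeze: the staircase construction only manufactures renormalized thermal states on the supports $\mathcal E_i$, and a priori some of them might not be optimal for their own marginals, hence not $\criticos$; the convexity of $V$, together with the telescoping identity $\sum_i\lambda_i\gamma_{\mathcal E_i}=\gamma$, is what pins $V$ down at the $p^*_{\mathcal E_i}$ and closes this gap. Everything else --- the reindexing, the layer-cake telescoping, and the automatic normalization $\sum_i\lambda_i=1$ coming from trace preservation --- is routine bookkeeping.
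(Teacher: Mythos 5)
Your proposal is correct and follows essentially the same route as the paper: the same telescoping (layer-cake) decomposition of the optimal $q$ into renormalized thermal vectors on the nested supports $U \subseteq U\cup\{n_1\} \subseteq \dots$, with the same coefficients $\lambda_i$. The only cosmetic difference is the final verification that each layer is an $\critico$: you phrase it as the equality case of a Jensen-type squeeze for the convex value function $V$, while the paper argues by contradiction, substituting a hypothetically better solution for one layer into the convex combination to beat $\|q\|_\infty$ --- the two arguments are equivalent.
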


\begin{proof}
Let $s_1, s_2, \ldots, s_{D-1}$ be ordered from highest to lowest and with
associated energies $\varepsilon_1, \varepsilon_2, \ldots, \varepsilon_{D-1}$,
respectively. Now we denote by $\mathcal{E}_j$ the set of the first $j$ energies
$\varepsilon_1, \varepsilon_2, \ldots, \varepsilon_j$. Then:
\begin{align}
	q_E &= \gamma_\mathcal{E} \mathbbm{1}_{\{ E \in \mathcal{E} \}} + \sum_{j=1}^{D-1}
	\gamma s_j \mathbbm{1}_{\{ E = \varepsilon_j \}} \nonumber \\
	&= \gamma_\mathcal{E}(1-s_1) \mathbbm{1}_{\{ E \in \mathcal{E} \}} \nonumber \\
	&+ \gamma_\mathcal{E}(s_1-s_2) \mathbbm{1}_{\{ E \in \mathcal{E} \} \cup \{ E \in \mathcal{E}_1 \} }  \nonumber \\
	& \qquad \vdots \nonumber \\
	&+ \gamma_\mathcal{E}(s_{D-2}- s_{D-1}) \mathbbm{1}_{\{ E \in \mathcal{E} \} \cup \{ E \in
	\mathcal{E}_{D-2} \}  } \nonumber \\
	&+ \gamma_\mathcal{E} s_{D-1} \mathbbm{1}_{\{ E \in \mathcal{E} \} \cup \{ E \in
	\mathcal{E}_{D-1} \}  } \nonumber \\
	&= \lambda_1 (q^*_\mathcal{E})_E + \lambda_2 (q^*_{ \mathcal{E} \cup
	\mathcal{E}_1 } )_E + \ldots + \lambda_D (q^*_{ \mathcal{E} \cup
	\mathcal{E}_{D-1} })_E,
\end{align}
where $\lambda_1, \lambda_2, \ldots, \lambda_D$ are all positive values that
satisfy $\lambda_1 + \lambda_2 + \ldots + \lambda_D = 1$ and are given by
\begin{gather}
	\lambda_1 = \gamma_\mathcal{E}(1- s_1) \norm{\sum_{E \in \mathcal{E}} a_E}_1 \quad ,
	\quad \lambda_D = \gamma_\mathcal{E} s_{D-1} \norm{\sum_{E \in \mathcal{E} \cup \mathcal{E}_{D-1} }
	a_E}_1 \quad \text{and} \\
  \lambda_j = \gamma_\mathcal{E} \left( s_{j-1}- s_{j} \right) \norm{\sum_{E \in \mathcal{E} \cup
  \mathcal{E}_{j-1} } a_E}_1 \quad \text{for } j = 2,3, \ldots, D-1.
\end{gather}
Then, each solution in Theorem \ref{teo:generalizacion} can be written
as a linear convex sum of states of the form \eqref{eq:r_critico_general}
\begin{equation}
	q =  \lambda_1 q^*_\mathcal{E} + \lambda_2 q^*_{\mathcal{E}\cup \mathcal{E}_1 } +
	\ldots + \lambda_D q^*_{\mathcal{E} \cup \mathcal{E}_{D-1}} .
	\label{eq:suma_convexa_r_criticos}
\end{equation}
What remains is to prove that each term of the convex sum
in \eqref{eq:suma_convexa_r_criticos} is associated with an $\critico$. Let's
assume that $q^*_\mathcal{E}$ is not an $\critico$. Then there exists $\tilde{q}$ with
$\|\tilde{q}\|_\infty < \|q^*_\mathcal{E}\|_\infty$ and $A\tilde{q} =
Aq^*_\mathcal{E}$. It is then clear that
\begin{equation}
	q^\dag = \lambda_1 \tilde{q} + \lambda_2 q^*_{\mathcal{E}\cup \mathcal{E}_1} +
	\ldots + \lambda_D q^*_{\mathcal{E} \cup \mathcal{E}_{D-1} }
\end{equation}
satisfies $Aq^\dag = A q$. On the other hand,
\begin{align}
	\|q^\dag\|_\infty &\leq \lambda_1 \| \tilde{q} \|_\infty + \ldots +
	\lambda_D \|q^*_{\mathcal{E} \cup \mathcal{E}_{D-1}}\|_\infty \nonumber \\
  &< \lambda_1 \| q^*_\mathcal{E} \|_\infty + \ldots + \lambda_D \|q^*_{\mathcal{E} \cup
  \mathcal{E}_{D-1} }\|_\infty \nonumber \\
	&= \lambda_1 \frac{1}{\norm{\sum_{E \in \mathcal{E}} a_E}_1} + \ldots +
	\lambda_D \frac{1}{\norm{\sum_{E \in \mathcal{E} \cup \mathcal{E}_{D-1} } a_E
	}_1} \nonumber \\
  &= \gamma_\mathcal{E} = \|q\|_\infty ,
\end{align}
that is $\|q^\dag\| < \|q\|_\infty$, which contradicts the definition of $q$. In the same way we deduce that all
$q^*_{\mathcal{E} \cup \mathcal{E}_1}, \ldots, q^*_{\mathcal{E} \cup
\mathcal{E}_{D-1}}$ are $\criticos$.
\end{proof}

\bigbreak

The existence of quasi-thermal states can be obtained numerically. Just like in the
$D=2$ case, there is a region surrounding the Gibbs state of the system for
which the \cwork~of $N$ copies coincides with the work of
formation of a single copy (see for example Supplementary Figure \ref{fig:WcorrDgeneral}).


\subsection{Density of reversible states}

Without an analytic expression for the $\criticos$
and their respective occupation levels $p^*$, we cannot prove rigorously
that the spacing between the vales of $p^*$ goes to zero when $N \rightarrow
\infty$. Notice that even for the simplest case $D=2$ it was necessary to use an
asymptotic approximation of the binomial coefficients. However, we now present
an argument for assuming that the distance between the terms that appear in
\eqref{eq:suma_convexa_p} (which generalize $p^*_{k}$ and $p_{k+1}^*$ of
Proposition \ref{prop:densidad}) should converge to zero when $N$ goes to
infinity. In the worst case scenario, the distance in $\| \cdot \|_1$ between
close values of $p^*$ is
\begin{align}
  \norm{ p^*_{\mathcal{E} \cup \mathcal{E}_{D-1} } - p^*_{\mathcal{E}}}_1  &=
	\norm{ \frac{1}{\norm{\sum_{E \in \mathcal{E} \cup \mathcal{E}_{D-1} } a_E}_1}
	\sum_{E \in \mathcal{E} \cup \mathcal{E}_{D-1}} a_E - \frac{1}{\norm{\sum_{E
	\in \mathcal{E}} a_E}_1} \sum_{E \in \mathcal{E}} a_E
	 }_1 \nonumber \\
	&= \norm{ \frac{\sum_{E \in \mathcal{E}_{D-1}} a_E}{ \norm{\sum_{E \in
	\mathcal{E} \cup \mathcal{E}_{D-1} } a_E}_1 } - \frac{\norm{\sum_{E \in
	\mathcal{E}_{D-1}} a_E }_1 \sum_{E \in \mathcal{E}} a_E }{\norm{\sum_{E \in
	\mathcal{E} \cup \mathcal{E}_{D-1} } a_E}_1 \norm{\sum_{E \in \mathcal{E}}
	a_E}_1 } }_1 \nonumber \\
	& \leq 2 \frac{\norm{\sum_{E \in \mathcal{E}_{D-1}} a_E }_1}{\norm{\sum_{E \in
	\mathcal{E} \cup \mathcal{E}_{D-1} } a_E}_1},
\end{align}
where we use that all $a_E$ have non-negative entries.
Notice that the amount of terms in the numerator is at most $D-1$, while the
number of terms in the denominator is as large as the size of the set
$\mathcal{E}$ of energies that defines the $\critico$. It is reasonable to
assume that the size of $\mathcal{E}$ increases with the number of copies and then
\begin{equation}
 \lim_{N \to \infty}	\norm{ p^*_{\mathcal{E} \cup \mathcal{E}_{D-1} } - p^*_{\mathcal{E}}}_1 = 0.
\end{equation}

Instead of proving the above statement, that would be the direct analog of the
proof provided for the 2-dimensional case for arbitrary dimension, we will
instead prove that general reversible states are dense, in the sense that
given a probability vector $p \in \mathbb{R}^D$, then there
is a sequence of reversible states $ \{ \rho^{(N)}_{\rm rev} \}_N$ such that
\begin{equation}
  \lim_{N \to \infty} \max_{j = 1, 2, \ldots, N} \left \| \tr_{-j}\left[\rho^{(N)}_{\rm rev}\right] - \rho(p) \right \|_1 = 0,
\end{equation}
with  $\rho(p) = \sum_{d=1}^D p_d \dyad{E_d}$. Unlike before, we will not require that these states also be a solution to the
work of formation optimization problem. Nonetheless, this weaker result will still
be enough to prove the thermodynamic limit.

Given the probability vector $p \in \mathbb R^D$, for each $d = 1, 2, \ldots, D$ we define
\begin{align}
  n_i^d &= \floor{ p_i N}, \qquad\qquad\qquad\qquad\qquad \text{for } i \in \{1,\dots,D\} \setminus \{d\}, \\
  n_d^d &= N - \sum_{i \neq d} n_i^d = N - \sum_{i \neq d} \floor{p_i N},
\end{align}
and the following states $\sigma_d$
\begin{equation}
  \sigma_d = \frac{1}{C_d} \sum_{\text{permutations}} \underbrace{\dyad{E_1} \otimes
  \ldots \otimes \dyad{E_1}}_{n_1^d \text{ times}} \otimes \ldots \otimes
  \underbrace{\dyad{E_D} \otimes \ldots \otimes \dyad{E_D} }_{n_D^d  \text{ times}},
  \label{eq:sigmad}
\end{equation}
where the sum runs over all the possible permutations of the $N$ copies, and $C_d$
is a normalization constant given by
\begin{equation}
	C_d = \frac{N !}{(n_1^d)! (n_2^d)! \ldots (n_D^d)!}.
\end{equation}
Notice that the states $\sigma_d$ are eigenstates of the total Hamiltonian with energy $E^{(d)} = n_1^d E_1 + \ldots + n_D^d E_D$,
and their reduced state is
\begin{equation}
	\tr_{-j}(\sigma_d) = \sum_{i=1}^D \frac{n_i^d}{N} \dyad{E_i} \quad \forall j = 1,
	2, \ldots , N.
\end{equation}
which is clear that converges to $\rho(p)$ as $N \to \infty$. Additionally, it is easy to see that all $\sigma_d$ are reversible,
since only a single energy level is equally populated for each $d$. Thus, we have defined a family of reversible states whose reduced state
is equal to $\rho(p)$ (in the case where $Np_i \in  \mathbb{N}$, otherwise it converges to  $\rho(p)$). In the next section we will show
that while for finite $N$ these reversible states are not optimal (in the sense that they do not minimize the \cwork~for finite $N$) their work of formation
per copy converges to the standard free energy difference in the thermodynamic limit.

Before showing how one can recover standard results in the thermodynamic limit, we will first make a connection 
between these states and the optimal ones through the following lemma.

\begin{LemmaSM}
	There is at least one of the energies $E^{(d)}$ that belongs to 
	the support $\mathcal E_N$ of energies of the $\critico$ $\rho^{(N)}_\text{min}$.
	\label{lema:NE_in_support}
\end{LemmaSM}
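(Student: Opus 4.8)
The plan is to argue by contradiction using linear--programming duality for the optimization problem of Theorem~\ref{teo:generalizacion}. Recall that, after the change of variables $q_E=\lambda_E e^{\beta E}\ZSN$, finding $\optimal$ amounts to solving $\min Q$ subject to $Aq=p$, $0\le q_E\le Q$ for all $E\in\mathcal E_N$, where $A_{d,E}=g_{N-1}(E-E_d)e^{-\beta E}/\ZSN$ and $p=(p_1,\dots,p_D)$ is the local occupation vector. Its optimal value equals $\|q^*\|_\infty>0$ (a nonzero probability vector forces $q^*\neq 0$). I would write the dual program: an optimal dual solution $y^*\in\R^D$ satisfies, by strong duality, $\langle y^*,p\rangle=-\|q^*\|_\infty<0$, and, by complementary slackness against $\optimal$, one gets $(A^{\mathsf T}y^*)_E\ge 0$ for every energy $E$ that is \emph{not} in the support $\mathcal E_N^{\rho}$ of $\optimal$ (with $(A^{\mathsf T}y^*)_E=0$ on the exceptional energies $\varepsilon_j$ and $(A^{\mathsf T}y^*)_E\le 0$ on the level set where $q^*$ attains its maximum). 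This replaces the ``one--sided interval'' reasoning that works for $D=2$ but has no analogue once the single--particle energies are not totally ordered.

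The second ingredient is a \emph{feasible} state supported only on the energies $E^{(d)}$. Set $r=\sum_{i=1}^D\{p_iN\}\in\{0,\dots,D-1\}$ (the fractional defect of the floor configuration), and put $\mu_d=\{p_dN\}/r$ if $r\ge 1$, or $\mu_1=1,\ \mu_{d\ge 2}=0$ if $r=0$. A short computation, using $n_d^d=N-\sum_{i\ne d}\floor{p_iN}$, gives $\mu_d\ge0$, $\sum_d\mu_d=1$ and $\sum_d\mu_d\,n^{(d)}=Np$ (componentwise). Hence $\sigma^\star:=\sum_d\mu_d\sigma_d$ lies in $\mathcal C(\rho,N)$, because $\tr_{-j}(\sigma^\star)=\sum_i\bigl(\tfrac1N\sum_d\mu_d n_i^d\bigr)\dyad{E_i}=\rho$ for every $j$. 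Passing to the energy--uniform representative of $\sigma^\star$ (which lies in $\mathcal C^*(\rho,N)$ with unchanged reduced state, by the reduction of Supplementary Note~1) yields a feasible point $\tilde q$ of the linear program with $A\tilde q=p$, $\tilde q\ge 0$, and $\tilde q$ supported inside $\{E^{(1)},\dots,E^{(D)}\}$.

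Now combine the two. Suppose, for contradiction, that no $E^{(d)}$ belongs to $\mathcal E_N^{\rho}$. Then $(A^{\mathsf T}y^*)_{E^{(d)}}\ge 0$ for every $d$, and therefore
\begin{equation}
  0>\langle y^*,p\rangle=\langle y^*,A\tilde q\rangle=\sum_{d}(A^{\mathsf T}y^*)_{E^{(d)}}\,\tilde q_{E^{(d)}}\ge 0,
\end{equation}
a contradiction. Hence at least one of the energies $E^{(d)}$ lies in the support $\mathcal E_N^{\rho}$ of $\optimal$, which is the assertion of the lemma.

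The linear--programming bookkeeping (setting up the dual, pinning down the sign $\langle y^*,p\rangle=-\|q^*\|_\infty$ via strong duality, and reading off the sign of $A^{\mathsf T}y^*$ off the support via complementary slackness) is routine. The step that requires care is the construction of $\sigma^\star$: one must verify the arithmetic identities $\sum_d\mu_d n^{(d)}=Np$ and $\sum_d\mu_d=1$, and then check that replacing $\sigma^\star$ by its energy--uniform version does not move its reduced state away from $\rho$. The only genuinely delicate point there is the possibility of accidental energy degeneracies—configurations that are not permutations of one another yet share a total energy, so that $g_{N-1}(E^{(d)}-E_{d'})/g_N(E^{(d)})$ need not equal $n^{(d)}_{d'}/N$; this can be handled either by a general--position assumption on $\{E_1,\dots,E_D\}$ together with a perturbation argument, or by carrying out the feasibility check directly at the level of $\mathcal C(\rho,N)$ rather than $\mathcal C^*(\rho,N)$. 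I expect this to be the main (but purely technical) obstacle; the conceptual content is the duality inequality above.
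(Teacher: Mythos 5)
Your core construction is the same as the paper's: the states $\sigma_d$ with occupation numbers $n_i^d=\floor{p_iN}$ for $i\neq d$, mixed with exactly the weights $\mu_d=(Np_d-\floor{Np_d})/\sum_i(Np_i-\floor{Np_i})$, are precisely the state $\sigma=\sum_d\lambda_d\sigma_d$ that the paper uses, and both arguments proceed by contradiction from the assumption that every $E^{(d)}$ lies outside the support of $\optimal$. Where you genuinely differ is the mechanism of the contradiction. The paper stays primal and elementary: it forms $(1-\epsilon)\rho^{(N)}_{\text{min}}+\epsilon\sigma$, which is again in $\mathcal C(\rho,N)$, uses the disjoint-support assumption to split the relevant maximum over the two supports, and concludes that for small $\epsilon$ the work of formation drops by $\kBT\log(1-\epsilon)$, contradicting optimality. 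You instead pass to the dual of the shell-level linear program; your bookkeeping there is consistent (with your sign convention $\langle y^*,p\rangle=-\norm{q^*}_\infty$, the sign pattern of $A^{\mathsf T}y^*$ off the support, on the exceptional energies and on the maximal level set is as you state), so the dual-certificate half of the argument is sound.

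The genuine gap is the one you flag yourself, and as left it is not closed: your contradiction needs a vector $\tilde q\ge 0$ with $A\tilde q=p$ supported on the shells $E^{(1)},\dots,E^{(D)}$, but $A$ encodes the partial-trace condition only for states that are uniform within each energy shell, and flattening $\sigma^\star$ within a shell does change its reduced state whenever two different occupation types share the same total energy. Such accidental degeneracies are not exceptional (they occur for any equally spaced spectrum, for instance), so a general-position assumption does not cover the lemma in the generality in which it is used, and perturbing the energies is delicate because the optimal support can jump under perturbation. The clean repairs are either to rewrite your LP at the level of individual configurations (one variable per eigenstate, with the constraints of \eqref{eq:condition1_pd}), where $\sigma^\star$ is feasible verbatim and, since the hypothesis makes every configuration of total energy $E^{(d)}$ unpopulated in $\optimal$, the same complementary-slackness inequality closes the argument; or to drop duality altogether and use the paper's mixing argument, which works directly in $\mathcal C(\rho,N)$ and never needs the shell-level reduction. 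As written your proof is incomplete at the feasibility step; with the configuration-level reformulation it becomes a correct, though heavier, alternative to the paper's short perturbation.
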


\begin{proof}
Let's assume that $E^{(d)} \not \in \mathcal E_N$ for all $d=1, 2, \ldots, D$. Consider the state given by
	\begin{equation}
	\sigma = \sum_{d=1}^D \lambda_d \, \sigma_d \quad \text{ with } \quad \lambda_d = \frac{ Np_d - \floor{Np_d} }{ \sum_{i=1,\dots,D} Np_i - \floor{Np_i} }.
	\end{equation}
In the case where $Np_d \in \mathbb N$ for all $d=1, 2, \ldots, D$, we define $\sigma = \sigma_1$. Since $\tr_{-j} ( \sigma ) = \rho(p)$, given $\epsilon > 0$ we can define a new state ${\rho}^{(N)}$
locally equivalent to $\rho(p)$:
\begin{equation}
	{\rho}^{(N)} = (1-\epsilon)\rho^{(N)}_\text{min} + \epsilon \sigma .
\end{equation}
Let's see now that the work of formation of ${\rho}^{(N)}$ is lower than the work of formation of $\rho^{(N)}_\text{min}$.
Since by hypothesis $\sigma$ and $\rho^{(N)}_\text{min}$ have disjoint support, the work of formation of $\rho^{(N)}$ is
\begin{equation}
	\Wform({\rho}^{(N)}) = \kBT \log \max \left\{ (1-\epsilon)
  \frac{\lambda_E \ZSN}{e^{-\beta E}} , \frac{\epsilon \lambda_1 \ZSN}{C_1
  e^{-\beta E^{(1)} }}, \ldots, \frac{\epsilon \lambda_D \ZSN}{C_D
	e^{-\beta E^{(D)}}} \right\},
\end{equation}
where $\lambda_E$ are the eigenvalues of $\rho^{(N)}_\text{min}$. Then,
there exists $\epsilon > 0$ small enough such that the maximum is realized in
the first element and then
\begin{align}
	\Wform({\rho}^{(N)}) &= \kBT \log \left[ (1-\epsilon)
  \frac{\lambda_E \ZSN}{e^{-\beta E}} \right] \nonumber \\
  &= \Wform(\rho^{(N)}_\text{min}) + \kBT \log (1-\epsilon) <
  \Wform(\rho^{(N)}_\text{min}),
\end{align}
which is a contradiction that comes from assuming that $E^{(d)} \not
\in \mathcal{E}_{N}$ for all $d=1, 2, \ldots, D$.
\end{proof}

\bigbreak

In the case where $Np_i \in \mathbb{N}$, this lemma states that the mean energy $Np_1 E_1 + \ldots + Np_D E_D$ is part of the support of energies of the optimal state $\rho^{(N)}_\text{min}$. In the other cases, we can find an energy as close as we want to the mean energy as 
$N \to \infty$.


\subsection{Thermodynamic limit}

We will first show that as $N$ increases the \cwork~per copy 
$\Wmin (\rho, N) / N$ converges to the standard free
energy difference for systems of arbitrary dimension $D$, that is,
\begin{equation}
	\frac{\Wmin (\rho, N)}{N} \xrightarrow{N \to \infty} \Delta F(\rho).
	\label{eq:convergeDgenerico}
\end{equation}
Moreover, the convergence rate is
$\mathcal{O}(\log N / N)$. Just like as in the case $D=2$, let us consider the $\criticos$.
In this case their \cwork~is given by
\begin{align}
	\Wmin (\rho^*, N) &= + \kBT \log \gamma_\mathcal{E} \nonumber \\
	&= -\kBT \log \norm{\sum_{E \in \mathcal{E}} a_E}_1 \nonumber \\
	&= -\kBT \log \left[ \sum_{E \in \mathcal{E}} \sum_{i=1}^D \frac{g_{N-1}(E -
  E_i)}{\ZSN} e^{-\beta E}  \right] \nonumber \\
  &= -\kBT \log \left[ \sum_{E \in \mathcal{E}} \frac{1}{\ZSN} e^{-\beta E}
  \sum_{i=1}^D g_{N-1}(E-E_i) \right] \nonumber  \\
  &= -\kBT \log \left[ \sum_{E \in \mathcal{E}} \frac{g_N(E)}{\ZSN} e^{-\beta
	E}  \right],
  \label{eq:trabajoCorrGeneralCota}
\end{align}
where we use that $g_N(E) = \sum_{i=1}^D g_{N-1}(E-E_d)$. Let's start with the case where 
$Np_1 , Np_2 , \ldots , Np_D \in \mathbb{N}$, then by
Lemma \ref{lema:NE_in_support} we have that $N \langle E \rangle \in \mathcal E_N$, where 
$\langle E \rangle = \sum_{i=1}^D p_i E_i$ is the mean energy of a single copy.
In such a case, we can bound the sum in \eqref{eq:trabajoCorrGeneralCota} by a single term in the
following way
\begin{align}
	\Wmin (\rho^*, N) &= -\kBT \log \left[ \sum_{E \in \mathcal{E}}
  \frac{g_N(E)}{\ZSN} e^{-\beta E}  \right] \nonumber \\
  & \leq -\kBT \log \left[ \frac{g_N(N\langle E \rangle)}{\ZSN} e^{-\beta N
  \langle E \rangle} \right] \nonumber \\
  &= N\langle E \rangle + N \kBT \log \ZS -\kBT \log \left[ g_N(N\langle E
	\rangle) \right].
  \label{eq:trabajoCorrGeneralCota2}
\end{align}
Notice that the degeneracy $g_N(N\langle E \rangle)$ is grater than or equal to the
number of configurations of $N$ copies with $Np_1$ copies with energy $E_1$,
$Np_2$ copies with energy $E_2$, etc. That is, it holds
\begin{align}
	g_N \left( N\langle E \rangle \right) & \geq \binom{N}{Np_1} \binom{N-Np_1}{Np_2} \cdots
	\binom{N-Np_1 - \ldots - Np_{D-1}}{Np_D} \nonumber \\
  &= \frac{N!}{(Np_1)!(Np_2)! \ldots (Np_D)!} .
\end{align}
Applying the logarithm and using Stirling's approximation we get
\begin{equation}
  g_N(N \langle E \rangle) \geq -\sum_{i=1}^D N p_i \log p_i + \mathcal{O}(\log N)= N S(\rho^*) +
	\mathcal{O}(\log N),
\end{equation}
where $S(\rho^*)$ is the von Neumann entropy of the local state $\rho^* =
\sum_{i=1}^D p_i \dyad{E_i}$. From \eqref{eq:trabajoCorrGeneralCota2}
we have that
\begin{align}
  \Wmin (\rho^*, N) & \leq N\langle E \rangle + N \kBT \log \ZS -\kBT \log \left[
	g_N(N\langle E \rangle) \right] \nonumber \\
  & \leq N\langle E \rangle + N\kBT\log \ZS - \kBT S(\rho^*) + \mathcal{O}(\log
  N) \nonumber \\
  &= N \left[ \langle E \rangle - \kBT S(\rho^*) \right] + N\kBT \log \ZS +
	\mathcal{O}(\log N) \nonumber \\
  &= N \left[ F(\rho^*) - F(\tau) \right] + \mathcal{O}(\log N)  .
  \label{eq:upperbound}
\end{align}
Finally, like in the case $D=2$, using \eqref{eq:capitulo3cotaW1Winf} we
can obtain the lower bound
\begin{equation}
	N \DF(\rho^*) \leq \Wmin (\rho^* , N) \leq N \DF(\rho^*) + \mathcal{O}(\log N).
\end{equation}

If there exists $d$ such that $p_dN \not \in \mathbb{N}$, we just replace $\langle E \rangle$ with an approximation of the mean energy $\langle \tilde E \rangle$ with $| \langle \tilde E \rangle - \langle E \rangle| < D/N$ and the probabilities $p_d$ with $\tilde p_d$ with $| \tilde p_d - p_d | < 1 / N$. These constructions were provided in the previous section  when we analyzed the density of  reversible states. Thus, when $N \to \infty$ we have $\langle \tilde E \rangle \to \langle E \rangle$ and $\tilde p_d \to p_d$ and we recover the same results.  

The upper bound of \eqref{eq:upperbound}, while it was derived for optimal reversible states, 
is rather general. It is valid in the large $N$ limit for every multipartite state  that: 
$(i)$ is reversible, $(ii)$ satisfies the partial trace condition, and $(iii)$ it has the mean energy level in its support. 
In particular, we have derived a family of states $\sigma_d$ in \eqref{eq:sigmad} that fulfill these three properties for 
every local state $\rho$. Since these states are not optimal, we have that $\Wmin (\rho^*, N) \le W_{\rm form}(\sigma_d)$
and also  $W_{\rm form}(\sigma_d) \leq N \Delta F(\rho) + \mathcal{O}(\log N)$. Thus, for every state $\rho$ we have that:
\begin{equation}
	\DF(\rho) \leq \frac{\Wmin(\rho, N)}{N} \leq \DF(\rho) +
	\mathcal{O}\left( \frac{\log N}{N} \right).
\end{equation}
Therefore, the \cwork~of any state converges to the standard thermodynamic
free energy difference in the thermodynamic limit, $N \to \infty$, at a rate
$\mathcal{O}\left(\log(N)/N\right)$.

On the other hand, while we cannot obtain the analytical expression of the optimal state in order to obtain the extractable work
(since it depends on the specific spectra of each case) 
we have shown that there exist a family of reversible states that in the thermodynamic limit attain the minimum work
cost per particle, recovering standard results from thermodynamics.


\section{Generalization to different local states}

Up to now we have limited ourselves to study the task of creating $N$ copies
of the same local state $\rho$. An obvious generalization of the previous
results is to allow each of the $N$ subsystems to have a different local state with an arbitrary
Hamiltonian.
That is to say, given a set of local states
$\left\{\rho_i\right\}_{i=1,\dots,N}$ (of arbitrary different dimensions
$\left\{d_i\right\}_{i=1,\dots,N}$ and with arbitrary local Hamiltonian
$\left\{H_i\right\}_{i=1,\dots,N}$), we are now interested in creating a global
state $\rho^{(N)}$ such that
\begin{equation} \label{eq:constraints_assym}
    \tr_{-i}\left(\rho^{(N)}\right) = \rho_i, \quad i = 1,\dots,N .
\end{equation}
Furthermore, just like before, it is of particular interest to find the state
$\rho^{(N)}_{\rm{min}}$ that minimizes the work of formation among all the
states that satisfy the constraints of \eqref{eq:constraints_assym}. Now the
optimal state will in general not belong to the subset $\mathcal{C}^*$, given
that our system is no longer  symmetric under permutations.
Nonetheless, the optimal solution still is very similar to the
previous one.

\begin{CorollarySM}[Corollary of Theorem \ref{teo:generalizacion}]
    The state $\rho^{(N)}_{\rm{min}}$ that satisfies the constraints of
    \eqref{eq:constraints_assym} and minimizes the work of formation has the
    form
    \begin{equation} \label{eq:solucion_teorema_general_assym}
        \left[\rho^{(N)}_{\rm{min}}\right]_i =
        \gamma \left[\tau\right]_i \mathbbm{1}_{ \{ i \in U \} } +
        \sum_{j=1}^{M} \gamma s_{j} \left[\tau\right]_i
        \mathbbm{1}_{ \{i = n_j \} },
        \quad i \in \left\{1, 2, \dots, D_N\right\},
 	  \end{equation}
    where
    $\tau = \tau_1 \otimes \tau_2 \otimes \dots \otimes \tau_N$ with
    $\tau_k = e^{-\beta H_k}/Z_k$, $k = 1,\dots,N$ is the $N$-partite thermal state;  
    $D_N = \prod_{i=1}^{N} d_i$;
    $M \leq \sum_{i=1}^N d_i - N + 1$;
    $s_{1}, s_{2}, \ldots, s_{M} \in [0,1]$;
    $\gamma \in \R_{> 0}$;
    $U$ is a set of indexes where
    $\left\{\left[\rho^{(N)}_{\rm{min}}\right]_i/\left[\tau\right]_i\right\}_{i=1,\dots,D_N}$
    realizes its infinity norm; and
    $n_1, n_2, \dots, n_{M} \in \{1, 2, \dots, D_N\}$.
\end{CorollarySM}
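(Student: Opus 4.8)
The plan is to read this off as a corollary of Theorem~\ref{teo:generalizacion} by casting the asymmetric problem in the abstract form $\min\|q\|_\infty$ s.t. $Aq=p$, $q\ge0$ that is solved there. First I would restrict the search to global states $\rho^{(N)}=\sum_{k=1}^{D_N}\lambda_k\dyad{k}$ diagonal in the product energy eigenbasis $\{\ket{k}\}$ of $H_{\rm tot}=\sum_{i}H_i$ (legitimate since each $\rho_i$ commutes with $H_i$ and only diagonal targets are reachable), whose Gibbs state is $\tau=\tau_1\otimes\cdots\otimes\tau_N$. Then $\Wform(\rho^{(N)})=\kBT\,D_\infty(\rho^{(N)}\|\tau)=\kBT\log\max_k\big(\lambda_k/[\tau]_k\big)$, so the change of variables $q_k=\lambda_k/[\tau]_k$ (the analogue of $q_E=\lambda_E e^{\beta E}\ZSN$ from the $D=2$ case) turns the objective into $\kBT\log\|q\|_\infty$, with $q\ge0$.

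Next I would translate the partial-trace conditions \eqref{eq:constraints_assym} into linear constraints on $q$: tracing out all subsystems except the $i$-th gives, for each $i$ and each local level $d\in\{1,\dots,d_i\}$, the equation $\sum_{k}[\tau]_k\,\mathbbm{1}_{\{(\text{$i$-th index of }k)=d\}}\,q_k=[\rho_i]_d$. Summing the $d_i$ equations of block $i$ collapses to the single normalization equation $\sum_k[\tau]_k q_k=1$, the same for every $i$; hence the last equation of each block is implied by the others once one copy of the normalization equation is retained. I would therefore keep that normalization row together with the rows $d=1,\dots,d_i-1$ of every block, obtaining an equivalent system $A'q=p'$ with exactly $\sum_i d_i-N+1$ rows, with $A'\in\R_{\ge0}^{(\sum_i d_i-N+1)\times D_N}$ (every retained entry is either some $[\tau]_k>0$ or $0$), $p'\ge0$, no zero column (the normalization row is strictly positive), and the same, nonempty, feasible set.

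Finally, $\min\|q\|_\infty$ subject to $A'q=p'$, $q\ge0$ is exactly the linear program \eqref{eq:minimizar_general_teorema} of Theorem~\ref{teo:generalizacion} with $\sum_i d_i-N+1$ constraint rows, so it admits an optimal solution of the form $q_k=\gamma\,\mathbbm{1}_{\{k\in U\}}+\sum_{j=1}^{M}\gamma s_j\,\mathbbm{1}_{\{k=n_j\}}$, with $U$ the set of indices on which $q$ attains its infinity norm, $\gamma>0$, $s_j\in[0,1]$, and $M\le(\sum_i d_i-N+1)-1\le\sum_i d_i-N+1$ escaping coordinates. Undoing the substitution, $\lambda_k=[\tau]_k q_k$, yields exactly the asserted form \eqref{eq:solucion_teorema_general_assym}: $[\rho^{(N)}_{\rm min}]_k=\gamma[\tau]_k\mathbbm{1}_{\{k\in U\}}+\sum_{j=1}^{M}\gamma s_j[\tau]_k\mathbbm{1}_{\{k=n_j\}}$. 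The main obstacle, and essentially the only nontrivial bookkeeping, is the reduction in the middle step: one must check that the relations among the $\sum_i d_i$ partial-trace rows are precisely the $N-1$ duplicated normalizations, so that dropping one row per block produces an equivalent system whose coefficient matrix is still entrywise non-negative with no zero column and with a preserved feasible point — exactly the hypotheses under which Theorem~\ref{teo:generalizacion} applies. One should also note that degeneracies of $H_{\rm tot}$ (distinct product states $\ket{k}$ with equal total energy, hence equal $[\tau]_k$) cause no difficulty, since the form permitted by Theorem~\ref{teo:generalizacion} already allows each coordinate to sit either at the maximum or among the few escaping ones regardless of any such degeneracy.
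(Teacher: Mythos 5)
Your proposal is correct and takes essentially the same route as the paper: the change of variables $q_i = [\rho^{(N)}_{\rm min}]_i/[\tau]_i$ turns the work of formation into $\kBT\log\|q\|_\infty$ and the partial-trace conditions into non-negative linear constraints, so Theorem~\ref{teo:generalizacion} applies directly. Your explicit bookkeeping (keeping one normalization row plus $d_i-1$ rows per block) is just the counting the paper leaves implicit, and in fact gives the slightly tighter bound $M\le\sum_{i=1}^N d_i-N$, consistent with the stated $M\le\sum_{i=1}^N d_i-N+1$.
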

\begin{proof}
    This result is immediate from Theorem \ref{teo:generalizacion}. Indeed we
    now seek to minimize the infinity norm of a vector $q$ given by
    \begin{equation}
        q_i = \frac{\left[\rho^{(N)}_{\rm{min}}\right]_i}{\left[\tau\right]_i}
    \end{equation}
    subject to the linear constraints of the partial trace given in \eqref{eq:constraints_assym}.

    Notice now that in the proof of Theorem \ref{teo:generalizacion}, the
    details of the matrix $A$, that codifies the linear constrains on the vector
    whose infinity norm we are minimizing, is irrelevant in the proof of the
    general solution \eqref{eq:solucion_teorema_general}. Therefore,
    the same solution must also hold for the linear constraints
    \eqref{eq:constraints_assym} in the case where each local system is different.
\end{proof}

\bigbreak

Similarly to the previous cases, again we will have particular cases where the
optimal state is reversible. This will happen whenever the local states are such
that in the optimal solution we have that $s_1 = s_2 = \dots = s_M = 0$
or $s_1 = s_2 = \dots = s_M = 1$.


	\subsection{Thermodynamic limit}

In this last section we are going to prove a generalization of our results in the thermodynamic limit. 

	\begin{TheoremSM}[Theorem 3 in the main text]
	Let $(p^{(1)}, E^{(1)}), (p^{(2)}, E^{(2)}), \ldots , (p^{(N)}, E^{(N)}) \in \mathbb R^D \otimes \mathbb R^D$  be an i.i.d sample with arbitrary distribution  $\mathcal D$ such that the maximum level of energy is upper bounded, and $\mathcal W_N$ the \cwork~of a system with block-diagonal reduced states $\rho_i$ defined by the probability vector $p^{(i)}$ and Hamiltonian
	with energies $E^{(i)}$. Then,
	\begin{equation}
    \frac{\mathcal W_N}{N} \xrightarrow{N \rightarrow \infty} \mean{ \Delta F }_{\mathcal D},
	\end{equation}
	where the mean in $\Delta F$ is with respect to $\mathcal D$ and the convergence is almost surely.
	\label{teo:asintotico_ae}
	\end{TheoremSM}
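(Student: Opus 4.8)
The plan is to sandwich $\mathcal W_N/N$ between a deterministic lower bound and an essentially deterministic upper bound, and let the strong law of large numbers (SLLN) do the rest. Write $\rho_i=\rho(p^{(i)},E^{(i)})$ for the $i$-th local state. Since $\mathcal D$ is supported on a compact set (probabilities in the simplex, energies bounded by hypothesis, up to a harmless shift of the ground energies), $\Delta F$ is continuous and bounded there, hence $\mathcal D$-integrable, so by the SLLN $\tfrac1N\sum_{i=1}^N\Delta F(\rho_i)\to\mean{\Delta F}_{\mathcal D}$ almost surely. For the lower bound, exactly as in \eqref{eq:capitulo3cotaW1Winf}: any feasible $\rho^{(N)}$ has $\Wform(\rho^{(N)})=\kBT D_\infty(\rho^{(N)}\|\tau)\ge\kBT D_1(\rho^{(N)}\|\tau)=\Delta F(\rho^{(N)})$, and additivity of the energy together with subadditivity of the von Neumann entropy gives $\Delta F(\rho^{(N)})\ge\sum_i\Delta F(\rho_i)$; hence $\mathcal W_N\ge\sum_i\Delta F(\rho_i)$ deterministically, so $\liminf_N\mathcal W_N/N\ge\mean{\Delta F}_{\mathcal D}$ almost surely. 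It remains to prove the matching $\limsup_N\mathcal W_N/N\le\mean{\Delta F}_{\mathcal D}$ almost surely.

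For the upper bound I would exhibit a feasible correlated state with near-optimal per-copy cost. Fix $\delta>0$ and partition the compact parameter space into $K=K(\delta)$ cells of diameter $\le\delta$; let $I_k$ collect the indices whose sample lies in cell $k$, put $N_k=\#I_k$, and let $\bar\rho_k=\rho(\bar p_k,\bar E_k)$ be a representative of the cell, so that by the SLLN $N_k/N\to\mathcal D(\mathrm{cell}_k)$ almost surely. On the subsystems $I_k$, start from the symmetric, near-optimal reversible state for $N_k$ \emph{copies} of $\bar\rho_k$ built in \eqref{eq:sigmad}: its marginals equal $\bar p_k$ (up to $\mathcal O(1/N_k)$) and, by \eqref{eq:upperbound}, its work of formation is $\le N_k\Delta F(\bar\rho_k)+\mathcal O(\log N_k)$. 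Then correct its marginals to the true $\rho_i$: writing $p^{(i)}=(1-t_i)\bar p_k+t_i\nu^{(i)}$ with $t_i=\mathcal O(\delta)$ and $\nu^{(i)}$ a probability vector, put $\omega_k=\sum_{S\subseteq I_k}\bigl(\prod_{i\in S}t_i\bigr)\bigl(\prod_{i\notin S}(1-t_i)\bigr)\bigl(\bigotimes_{i\in S}\nu^{(i)}\bigr)\otimes\mu_{S^c}$, where $\mu_{S^c}$ is that symmetric state on the copies outside $S$. Marginalising the product Bernoulli weights shows $\omega_k$ has exact marginals $\rho_i$.

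Now estimate $\Wform(\omega_k)=\kBT\log\max_{\vec d}(\lambda_{\vec d}/\tau_{\vec d})$, with $\tau=\bigotimes_i\tau_i$ the product Gibbs state. The ``defect'' factors coming from the $\nu^{(i)}$ and from the mismatch between the cell Hamiltonian and the true $H_i$ are bounded by $C^{|S|}e^{\mathcal O(\delta N_k)}$, while the symmetric part contributes $e^{(N_k-|S|)\Delta F(\bar\rho_k)/\kBT+\mathcal O(\log N_k)}$, so the sum over $S$ collapses to $(e^{\Delta F(\bar\rho_k)/\kBT}+C'\delta)^{N_k}\,e^{\mathcal O(\delta N_k)+\mathcal O(\log N_k)}$; since $\Delta F(\bar\rho_k)\ge0$ this gives $\Wform(\omega_k)\le N_k\Delta F(\bar\rho_k)+\mathcal O(\delta N_k)+\mathcal O(\log N_k)\le\sum_{i\in I_k}\Delta F(\rho_i)+N_k\,\eta(\delta)+\mathcal O(\log N_k)$, where $\eta(\delta)\to0$ as $\delta\to0$ after absorbing the modulus of continuity of $\Delta F$. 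Because $\bigotimes_k\omega_k$ is feasible for the full problem and $\Wform$ is additive under tensor products, $\mathcal W_N\le\sum_k\Wform(\omega_k)\le\sum_i\Delta F(\rho_i)+N\eta(\delta)+\mathcal O(K\log N)$; dividing by $N$, letting $N\to\infty$ with the (fixed) partition, and then $\delta\to0$ yields $\limsup_N\mathcal W_N/N\le\mean{\Delta F}_{\mathcal D}$ almost surely, which together with the lower bound proves the theorem.

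I expect the main obstacle to be the decomposition $p^{(i)}=(1-t_i)\bar p_k+t_i\nu^{(i)}$ for cells near the boundary of the simplex, where $\bar\rho_k$ has small components and $t_i$ need not be $\mathcal O(\delta)$. There I would first truncate the nearly-unpopulated energy levels — replacing $\rho_i$ by a state of lower effective dimension at a cost $\mathcal O(\sqrt\delta)$ per level — induct on $D$ (the case $D=1$ being trivial), and reinstate the missing $\mathcal O(\sqrt\delta)$ of marginal weight by one further mixing correction of the same type. A minor point: the symmetric block $\mu$ has marginals $\bar p_k$ only up to $\mathcal O(1/N_k)$, but this merely forces an extra $\mathcal O(1/N_k)$ mixing correction whose effect on $\Wform$ is $\mathcal O(1)=\mathcal O(\log N_k)$ and hence harmless; and one lets $\delta\to0$ along a fixed sequence so that the countably many SLLN events hold simultaneously on a single almost-sure set.
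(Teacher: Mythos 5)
Your strategy is sound and, in its key step, genuinely different from the paper's. The paper also gets the lower bound from $D_\infty\ge D_1$ plus subadditivity and the upper bound by discretizing $\mathcal D$, but it treats the discrete case with multinomial concentration (Hoeffding) and Borel--Cantelli, and it passes from the discretized distribution back to $\mathcal D$ \emph{dynamically}: Lemma~\ref{lemma:continuity-majorization} (continuity of thermo-majorization) converts, by a local thermal operation at work cost $w$ per subsystem, each representative state into the exact target marginal $\rho_i$, so the optimal state of the discretized problem becomes a feasible state of the true problem at extra cost $\#\mathbb K_n\,w$; the simplex boundary is handled by stratifying the support by its zero pattern and restricting to the compacts $K_n=\{\min_i p_i\ge 1/n\}$, whose complement is charged at most $W^{\rm max}_{\rm form}$ per subsystem. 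You instead build a feasible state \emph{statically}, Bernoulli-mixing product ``defect'' factors onto the symmetric near-optimal blocks of \eqref{eq:sigmad} and bounding the max-relative entropy of the mixture directly; and because your upper bound is phrased through the empirical mean $\tfrac1N\sum_i\DF(\rho_i)$ rather than through cell frequencies, a single application of the SLLN replaces Hoeffding and Borel--Cantelli. What each buys: your route dispenses with the thermo-majorization continuity lemma and is more self-contained; the paper's route dispenses with all exact-marginal bookkeeping, since the local thermal operations output the prescribed marginals by construction.

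Two steps of your argument need repair before it is complete. First, the claim that $\omega_k$ has \emph{exact} marginals is false as written: the marginal of the type state \eqref{eq:sigmad} equals $\bar p_k$ only up to rounding, and your proposed $\mathcal O(1/N_k)$ patch must be checked against the blocks $\mu_{S^c}$ with small $|S^c|$. The cleanest fix is already in the paper: take for the symmetric block the mixture $\sigma=\sum_d\lambda_d\sigma_d$ used in Lemma~\ref{lema:NE_in_support}, whose marginal is \emph{exactly} $\bar p_k$ and whose max-divergence is at most the largest of the $\sigma_d$'s, so the bound $\Wform\le N_k\DF(\bar\rho_k)+\mathcal O(\log N_k)$ of \eqref{eq:upperbound} survives and no correction is needed. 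Second, the boundary obstacle you flag is real, but the truncate-and-induct sketch is vaguer and heavier than necessary: it suffices to choose each cell representative $\bar p_k$ with no component in $(0,c\sqrt\delta)$ (round small components to zero and renormalize); then $t_i=\mathcal O(\sqrt\delta)$ always works, the vector $\nu^{(i)}=\bigl(p^{(i)}-(1-t_i)\bar p_k\bigr)/t_i$ is automatically a probability vector carrying the weight of levels absent from $\bar p_k$, and the bounded-energy hypothesis keeps each single-site ratio $\nu^{(i)}_d e^{\beta E^{(i)}_d}Z_i$ bounded by a constant, so the extra cost stays $\mathcal O(\sqrt\delta\,N_k)$ and vanishes per copy after $\delta\to0$; the comparison $\sum_k N_k\DF(\bar\rho_k)\le\sum_i\DF(\rho_i)+N\eta(\sqrt\delta)$ then follows from the uniform continuity of $\DF$ on the compact parameter set, which the bounded-energy assumption provides. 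With these two repairs your construction yields a complete alternative proof of the theorem.
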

The almost surely convergence in Theorem \ref{teo:asintotico_ae} means that given $\epsilon > 0$, there exists $N_0 < \infty$	such that 
	\begin{equation}
	\left | \frac{\mathcal W_N}{N} -\mean{ \Delta F }_{\mathcal D} \right | < \epsilon \qquad \text{ for all $N > N_0$.}
	\end{equation}
	Notice that $N_0$ is also a random variable that takes a finite value. 

	\begin{proof}
	First, we are going to analyze the case where the support of $\mathcal D$ is discrete and finite. The extension to absolutely continuous, singular continuous and discrete distribution with numerable support follow from the first case. Notice that these cases allows us to say that   $\mathcal D$ is an arbitrary distribution (see Lebesgue's decomposition theorem) and then we are not making any extra assumption on $\mathcal D$ except that $\mean{ \Delta F }_{\mathcal D}$ is well defined. 
	
	Let's consider first the case where $\mathcal D$ consists of a discrete
        distribution with support in a finite set
    \begin{equation}
        \{ (p_1^\star , E_1^\star), (p_2^\star , E_2^\star), \ldots, (p_K^\star, E_K^\star) \} \subset \mathbb R^{D} \times \mathbb{R}^D,
    \end{equation}
    and denote the probability of taking each one of these values as 
	\begin{equation}
	\mathbb P \left( p_i = p_j^\star, E_i = E_j^\star \right) = r_j, \quad \sum_{j=1}^K r_j = 1, \quad r_j > 0.  
	\end{equation}
	If $n_j := \sum_{i=1}^N \mathbb I_{ \{ p_i = p_j^\star, E_i = E_j^\star \} }$ is the number of subsystems with eigenvalues $p_j^\star$ and respective energies $E_j^\star$, then $(n_1, \ldots, n_K)$ is distributed as a multinomial distribution with $N$ numbers of trials and parameter $(r_1, \ldots, r_K)$. Given $\epsilon_2 > 0$, let us consider the event
		\begin{equation}
		\Omega_{\epsilon_2, N} = \bigg \{  (1-\epsilon_2)r_j N \leq n_j \leq (1+\epsilon_2)r_jN \text{ for all } j=1,\ldots, K  \bigg \}.
		\end{equation}
	Using the Hoeffding inequality we can prove that the probability of the complement of $\Omega_{\epsilon_2, N}$ is bounded as
		\begin{align}
		\mathbb P \left(	 \Omega_{\epsilon_2, N}^c \right) 
		& =
		\mathbb P \left(  \bigcup_{j=1}^K \bigg \{ \left| n_j - Nr_j  \right| > {\epsilon_2 r_j N} \bigg \} \right) \nonumber \\
		&\leq
		\sum_{j=1}^K \mathbb P \bigg (   \left| n_j - Nr_j  \right| > {\epsilon_2 r_j N} \bigg ) \nonumber \\
		& \leq
		\sum_{j=1}^K 2 e^{ -2 \epsilon^2_2  r_j^2 N  },
		\end{align}
	that is, except for a probability exponentially small in $N$ the system satisfies $\Omega_{\epsilon_2, N}$. 
	On the other hand, on $\Omega_{\epsilon_2, N}$  for $\rho_j^\star = \sum_{d=1}^D (p_j^\star)_d \dyad{(E^\star_j)_d}$ it holds that 
	\begin{align}
	\mathcal W_N 
	& \leq 
	\sum_{j=1}^K \Wmin ( n_j, \rho_j^\star ) \nonumber \\
	& \leq
	\sum_{j=1}^K (1+\epsilon_2) r_j N \Delta F (\rho_j^\star) + \mathcal{O}(\log n_j) \nonumber \\
	&= (1 + \epsilon_2) N \mean{ \Delta F }_{\mathcal D} + \mathcal O (K \log N),
	\end{align}
	where we use the result from the thermodynamics limit for copies. Notice that the correlations that are needed in order to bound $\mathcal W_N$ just include the identical subsystems. 
		In the same way from the subadditivity of the von Neumann entropy it follows that  $(1-\epsilon_2)N \mean{ \Delta F }_{\mathcal D} < \mathcal W_N$ on $\Omega_{\epsilon_2, N}$. Let's consider $\epsilon_2$ such that $2 \epsilon_2 \mean{\Delta F}_{\mathcal D} < \epsilon $. On the other hand, since the $\mathcal O (K \log N)$ is just an additive constant, there exists $N_0 = N_0 (\epsilon)$ such that it is upper bounded by $\epsilon N / 2$ for all $N > N_0$. Finally, 
	\begin{equation}
	\left | \frac{\mathcal W_N}{N} - \mean{ \Delta F }_{\mathcal D} \right | < \epsilon \qquad \text{ for all $N > N_0$ on $\Omega_{\epsilon_2, N}$}.
	\end{equation}	 
	Since $\sum_{N=1}^\infty \mathbb P ( \Omega_{\epsilon_2, N}^c ) < \infty$, from the Borel-Cantelli lemma it follows that $\mathcal W_N / N$ converges almost surely to the mean value of the free energy difference. 
	
	Now we are going to generalize the proof for a more generic class of distributions $\mathcal D$. Let $\{ p^{(i)}, E^{(i)} \}_{i=1, 2,\ldots, N}$ be distributed according to a general distribution $\mathcal D$ on $\mathbb R^{2D}_{\geq0}$. The distribution in energy is upper bounded by a maximum energy $E_{\rm max}$. From the subadditivity of the von Neumann entropy is clear that
	\begin{equation}
	\mean{ \Delta F }_{\mathcal D} \leq \lim_{N \rightarrow \infty} \frac{\mathcal W_N}{N}. 
	\end{equation}
	Let's prove the other inequality. 
	First, notice that if we consider a disjoint partition $V_1, V_2, \ldots V_R$ of the support of the distribution $\mathcal D$, then it is enough to prove that 
	\begin{equation}
	\frac{ \mathcal W_N^{V_j}}{N} \xrightarrow{N \rightarrow \infty} \mathbb E [ \Delta F | V_j ] \mathbb P (V_j) \quad \text{almost surely for all $j =1, 2, \ldots, R$,}
	\end{equation}
	where $\mathcal W^{V}_N$ represents the correlated work of formation of those subsystems with probabilities and energies on $V$; $\mathbb E [ \Delta F | V ]$ is the expectation value of the difference of free energy conditional to $V$; and $\mathbb P (V)$ is the probability under $\mathcal D$ of $V$. This follows from the fact that 
	\begin{equation}
	\frac{\mathcal W_N}{N}  - \mean{\Delta F}_{\mathcal D}
	\leq 
	\sum_{j=1}^R \frac{\mathcal W_N^{V_j}}{N} - \mathbb E [ \Delta F | V_j ] \mathbb P (V_j)	.
	\end{equation}
Let us consider the function $\psi$ from $ \{ 0, 1 \}^D$ to the sets of $\mathbb R^D$ defined as $\psi (a_1, \ldots ,a_D) = \{ p \in \mathbb R^{D} : p_i = 0 \text{ if } a_i = 0  \text{ and } p_i > 0 \text{ if } a_i = 1 \} $. Notice that the family of sets $\mathcal V$ with elements  $\psi(a_1, \ldots, a_D) \times [0, E_{\rm max}] $ for $a_i = 0, 1$ and $a_1 + \ldots + a_D \geq 1$ defines a disjoint partition of the support of $\mathcal D$. In particular, if $\mathcal D$ is an absolute continuous distribution, the only set of this family with non-zero probability is $V_0 := \{ (p,E) \in \mathbb R^{2D} : \min_i p_i > 0  \}$. For the sake of explanation, let us consider this case and then we generalize our results to the case where more elements with non-zero probability are present on $\mathcal V$. 
	
	Since $\mathbb P (V_0) = 1$, we can omit the conditionals $V_0$ in each one of the previous expressions. Given $n \in \mathbb N$, we define 
	\begin{equation}
	K_n = \left \{  (p,E) \in \mathbb R^{2D} : \min_{i} p_i \geq 1/n  \right \}.
	\end{equation}
	Then we have that $\mathbb P (K_n) \rightarrow 1$ as $n \rightarrow \infty$. Given $\epsilon_3 > 0$, let $n$ be such that $\mathbb P (K_n^c) < \epsilon_3$. Now, we can define $\Delta F_{\rm max} = \max_{\rho} \Delta F (\rho)$ where the maximum is taken respect all the possible states $\rho$ with energy distribution bounded by $E_{\rm max}$. Since 
	\begin{equation}
	\mean{\Delta F}_{\mathcal D} = \mathbb E \left [ \Delta F | K_n \right ] \mathbb P (K_n) + \mathbb E \left [ \Delta F | K_n^c \right ] \mathbb P (K_n^c),
	\end{equation}
	it follows that
	\begin{equation}
	\mean{\Delta F}_{\mathcal D} -  \mathbb E \left [ \Delta F | K_n \right ] \mathbb P (K_n)
	\leq
	\epsilon_3 \Delta F_{\rm max}.
	\end{equation}
	On the other hand, $\mathcal W_N \leq \mathcal W_N^{\mathbb K_n} + \mathcal W_N^{\mathbb K_n^c}$, where $\mathcal W_N^{\mathbb K_n}$ is the correlated work of formation of the systems with probabilities and energies in $\mathbb K_n = \{ (p^{(i)}, E^{(i)}) : (p^{(i)}, E^{(i)}) \in K_n \}$. Notice that in the set $ \Omega_{\epsilon_3} = \{ \# \mathbb K_n^c < 2 \epsilon_3 N \}$ it holds  $\mathcal W_N^{\mathbb K_n^c} \leq 2 \epsilon_3 N W_{\text{form}}^{\rm max}$, where $W_{\text{form}}^{\rm max}$ is the maximum work of formation of all the states $\rho$ with energy distribution bounded by $E_{\rm max}$. Given $\epsilon > 0$, if we choose $\epsilon_3 = \epsilon / 2 ( 2 W_\text{form}^{\rm max} + \Delta F_{\rm max} )$, then
	\begin{equation}
	\frac{\mathcal W_N}{N} - \mean{\Delta F}_{\mathcal D} < \frac{\mathcal W_N^{\mathbb K_n}}{N} - \mathbb E \left [ \Delta F | K_n \right ] \mathbb P (K_n) + \frac{\epsilon}{2} \quad \text{on $\Omega_{\epsilon_3}$.}
	\end{equation}
	On the other hand, 
	\begin{align}
	\frac{\mathcal W_N^{\mathbb K_n}}{N} - \mathbb E \left [ \Delta F | K_n \right ] \mathbb P (K_n) 
	&=
	 \frac{\mathcal W_N^{\mathbb K_n}}{\# \mathbb K_n}\frac{\# \mathbb K_n}{N} - \mathbb E \left [ \Delta F | K_n \right ] \mathbb P (K_n)  \nonumber \\
	&< 
	\left ( \frac{\mathcal W_N^{\mathbb K_n}}{\# \mathbb K_n} - \mathbb E \left [ \Delta F | K_n \right ]  \right )
	+ 
	\left ( \mathbb E \left [ \Delta F | K_n \right ] \left( \frac{\# \mathbb K_n}{N} - \mathbb P (K_n) \right) \right )	.
	\label{eq:desacoplo}
	\end{align}		
	Since $\# \mathbb K_n / N \rightarrow \mathbb P (K_n)$ almost surely, there exists $N_2 < \infty$ such that the second term in \eqref{eq:desacoplo} is smaller than $\epsilon / 4$ for all $N > N_2$. On the contrary, the first term corresponds to the thermodynamic limit for the subset of systems in $\mathbb K_n$. In order to use the result for discrete and finite distributions we need the following lemma. The proof of the lemma is given at the end of the supplementary material.
		 \begin{LemmaSM}[Continuity of Thermo-majorization]
	 Given $w > 0$, and a state $\rho(p,E) := \sum_{i=1}^D p_i \dyad{E_i}$, with $(p,E) \in \mathbb R^{2D}$, there exists $\delta = \delta(w, p, E)$ such that for any other $(\tilde p, \tilde E) \in \mathbb R^{2D}$ 
	 \begin{itemize}
	 \item if $\|  (p,E) - (\tilde p, \tilde E) \| < \delta$  then the transformation 
	 \begin{equation}
	 \rho(p, E) \otimes \dyad{w}   \Ttransform \rho(\tilde p, \tilde E) \otimes \dyad{0}  
	 \end{equation}
	 can be performed with thermal operations;
	 \item if $\|  (p,E) - (\tilde p, \tilde E) \| < \delta$ and $p_i = 0$ implies $\tilde p_i = 0$, then the transformation 
	 \begin{equation}
	 \rho(\tilde p, \tilde E) \otimes \dyad{w} \Ttransform \rho(p,E) \otimes \dyad{0}
	 \end{equation}
	 can be performed with thermal operations. 
	 \end{itemize} 
	 \label{lemma:continuity-majorization}
	  \end{LemmaSM}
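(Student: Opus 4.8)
The plan is to reduce the statement to the claim that, for a fixed reference $(p,E)$, the minimum amount of deterministic work needed to interconvert $\rho(p,E)$ and $\rho(\tilde p,\tilde E)$ tends to $0$ as $(\tilde p,\tilde E)\to(p,E)$; granting this, any prescribed $w>0$ suffices for all $(\tilde p,\tilde E)$ in a small enough ball, which is exactly the assertion. Since all states involved are diagonal in energy, a thermal operation realizing $\rho(p,E)\otimes\dyad{w}\Ttransform\rho(\tilde p,\tilde E)\otimes\dyad{0}$ exists if and only if the (normalized) thermo-majorization curve of the left side weakly dominates that of the right side. The two sides carry different system Hamiltonians $H_E$ and $H_{\tilde E}$; I would handle this by the standard Hamiltonian-switch/embedding construction, whose additional work cost is controlled by $|\langle H_{\tilde E}\rangle_\rho-\langle H_E\rangle_\rho|+\kBT\,|\log(\ZS(E)/\ZS(\tilde E))|$ and hence is $o(1)$ as $\tilde E\to E$, and thereafter argue as if the Hamiltonian were fixed.

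Two ingredients are needed. First, the normalized thermo-majorization curve of $\rho(p,E)$ — a concave, nondecreasing, piecewise-linear function pinned at the origin — depends continuously on $(p,E)$, uniformly as long as the energies stay in a bounded set (automatic here, since $\tilde E$ is within $\delta$ of $E$): its elbow coordinates are continuous functions of the $p_i$ and of the Gibbs weights $e^{-\beta E_i}$, the latter bounded away from $0$, and a coincidence of two adjacent $\beta$-ordered eigenvalues does not alter the curve. Second, replacing $\dyad{0}$ by $\dyad{w}$ on the \emph{source} rescales its curve horizontally by the factor $c:=e^{\beta w}>1$: if $A(\cdot)$ denotes the curve of $\rho(p,E)\otimes\dyad{0}$ (that of $\rho(p,E)$ reparametrized by the work-qubit partition function and padded by the constant $1$), then the curve of $\rho(p,E)\otimes\dyad{w}$ is $x\mapsto A(cx)$. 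As $A$ is nondecreasing, $A(cx)\ge A(x)$ everywhere, and the gap $\phi(x):=A(cx)-A(x)$ is strictly positive precisely on the interval $0<x<X^{*}/c$, where $X^{*}$ is the abscissa at which $A$ first attains the value $1$.

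The core step is then to verify $A(cx)\ge B(x)$ for all $x$, where $B(\cdot)$ is the curve of $\rho(\tilde p,\tilde E)\otimes\dyad{0}$, which by the first ingredient satisfies $\|B-A\|_\infty<\eta$ with $\eta\to0$. For $x\ge X^{*}/c$ one has $A(cx)=1\ge B(x)$, so only $x\in(0,X^{*}/c)$ is at issue. On the initial linear piece of $A$, say $[0,x_{1}]$, $A$ is linear with $A(x)=\alpha x$, $\alpha>0$ the maximal slope of the curve; since $B$ is concave through the origin, $B(x)\le\tilde\alpha x$ with $\tilde\alpha$ its maximal slope, and $\tilde\alpha$ is a continuous function of $(\tilde p,\tilde E)$, so $\tilde\alpha<c\alpha$ once $(\tilde p,\tilde E)$ is close enough, giving $B(x)\le\tilde\alpha x<c\alpha x=A(cx)$ on $(0,x_{1}/c]$. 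On the remaining compact interval $[x_{1}/c,X^{*}/c]$, $\phi$ is continuous and strictly positive, hence $\phi\ge\epsilon_{1}>0$, and shrinking the neighborhood so that $\eta<\epsilon_{1}$ yields $B(x)<A(x)+\epsilon_{1}\le A(x)+\phi(x)=A(cx)$. This settles the first bullet. The second bullet follows by the same argument with the two states exchanged; here the hypothesis $p_{i}=0\Rightarrow\tilde p_{i}=0$ forces $\mathrm{supp}(\rho(\tilde p,\tilde E))\subseteq\mathrm{supp}(\rho(p,E))$, and for $\delta$ below the smallest nonzero $p_{i}$ the reverse inclusion holds automatically, so that the source and target curves reach $1$ at abscissae converging to a common value — exactly what the compactness and endpoint bookkeeping in the reversed argument require.

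The main obstacle I anticipate is precisely that the gap $\phi$ degenerates at both ends of its support — as $x\to0$ and as $x\to X^{*}/c$ — so that there is no uniform positive lower bound on it and the naive ``strictly dominating, hence robust to small perturbations'' reasoning fails; it must be repaired near $x=0$ by the slope comparison (which is where concavity and the continuity of the maximal thermo-majorization slope enter) and near $x=X^{*}/c$ by noting that $A(cx)$ has already saturated at $1$. A secondary, largely bookkeeping, difficulty is to carry out the Hamiltonian-switch reduction cleanly and to confirm that it too costs only $o(1)$ work as $\tilde E\to E$.
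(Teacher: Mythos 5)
Your proposal is correct, but it takes a genuinely different route from the paper's proof. The paper argues explicitly, elbow by elbow: it fixes the $\beta$-order (choosing $\delta_1$ so the perturbed state keeps it), uses $\|E-\tilde E\|_\infty<w$ to interlace the vertices of the two polygonals, $x_i<\tilde x_i<x_{i+1}$ (the second inequality under the mild restriction \eqref{eq:w_acotado} on $w$), and thereby reduces thermo-majorization to $D$ explicit linear inequalities that hold once $\|p-\tilde p\|_1$ and $\|E-\tilde E\|_\infty$ obey the explicit bound \eqref{eq:condition_delta3}, with zero populations treated separately and the support hypothesis used only for the reverse transformation. You instead run a soft compactness argument: sup-norm continuity of the (extended) thermo-majorization curves in $(p,E)$, strict positivity of the gap $\phi(x)=A(cx)-A(x)$ on compact subsets of $(0,X^*)$, a maximal-slope comparison near $x=0$ where $\phi$ degenerates, and saturation at height $1$ near the right end; the support hypothesis enters exactly where the paper needs it, through the right-endpoint bookkeeping of the reversed direction. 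Both arguments are valid: the paper's computation buys explicit quantitative $\delta$'s, while yours is less computational and makes more transparent why a fixed $w>0$ dominates every sufficiently small perturbation and why exact support containment is indispensable for the second bullet. Two small caveats. First, your preliminary claim that the Hamiltonian switch costs at most $|\langle H_{\tilde E}\rangle_\rho-\langle H_E\rangle_\rho|+\kBT\,|\log(\ZS(E)/\ZS(\tilde E))|$ is not correct as a bound on \emph{deterministic} single-shot work (that expression is a free-energy difference); the right control is $\max_i|E_i-\tilde E_i|$, which is still $o(1)$ --- and in fact your core step already compares unnormalized curves drawn with each state's own Gibbs weights, so no separate switch reduction is needed. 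Second, in the reversed argument the uniform lower bound should be taken on the fixed gap $\phi$ of the reference curve, via $B(cx)\ge A(cx)-\eta\ge A(x)+\phi(x)-\eta$, rather than on the $(\tilde p,\tilde E)$-dependent gap of $B$; with $\mathrm{supp}(\tilde p)=\mathrm{supp}(p)$ the relevant compact interval can then be chosen independently of the perturbation, as your endpoint remark indicates, so the sketch closes without difficulty.
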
 
In the previous lemma, the distance measure between probabilities and energies could be chosen as any of the vector norms in $\mathbb R^{2D}$, for example, the infinity norm given by $\| (p,E) - (\tilde p , \tilde E) \|_\infty := \max \{ \| p - \tilde p \|_\infty , \| E - \tilde E \|_\infty \}$. On the other hand, notice that the condition $\tilde p_i = 0$ if $p_i = 0$ is equivalent to the condition of $\text{supp}( \rho(\tilde p, \tilde E))  \subseteq \text{supp} ( \rho (p, E) )$ when $E = \tilde E$.

	With the previous considerations, we need to prove the theorem for a distribution supported on $K_n$ with $n = n(\epsilon)$ chosen as before. Let's consider $w = \epsilon / 16$ and for each pair $(p, E) \in K_n$ we select $\delta_2(p, E)$ as in Lemma \ref{lemma:continuity-majorization}. On the other hand, $\Delta F (\rho)$ is a continuous function of $(p,E)$. Since $K_n$ is a compact set, $\Delta F(\rho)$ is also absolutely continuous and there exists $\delta_0$ such that if $\| (p,E) - (\tilde p, \tilde E) \| < \delta_0$ then $| \Delta F(\rho) - \Delta F (\tilde \rho) | < \epsilon / 16$. Let's define $\delta (p, E) = \min \{ \delta_0, \delta_2 (p,E) \}$. Then we can cover $K_n$ with the family of open balls with center in $(p, E)$ and radius $\delta (p, E)$ as
	\begin{equation}
	K_n \subset \bigcup_{(p,E) \in K_n} B ( (p,E) , \delta (p,E) ).
	\end{equation}	 
	Since $K_n$ is a compact set on $\mathbb R^{2D}$, we can select a finite family $\{ (p^*_j , E^*_j) \}_{j=1, 2, \ldots, K}$ such that
	\begin{equation}
	K_n \subset \bigcup_{j=1}^K B \left( (p^*_j , E^*_j), \delta (p^*_j, E^*_j) \right).
	\end{equation}
	Let us define the family of sets
  \begin{align}
    V_1 &= B \left( (p^*_1 , E^*_1), \delta (p^*_j, E^*_j) \right) \cap K_n, \\
    V_j &= B \left( (p^*_j , E^*_j) , \delta (p^*_j, E^*_j) \right) \cap K_n
    \setminus \cup_{l=1}^{j-1} B \left( (p^*_l , E^*_l) , \delta (p^*_l, E^*_l)
    \right), \quad j = 2,\dots,K,
  \end{align}
  and their respective probabilities $r_j = \mathbb P (V_j)$. Notice that $V_1, V_2, \ldots V_K$ is a partition of $K_n$. If $\tilde{ \mathcal{D}}$ denotes the discrete and finite distribution with support in $\{ (p^*_j , E^*_j) \}_{j=1, 2, \ldots, K}$ and pointwise probabilities given by $r_1, r_2, \ldots, r_K$, then  
	\begin{align}
	\frac{\mathcal W_N^{\mathbb K_n}(\mathcal D)}{\# \mathbb K_n} - \mathbb E_{\mathcal D} \left [ \Delta F | K_n \right ] 
	& < 
	\left ( \frac{\mathcal W_N^{\mathbb K_n}(\tilde{\mathcal{D}})}{\# \mathbb K_n} - \mathbb E_{\tilde{\mathcal{D}}} \left [ \Delta F | K_n \right ]  \right )
	+
	\left ( \frac{\mathcal W_N^{\mathbb K_n}(\mathcal{D})}{\# \mathbb K_n} - \frac{\mathcal W_N^{\mathbb K_n}(\tilde{\mathcal{D}})}{\# \mathbb K_n}   \right )
	+	
	\left ( \mathbb E_{\tilde{\mathcal{D}}} \left [ \Delta F | K_n \right ] - \mathbb E_{D} \left [ \Delta F | K_n \right ]  \right )	\nonumber 	\\
	& <
	\left | \frac{\mathcal W_N^{\mathbb K_n}(\tilde{\mathcal{D}})}{\# \mathbb K_n} - \mathbb E_{\tilde{\mathcal{D}}} \left [ \Delta F | K_n \right ]  \right | 
	+ w + \frac{\epsilon}{16}.
	\end{align}
	In the last equation we have used that ${\mathcal W_N^{\mathbb K_n}(\mathcal{D})} -{\mathcal W_N^{\mathbb K_n}(\tilde{\mathcal{D}})}   < \# \mathbb K_n w$. In order to arrive at this bound we use Lemma~\ref{lemma:continuity-majorization}, which tells us that we can transform a state defined by $(p^*_j,E_j^*)$ into any other state in $B((p^*_j,E^*_j),\delta(p^*_j,E_j^*))$ using an amount of work $w$.  Now, we can consider the following two $\# \mathbb K_n$-partite optimal states,
	the first one $\rho_{\tilde{\mathcal D}}^{(\# \mathbb K_n)}$ is defined by the discrete distribution  $\tilde{ \mathcal D}$ with support in $\{ (p^*_j , E^*_j) \}_{j=1, 2, \ldots, K}$ and  work of formation $\mathcal W_N^{\mathbb K_n}(\tilde{\mathcal{D}})$; the other $\rho_{{\mathcal D}}^{(\# \mathbb K_n)}$ is defined by the distribution $\mathcal D$ with work of formation $\mathcal W_N^{\mathbb K_n}({\mathcal{D}})$. Let us also assume that 
	$\mathcal W_N^{\mathbb K_n}({\mathcal{D}})\ge\mathcal W_N^{\mathbb K_n}(\tilde{\mathcal{D}})$ (if this is not the case we can bound the above term with $w=0$). Then, notice that
	with an amount of work $(\# \mathbb K_n\, w)$ we can transform  
	$\rho_{\tilde{\mathcal D}}^{(\# \mathbb K_n)} \rightarrow \tilde \rho_{{\mathcal D}}^{(\# \mathbb K_n)}$, this is done
	by applying locally to each subsystem the thermal operation that transform with work $w$  the state with $(p^*_j,E_j^*)$ to the corresponding state $(p,E)$ defined by $\mathcal D$. This transformation guarantees that the state  $\tilde \rho_{{\mathcal D}}^{(\# \mathbb K_n)}$ is locally equivalent to the optimal state $\rho_{{\mathcal D}}^{(\# \mathbb K_n)}$, and allows us to bound $ \mathcal W_N^{\mathbb K_n}({\mathcal{D}})$.  Thus,  $\mathcal W_N^{\mathbb K_n}(\tilde{\mathcal{D}}) + (\# \mathbb K_n\, w)\ge \tilde{\mathcal W}_N^{\mathbb K_n}(\tilde{\mathcal{D}}) \ge \mathcal W_N^{\mathbb K_n}({\mathcal{D}}) $, with $\tilde{\mathcal W}_N^{\mathbb K_n}({\mathcal{D}})$ the work
	of formation of $\tilde \rho_{{\mathcal D}}^{(\# \mathbb K_n)}$.

	Finally, all that remains is to bound the first term which corresponds to the thermodynamic limit for the discrete distribution. By the previous results, this can be done by taking $N_1$ such that 
	\begin{equation}
	\left | \frac{\mathcal W_N^{\mathbb K_n}(\tilde{\mathcal{D}})}{\# \mathbb K_n} - \mathbb E_{\tilde{\mathcal{D}}} \left [ \Delta F | K_n \right ]  \right | 
	< \frac{\epsilon}{8} \quad \text{for all $N > N_1$}.
	\end{equation}
	
	Notice that the main ingredient of the proof was that we can restrict the analysis to a compact subset with elements that have the same support (in order to use Lemma \ref{lemma:continuity-majorization}). The same analysis can been performed in each set in $V \in \mathcal V$ with the topology of the open sets restricted to $V$.
	\end{proof}
	 
  \bigbreak
	  
	 \begin{proof}[Proof of Lemma \ref{lemma:continuity-majorization}]
	 Let us consider  $E_1\leq E_2,...,\leq E_D$ and  a $\beta$-order $\pi (1), \pi(2), \ldots, \pi(D)$ of $1, 2, \ldots, D$ such that  
	\begin{equation}	 
	 p_{\pi(1)} e^{\beta E_{\pi(1)}} \geq p_{\pi(2)} e^{\beta E_{\pi(2)}} \geq \ldots \geq p_{\pi(D)}e^{\beta E_{\pi(D)}}.
	 \end{equation}
Given another $(\tilde p, \tilde E)$, we can choose $\delta_1 > 0$ such that if $\| (p, E) - (\tilde p, \tilde E)  \| < \delta_1$ then the same ordering holds, that is,  
	\begin{equation}
\tilde p_{\pi(1)} e^{\beta \tilde E_{\pi(1)}} \geq \tilde p_{\pi(2)} e^{\beta \tilde  E_{\pi(2)}} \geq \ldots \geq \tilde p_{\pi(D)}e^{\beta \tilde E_{\pi(D)}}. 
	\end{equation}
If all the $\geq$ are actually $>$, then this follows directly selecting $\delta_1$ small enough. On the other hand, if there exists $i$ such that $p_{\pi (i)} e^{\beta E_{\pi(i)}} = p_{\pi (i+1)} e^{\beta E_{\pi(i+1)}}$, we can change the ordering $\pi$ for another $\tilde \pi$ with $\tilde \pi (i)  = \pi(i+1)$ and $\tilde \pi(i+1) = \pi(i)$ in order to ensure the ordering. For simplicity, in the following analysis we are going to use $i$ instead of $\pi (i)$.
	 
	 The necessary and sufficient condition to ensure that the transformation $\rho(p, E) \otimes \dyad{w}  \Ttransform \rho(\tilde p, \tilde E) \otimes \dyad{0} $ is possible via thermal operations is \deit{thermo-majorization} \cite{Horodecki2013}, which means that the polygonal defined by the vertices on	$(x_0, y_0) = (0,0), (x_1, y_1), \ldots, (x_D, y_D)$, with $x_i =e^{-\beta w}(e^{-\beta E_1} + \ldots + e^{-\beta E_i })$ and $y_i = p_1 + \ldots + p_i$, lies completely above the polygonal with vertices on $(\tilde x_0, \tilde y_0) = (0,0), (\tilde x_1, \tilde y_1), \ldots, (\tilde x_D, \tilde y_D)$, with $\tilde x_i = e^{-\beta \tilde E_1} + \ldots + e^{-\beta \tilde E_i }$ and $y_i =  \tilde p_1 + \ldots + \tilde p_i$.
	 If we choose $\| E - \tilde E \|_\infty  < w$, then it holds
	 \begin{equation}
	 e^{-\beta w} \left( e^{-\beta E_1} + \ldots + e^{-\beta E_i} \right)  < e^{-\beta \tilde E_1} + \ldots + e^{-\beta \tilde E_i},
	 \end{equation}
	 for all $i = 1, 2, \ldots , D$,
	 which is equivalent to $x_i < \tilde x_i$. Equivalently, if $w$ is such that 
	 \begin{equation}
	 w < \delta_2 :=  \frac{1}{2}  \kBT \max_{i=1, \ldots, D} \log \left(  1 + \frac{e^{-\beta E_{i}}}{ e^{-\beta E_1} + \ldots + e^{-\beta E_D} }  \right),
	 \label{eq:w_acotado}
	 \end{equation}
	then $\tilde x_i < x_{i+1}$. Notice that the condition in \eqref{eq:w_acotado} is not restrictive.   

	With the previous considerations, if we choose $\delta < \min \{ \delta_1, \delta_2, w\}$ then the thermo-majorization condition will be satisfied if each $(\tilde x_i , \tilde y_i)$ lies below the line that connects $(x_i, y_i)$ with $(x_{i+1}, y_{i+1})$, that is
	\begin{equation}
	\tilde y_i \leq y_i + \frac{y_{i+1}-y_i}{x_{i+1}-x_i}(\tilde x_i - x_i).
	\label{eq:linear_majorization}
	\end{equation}
	For those $i$ such this $p_{i+1} \neq 0$, that is equivalent to
	\begin{equation}
	\frac{e^{-\beta E_{i+1}}}{p_{i+1}}\sum_{j=1}^i (\tilde p_j - p_j) \leq 
	e^{\beta w} \left( e^{-\beta \tilde E_1} + \ldots + e^{-\beta \tilde E_i}  \right) - \left( e^{-\beta E_1} + \ldots + e^{-\beta E_i} \right) . \label{eq:condition_thermomajorization}
	\end{equation}
	The condition of \eqref{eq:condition_thermomajorization} will be simultaneously satisfied for all $i$ with $p_i \neq 0$ if
	\begin{equation}
	 e^{\beta \| E - \tilde E \|_\infty} \left ( \frac{ e^{\beta E_1}}{ \min_{i : p_i \neq 0} p_i e^{\beta E_i} } \| p - \tilde p \|_1 + 1\right) \leq e^{\beta w}.
	\label{eq:condition_delta3}
	\end{equation}

	Then, we can choose $\delta = \min \{  \delta_1, \delta_2, \delta _3, w \}$ with $\delta_3$ such that if $\| p - \tilde p \|_1 < \sqrt{D} \| p - \tilde p \|_\infty < \delta_3 $ and $\| E - \tilde E \|_\infty < \delta_3$ then \eqref{eq:condition_delta3} holds. When the probability vectors are equal, it is just the maximum energy difference that bounds the work $w$. On the other hand, if $p_{i+1} = 0$ then \eqref{eq:linear_majorization} is trivially satisfied since $y_i = 1$. 
	
	Let's now study the transformation $\rho(\tilde p, \tilde E) \otimes \dyad{w} \Ttransform \rho(p,E) \otimes \dyad{0}$. Just like before, we can choose $\delta_1 > 0$ such that if $\| (p,E) - (\tilde p, \tilde E) \|_1 < \delta_1$, the thermo-majorization order does not change. The first polygonal now is defined as $(x_0 , y_0) = (0,0), (x_1, y_1) , \ldots, (x_D, y_D)$ with $x_i = e^{-\beta w} ( e^{-\beta \tilde E_1} + \ldots + e^{-\beta \tilde E_i} )$ and $y_i = \tilde p_1 + \ldots + \tilde p_i$; and the other (which has to lie completely below the first one) by $(\tilde x_0, \tilde y_0) = (0,0), (\tilde x_1, \tilde y_1), \ldots, (\tilde x_D, \tilde y_D) $ with $\tilde x_i = e^{-\beta E_1} + \ldots + e^{-\beta E_i}$ and $\tilde y_i = p_1 + \ldots + p_i$. Proceeding like before, there exists $\delta_2 > 0$ such that if $\| E - \tilde E \|_\infty < \delta_2$ then $\tilde x_{i-1} < x_i < \tilde x_{i}$ for all $i = 1, 2, \ldots, D$.
	
	If $\tilde x_{i-1} < x_i < \tilde x_{i}$, then a sufficient condition is that $(x_i, y_i)$ lies above the line that connects $(\tilde x_i, \tilde y_i)$ with $(\tilde x_{i+1}, \tilde y_{i+1})$, that is,
	\begin{equation}
	y_i \geq \tilde y_i - \frac{\tilde y_{i+1} - \tilde y_i}{ \tilde x_{i+1} - \tilde x_i } (\tilde x_i - x_i).
	\end{equation}
	If $p_{i+1} \neq 0$, this is equivalent to 
	\begin{equation}
	\frac{e^{-\beta E_{i+1}}}{p_{i+1}} \sum_{j=1}^i (p_j - \tilde p_j)
	\leq 
	\left(  e^{-\beta E_1} + \ldots + e^{-\beta E_i}  \right) - e^{\beta w} \left( e^{-\beta \tilde E_1 }+ \ldots + e^{-\beta \tilde E_i} \right).
	\label{eq:majoriza111}
	\end{equation}
	A sufficient condition to ensure that \eqref{eq:majoriza111} holds is \eqref{eq:condition_delta3}. On the other hand, if $p_{i+1} = 0$ then we need $y_i = \tilde p_1 + \ldots + \tilde p_i \geq \tilde y_i = 1$, which is satisfied if and only if $\tilde p_j = 0$ for all $j > i$. This is true by hypothesis that the support of $\tilde p$ is contained in the support of $p$ and the fact that $p_{i+1} = 0$ implies that $p_{j} = 0$ for all $j \geq i + 1$, since $p_i$ is $\beta$-ordered.
	 \end{proof}






\end{document}